\theoremstyle{plain}
\newtheorem{proposition}{Proposition}
\newtheorem{lemma}{Lemma}
\newtheorem{theorem}{Theorem}
\newtheorem{assumption}{Assumption}
\newtheorem{definition}{Definition}
\newtheorem{remark}{Remark}
\def\bmg{{\bm g}}
\def\bmh{{\bm h}}
\def\bmzero{{\bm 0}}
\def\bmone{{\bm 1}}
\def\bmA{{\bm A}}
\def\bmB{{\bm B}}
\def\bmC{{\bm C}}
\def\bmD{{\bm D}}
\def\bmE{{\bm E}}
\def\bmF{{\bm F}}
\def\bmK{{\bm K}}
\def\bmQ{{\bm Q}}
\def\bmeta{{\bm \eta}}
\def\bmzeta{{\bm\zeta}}
\def\bmxi{{\bm \xi}}
\def\bmchi{{\bm \chi}}
\def\bmkappa{{\bm \kappa}}
\def\bmmu{{\bm \mu}}
\def\bmnu{{\bm \nu}}
\def\bmpartial{{\bm \partial}}
\def\ho{{\widehat{o}}}
\newcounter{mnotecount}
\newcommand{\mnotex}[1]
{\protect{\stepcounter{mnotecount}}$^{\mbox{\footnotesize $\bullet$\themnotecount}}$ 
\marginpar{
\raggedright\tiny\em
$\!\!\!\!\!\!\,\bullet$\themnotecount: #1} }
\begin{document}


\title{\textbf{A geometric invariant characterising initial data for
    the Kerr-Newman spacetime}}

\author[,1]{Michael J. Cole \footnote{E-mail address:{\tt m.j.cole@qmul.ac.uk}}}
\author[,1]{Juan A. Valiente Kroon \footnote{E-mail address:{\tt j.a.valiente-kroon@qmul.ac.uk}}}
\affil[1]{School of Mathematical Sciences, Queen Mary, University of London,
Mile End Road, London E1 4NS, United Kingdom.}

\maketitle

\begin{abstract}
We describe the construction of a geometric invariant characterising
initial data for the Kerr-Newman spacetime. This geometric invariant
vanishes if and only if the initial data set corresponds to exact
Kerr-Newman initial data, and so characterises this type of data. We
first illustrate the characterisation of the Kerr-Newman spacetime in
terms of Killing spinors. The space spinor formalism is then used to
obtain a set of four independent conditions on an initial Cauchy
hypersurface that guarantee the existence of a Killing spinor on the
development of the initial data. Following a similar analysis in the
vacuum case, we study the properties of solutions to the approximate
Killing spinor equation and use them to construct the geometric
invariant.
\end{abstract}

\section{Introduction}

The Kerr-Newman solution to the Einstein-Maxwell equations, describing
a stationary charged rotating black hole, is one of the most interesting and well
studied exact solutions in General Relativity, and yet there still
remain several unresolved questions. For example, the current family
of uniqueness results regarding the Kerr-Newman solution contain
assumptions on the spacetime that are often considered too
restrictive, such as analyticity ---see e.g. \cite{ChrCos12} for a
review on the subject. Also, although there has been
significant progress on the linear stability of the Kerr-Newman
solution, the question of non-linear stability has been far more
stubborn ---see e.g. \cite{DafRod10} for a discussion on this topic.

Making progress on these unresolved questions concerning electrovacuum
blackholes provides the motivation for finding characterisations of
the Kerr-Newman spacetime. Different methods for characterising the
exact solution can be tailored to emphasise specific properties, and so address each of these unresolved properties
directly. One such characterisation is expressed in terms of Killing spinors,
closely related to Killing-Yano tensors, which represent hidden
symmetries of the spacetime. These symmetries cannot be expressed in
terms of isometries of the spacetime. It has been shown in \cite{ColVal16a}
that an asymptotically flat electrovacuum spacetime admitting a
Killing spinor which satisfies a certain \emph{alignment condition}
with the Maxwell field must be isometric to the Kerr-Newman spacetime
---see Theorem \ref{Theorem:KerrNewmanCharacterisation} below.

Once the motivation for a characterisation of the Kerr-Newman
spacetime in terms of Killing spinors has been established, it is
useful to investigate how the existence of such a spinor can be
expressed in terms of initial data.  The initial value problem in
General Relativity has played a crucial role in the systematic
analysis of the properties of generic solutions to the Einstein field
equations ---see e.g. \cite{Fou52,Ren08,Rin09b}. It also provides the framework necessary for numerical simulations of
spacetimes to be performed ---see e.g. \cite{Alc08,BauSha10}. 

Representing symmetries of a spacetime in terms of conditions on an
initial hypersurface is not a new idea; the \emph{Killing initial data
  (KID) equations} ---see e.g. \cite{BeiChr97b}--- are conditions on a spacelike Cauchy
surface $\mathcal{S}$ which guarantee the existence of a Killing vector in
the resulting development of the initial data. Thus, isometries of the whole
spacetime can be encoded at the level of initial data. The resulting
conditions form a system of overdetermined equations, so do not
necessarily admit a solution for an arbitrary initial data set. In
fact, it has been shown that the KID equations are non-generic, in the
sense that generic solutions of the vacuum constraint Einstein
equations do not possess any global or local spacetime Killing
vectors ---see \cite{BeiChrSch05}. An analogous construction can, in
principle, be performed for Killing spinors. This analysis has been performed for the vacuum case giving explicitly the conditions relating the Killing spinor candidate
and the Weyl curvature of the spacetime ---see \cite{GarVal08a} and also \cite{BaeVal10b}. These conditions are, like
the KID equations, an overdetermined system and so do not necessarily
admit a solution for an arbitrary initial surface. However, in
\cite{BaeVal10a,BaeVal10b} it has been shown that given an
asymptotically Euclidean hypersurface it is always possible to
construct a
\emph{Killing spinor candidate} which, whenever there exists a Killing
spinor in the development, coincides with the restriction
of the Killing spinor to the initial hypersurface. This approximate
Killing spinor is obtained by solving a linear second order elliptic equation
which is the Euler-Lagrange equation of a certain functional over
$\mathcal{S}$.  The approximate Killing spinor can be used to
construct a geometric invariant which in some way parametrises the
deviation of the initial data set from Kerr initial data. Variants of
the basic construction in \cite{BaeVal10b} have been given in
\cite{BaeVal11b,BaeVal12}. 

The purpose of this article is to extend the analysis of
\cite{BaeVal10b} to the
electrovacuum case. In doing so, we rely on the characterisation of
the Kerr-Newman spacetime given in \cite{ColVal16a} which, in turn,
builds upon the characterisation provided in \cite{Mar00} for the
vacuum case and \cite{Won09} for the electrovacuum case. As a result
of our analysis we find that the Killing spinor initial data
equations remain largely unchanged, with extra conditions ensuring
that the electromagnetic content of the spacetime inherits the
symmetry of the Killing spinor. These electrovacuum Killing spinor
equations, together with an appropriate approximate Killing spinor,
are used to
construct an invariant expressed in terms of suitable integrals over
the hypersurface $\mathcal{S}$ whose vanishing characterises in a
necessary and sufficient manner initial
data for the Kerr-Newman spacetime. Our main result, in this respect,
is given in Theorem \ref{MainTheorem:CharacterisationKerrNewmanData}.

\subsection*{Overview of the article}
Section \ref{Section:KillingSpinorsElectrovacuum} provides a brief
overview of the theory of Killing spinors in electrovacuum
spacetimes. Section \ref{Section:EvolutionKillingSpinors} discusses
the evolution equations governing the propagation of the Killing
spinor equation in an electrovacuum spacetime. The main conclusion
from this analysis is that the resulting system is homogeneous in a
certain set of zero-quantities. The trivial data for these equations
gives rise to the conditions implying the existence of a Killing
spinor in the development of some initial hypersurface. In Section
\ref{Section:KillingSpinorDataEquation} a space-spinor formalism is
used to reexpress these conditions in terms of quantities intrinsic to
the initial hypersurface. In addition, in this section the
interdependence between the various conditions is analysed and a
\emph{minimal} set of \emph{Killing spinor data equations} is
obtained. Section \ref{Section:ApproximateKillingEquation}
introduces the notion of approximate Killing spinors for electrovacuum
initial data sets and discusses som basic ellipticity properties of
the associated approximate Killing spinor equation. Section
\ref{Section:AsymptoticallyEuclideanManifolds} discusses the
solvability of the approximate Killing spinor equation in a class of
asymptotically Euclidean manifolds. Finally, Section
\ref{Section:Invariant} brings together the analyses in the various
section to construct a geometric invariant characterising initial data
for the Kerr-Newman spacetime. The main result of this article is
given in Theorem \ref{MainTheorem:CharacterisationKerrNewmanData}.  

\subsection*{Notation and conventions}
Let $(\mathcal{M},\bmg,\bmF)$ denote an electrovacuum spacetime ---i.e. a
solution to the Einstein-Maxwell field equations. The signature of the metric in this article
will be $(+,-,-,-)$, to be consistent with most of the existing
literature using spinors. We use the spinorial conventions of
\cite{PenRin84}. The lowercase Latin letters $a,\, b,\, c, \ldots$ are
used as abstract spacetime tensor indices while the uppercase letters
$A,\,B,\,C,\ldots$ will serve as abstract spinor indices. The Greek
letters $\mu, \, \nu, \, \lambda,\ldots$ will be used as spacetime
coordinate indices while $\alpha,\,\beta,\,\gamma,\ldots$ will serve
as spatial coordinate indices. Finally $\bmA,\,\bmB,\,\bmC,\ldots$
will be used as spinorial frame indices.

The conventions for
the spinorial curvature are set via the expressions
\begin{equation}
\square_{AB}\mu_C = \Psi_{ABCD}\mu^D - 2 \Lambda
\mu_{(A}\epsilon_{B)C}, \qquad \square_{A'B'}\mu_C =
\Phi_{ACA'B'}\mu^A. 
\label{BoxCommutators}
\end{equation}
We systematically use of the following expression for the (once
contracted) second derivative of a spinor:
\begin{equation}
\nabla_{AQ'}\nabla_B{}^{Q'} = \frac{1}{2}\epsilon_{AB}\square + \square_{AB}.
\label{SecondDerivativeDecomposition}
\end{equation}

\section{Killing spinors in electrovacuum spacetimes}
\label{Section:KillingSpinorsElectrovacuum}
In this section we provide a systematic exposition of the properties
of Killing spinors in an electrovacuum spacetime.

\subsection{The Einstein-Maxwell equations}
Using standard spinorial notation, the Einstein-Maxwell equations are
given by
\begin{subequations}
\begin{eqnarray}
&\Phi_{ABA'B'} = 2 \phi_{AB} \bar{\phi}_{A'B'}, \qquad \Lambda =0, & \label{EinsteinMaxwell1}\\
& \nabla^A{}_{A'} \phi_{AB}=0. \label{EinsteinMaxwell2}
\end{eqnarray}
\end{subequations}
In particular, from the Maxwell equation \eqref{EinsteinMaxwell2} it
follows that
\[
  \nabla_{A'B} \phi_{CD}=\nabla_{A'(B}\phi_{CD)}.
\]

The Bianchi identity is given by
\[
\nabla^A{}_{B'} \Psi_{ABCD} = \nabla_{(B}{}^{A'} \Phi_{CD)A'B'},
\]
or, more explicitly
\begin{equation}
\nabla^{A}{}_{A'}\Psi_{ABCD}=2\bar{\phi}_{A'B'}\nabla_{B}{}^{B'}\phi_{CD}.
\label{BianchiIdentity}
\end{equation}

Given an electrovacuum spacetime, applying the derivative
$\nabla^{A'}{}_{C}$ to the Maxwell equation in the form $\nabla^A{}_{A'}
\phi_{AB}=0$ one obtains, after some standard manipulations, the
following wave equation for the Maxwell spinor:
\begin{equation}
\square \phi_{AB}= 2 \Psi_{ABCD}\phi^{CD}.
\label{WaveEquationMaxwellSpinor}
\end{equation}

\subsection{Killing spinors}
A \emph{Killing spinor} $\kappa_{AB}=\kappa_{(AB)}$ in an electrovacuum
spacetime $(\mathcal{M},\bmg,\bmF)$  is a solution to the Killing spinor equation 
\begin{equation}
\nabla_{A'(A} \kappa_{BC)}=0.
\label{KillingSpinorEquation}
\end{equation}

In the sequel, a prominent role will played by the integrability
conditions implied by the Killing spinor equation. More precisely, one
has the following:

\begin{lemma}
\label{Lemma:IntegrabilityConditionsKillingSpinorEquations}
Let $(\mathcal{M},\bmg,\bmF)$ denote an electrovacuum spacetime
endowed with a Killing spinor $\kappa_{AB}$. Then $\kappa_{AB}$
satisfies the integrability conditions:
\begin{subequations}
\begin{eqnarray}
&& \kappa_{(A}{}^Q \Psi_{BCD)}{}_Q=0, \label{IntegrabilityCondition} \\
&& \square \kappa_{AB} +\Psi_{ABCD}\kappa^{CD}=0. \label{WaveEquationKappa}
\end{eqnarray}
\end{subequations}
\end{lemma}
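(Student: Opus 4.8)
The plan is to derive both integrability conditions directly from the Killing spinor equation \eqref{KillingSpinorEquation} by applying a suitable second derivative and exploiting the curvature commutators in \eqref{BoxCommutators}. The starting point is that $\nabla_{A'(A}\kappa_{BC)}=0$ means the totally symmetrised (in the unprimed indices) derivative vanishes, so that $\nabla_{A'A}\kappa_{BC}$ has only the parts corresponding to the other Young symmetry types. Concretely, I would first decompose $\nabla_{A'A}\kappa_{BC}$ into its irreducible pieces by peeling off the traces with $\epsilon$-spinors; the Killing spinor equation kills the totally symmetric part, leaving a term proportional to $\epsilon_{A(B}\xi_{C)A'}$ for some spinor $\xi_{CA'}:=\nabla_{A'}{}^{Q}\kappa_{QC}$ (up to normalisation). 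This is the standard rewriting that expresses the content of \eqref{KillingSpinorEquation} in a form convenient for taking a further derivative.

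Next I would apply a second covariant derivative and antisymmetrise to produce the curvature box operators. The cleanest route to \eqref{IntegrabilityCondition} is to act with $\nabla^{A'}{}_{D}$ (or equivalently to form $\square_{AB}$ acting on $\kappa$) and use the decomposition \eqref{SecondDerivativeDecomposition} together with the commutator $\square_{AB}\mu_C = \Psi_{ABCD}\mu^D - 2\Lambda\mu_{(A}\epsilon_{B)C}$. Since we are in an electrovacuum spacetime we have $\Lambda=0$ by \eqref{EinsteinMaxwell1}, so the $\Lambda$-term drops out and only the Weyl curvature $\Psi$ survives. Applying $\square_{AB}$ to the two indices of $\kappa_{CD}$ and symmetrising appropriately yields an expression of the schematic form $\kappa_{(A}{}^{Q}\Psi_{BCD)Q}$ set equal to zero, which is exactly \eqref{IntegrabilityCondition}; the key is that the symmetrisation over $ABCD$ forces the mixed terms to combine into the single totally symmetric contraction.

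For the wave equation \eqref{WaveEquationKappa}, I would instead take the trace rather than the symmetric part: applying the full box $\square = \nabla_{AQ'}\nabla^{AQ'}$ via \eqref{SecondDerivativeDecomposition} to $\kappa_{AB}$, and again substituting the curvature commutator with $\Lambda=0$, produces the d'Alembertian of $\kappa_{AB}$ plus a term $\Psi_{ABCD}\kappa^{CD}$. The Killing spinor equation ensures that the extra derivative-of-$\xi$ contributions either cancel or are absorbed, leaving precisely $\square\kappa_{AB}+\Psi_{ABCD}\kappa^{CD}=0$. I expect the main obstacle to be bookkeeping: tracking the $\epsilon$-contractions and symmetrisations carefully so that the right-hand sides assemble into the stated totally symmetric and trace forms, and ensuring that all terms involving the auxiliary spinor $\xi_{CA'}$ and its derivatives are correctly handled using the Killing spinor equation itself. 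The electrovacuum hypothesis enters only through $\Lambda=0$; the Maxwell field $\phi_{AB}$ does not appear in these particular integrability conditions because they follow purely from the Weyl part of the curvature acting on $\kappa_{AB}$.
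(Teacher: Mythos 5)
Your proposal is correct and follows essentially the same route as the paper: apply the once-contracted second derivative $\nabla_D{}^{A'}$ to the Killing spinor equation, use the decomposition \eqref{SecondDerivativeDecomposition} with the box commutators \eqref{BoxCommutators} (where $\Lambda=0$ removes the trace-curvature term), and read off the totally symmetric part as \eqref{IntegrabilityCondition} and the non-trivial trace as \eqref{WaveEquationKappa}. The preliminary rewriting of $\nabla_{AA'}\kappa_{BC}$ in terms of $\epsilon_{A(B}\xi_{C)A'}$ is a harmless stylistic variant of the same computation.
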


\begin{proof}
The integrability conditions follow from applying the derivative
$\nabla_D{}^{A'}$ to the Killing spinor equation
\eqref{KillingSpinorEquation}, then using the identity
\eqref{SecondDerivativeDecomposition} together with the box
commutators \eqref{BoxCommutators} and finally decomposing the
resulting expression into its irreducible terms ---the only
non-trivial trace yields equation \eqref{WaveEquationKappa} while
the completely symmetric part gives equation
\eqref{IntegrabilityCondition}. 
\end{proof}

\begin{remark}
{\em Observe that although every solution to the Killing
spinor equation \eqref{KillingSpinorEquation} satisfies the wave
equation \eqref{WaveEquationKappa}, the converse is not true. In what
follows, a  symmetric spinor satisfying equation
\eqref{WaveEquationKappa}, but not necessarily equation
\eqref{KillingSpinorEquation}, will be called a \emph{Killing spinor
  candidate}. This notion will play a central role in our subsequent
analysis ---in particular, we will be concerned with the question of
the further conditions that need to be imposed on a Killing spinor
candidate to be an actual Killing spinor. }
\end{remark}

\medskip
A well-known property of Killing spinors in a vacuum spacetime is that
the spinor
\begin{equation}
\xi_{AA'} \equiv \nabla^Q{}_{A'} \kappa_{QA}
\label{KillingVectorCandidate}
\end{equation}
is the counterpart of a (possibly complex) Killing vector $\xi^a$. A
similar property holds for electrovacuum spacetimes ---however,
a further condition is required on the Killing spinor.

\begin{lemma}
Let $(\mathcal{M},\bmg,\bmF)$ denote an electrovacuum spacetime
endowed with a Killing spinor $\kappa_{AB}$. Then $\xi_{AA'}$ as
defined by equation \eqref{KillingVectorCandidate} is the spinorial
counterpart of a Killing vector $\xi^a$ if and only if
\begin{equation}
\kappa_{(A}{}^Q\phi_{B)Q}=0.
\label{MatterAlignmentCondition}
\end{equation}
\end{lemma}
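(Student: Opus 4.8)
The plan is to use the standard spinorial form of the Killing equation. A (possibly complex) vector $\xi^a$ is a Killing vector if and only if the symmetric part of $\nabla_a\xi_b$ vanishes, and in spinor terms this splits into two irreducible conditions: the vanishing of the symmetric trace-free part $\nabla_{(A(A'}\xi_{B)B')}=0$ and the vanishing of the divergence $\nabla^{AA'}\xi_{AA'}=0$. I would compute each piece starting from the definition \eqref{KillingVectorCandidate}, $\xi_{AA'}=\nabla^Q{}_{A'}\kappa_{QA}$, and show that the divergence vanishes by a purely vacuum-type argument while the obstruction to the symmetric trace-free part is exactly the alignment condition \eqref{MatterAlignmentCondition}.

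First I would record that, because $\kappa_{AB}$ solves the Killing spinor equation \eqref{KillingSpinorEquation}, its first derivative is pure trace, of the form $\nabla_{AA'}\kappa_{BC}=\tfrac13(\epsilon_{AB}\xi_{CA'}+\epsilon_{AC}\xi_{BA'})$, so that differentiating $\xi$ amounts to taking a second derivative of $\kappa$. I would then insert the decomposition \eqref{SecondDerivativeDecomposition} to split every second derivative into the operator $\square$ and the unprimed box $\square_{AB}$, and evaluate the latter with the commutators \eqref{BoxCommutators}. In the divergence and in those combinations built from the unprimed box only the Weyl spinor appears (recall $\Lambda=0$), and using the wave equation \eqref{WaveEquationKappa} to eliminate $\square\kappa_{AB}$ together with the integrability condition \eqref{IntegrabilityCondition}, $\kappa_{(A}{}^Q\Psi_{BCD)Q}=0$, these contributions cancel exactly as in the vacuum case; in particular $\nabla^{AA'}\xi_{AA'}=0$.

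The new, electrovacuum, contributions enter the symmetric trace-free part through the primed box, which acts on the unprimed indices of $\kappa_{QA}$ via the Ricci spinor. Using $\Phi_{ABA'B'}=2\phi_{AB}\bar{\phi}_{A'B'}$ from \eqref{EinsteinMaxwell1} one finds $\square_{A'B'}\kappa_{QA}=2\bar{\phi}_{A'B'}(\phi_{RQ}\kappa^R{}_A+\phi_{RA}\kappa_Q{}^R)$, and symmetrising over the free unprimed indices collapses the bracket into $\kappa_{(A}{}^Q\phi_{B)Q}$. Hence the entire matter contribution to $\nabla_{(A(A'}\xi_{B)B')}$ is proportional to $\bar{\phi}_{A'B'}\,\kappa_{(A}{}^Q\phi_{B)Q}$, and the symmetric part of $\nabla_a\xi_b$ reduces to this single term.

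The conclusion then follows in both directions. If \eqref{MatterAlignmentCondition} holds the matter term vanishes, so $\nabla_{(a}\xi_{b)}=0$ and $\xi^a$ is a Killing vector. Conversely, if $\xi^a$ is Killing then $\bar{\phi}_{A'B'}\,\kappa_{(A}{}^Q\phi_{B)Q}=0$; being a product of a factor carrying only primed indices and one carrying only unprimed indices, it can vanish pointwise only if one factor is zero, so $\kappa_{(A}{}^Q\phi_{B)Q}=0$ wherever $\phi_{AB}\neq0$, while \eqref{MatterAlignmentCondition} is trivial where $\phi_{AB}=0$; in either case the alignment condition holds. The main obstacle I anticipate is the bookkeeping in the second step: one must verify that all Weyl-curvature terms assemble precisely into the integrability condition and cancel, leaving no stray $\phi\bar{\phi}$ contribution in the divergence, so that the obstruction is isolated cleanly in the symmetric trace-free part as the alignment scalar.
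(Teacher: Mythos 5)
Your proposal is correct and follows essentially the same route as the paper: substitute the definition of $\xi_{AA'}$ into $\nabla_{(a}\xi_{b)}$, commute derivatives with the box operators, use the Einstein--Maxwell equations and the Killing spinor integrability conditions, and arrive at the identity $\nabla_{AA'}\xi_{BB'}+\nabla_{BB'}\xi_{AA'}=12\,\bar{\phi}_{A'B'}\kappa_{(A}{}^{Q}\phi_{B)Q}$, from which both directions follow (your pointwise factorisation argument for the converse is the step the paper leaves implicit). Splitting off the divergence separately is only a minor reorganisation, since the paper records $\nabla^{AA'}\xi_{AA'}=0$ as an independent remark in any case.
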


\begin{proof}
The proof follows by direct substitution of the definition
\eqref{KillingVectorCandidate} into the derivative $\nabla_{AA'}
\xi_{BB'}$. Again, using the box commutators
\eqref{BoxCommutators} one obtains, after some manipulations that
\[
\nabla_{AA'}
\xi_{BB'} + \nabla_{BB'}\xi_{AA'} = 12 \bar{\phi}_{A'B'}\,
\kappa_{(A}{}^Q\phi_{B)Q},
\]
from which the result follows.
\end{proof}

\begin{remark}
 {\em Condition \eqref{MatterAlignmentCondition} implies
that the Killing spinor $\kappa_{AB}$ and the Maxwell spinor
$\phi_{AB}$ are proportional to each other ---thus, in what follows we
refer to \eqref{MatterAlignmentCondition} as the \emph{matter
  alignment condition}.}
\end{remark}

\begin{remark}
{\em In the sequel we will refer to a spinor $\xi_{AA'}$
obtained from a symmetric spinor $\kappa_{AB}$ using expression
\eqref{KillingVectorCandidate} (not necessarily a Killing spinor) as the
\emph{Killing vector candidate} associated to $\kappa_{AB}$.}
\end{remark}

\subsection{Zero-quantities}
In order to investigate the the consequences of the Killing spinor
equation \eqref{KillingSpinorEquation} in a more systematic manner it
is convenient to introduce the following \emph{zero-quantities}:
\begin{subequations}
\begin{eqnarray}
&& H_{A'ABC} \equiv 3 \nabla_{A'(A} \kappa_{BC)}, \label{DefinitionH}\\ 
&& S_{AA'BB'} \equiv \nabla_{AA'} \xi_{BB'} + \nabla_{BB'} \xi_{AA'}, \label{DefinitionS}\\
&& \Theta_{AB} \equiv 2 \kappa_{(A}{}^Q \phi_{B)Q}. \label{DefinitionTheta}
\end{eqnarray}
\end{subequations}
Observe that if $H_{A'ABC}=0$ then $\kappa_{AB}$ is a Killing spinor. Similarly, if
$S_{AA'BB'}=0$ then $\xi_{AA'}$ is the spinor counterpart of a
Killing vector, while if  $\Theta_{AB}=0$ then the matter alignment condition
\eqref{MatterAlignmentCondition} holds. 

\medskip
The decomposition in irreducible components of $\nabla_{AA'}
\kappa_{BC}$ can be expressed in terms of $H_{A'ABC}$ and $\xi_{AA'}$
as
\begin{equation}
\label{Decomposition:DerivativeKappa}
\nabla_{AA'} \kappa_{BC} = \frac{1}{3}H_{A'ABC}
-\frac{2}{3}\epsilon_{A(B} \xi_{C)A'}. 
\end{equation}
Similarly, a further computation shows that for $\xi_{AA'}$ as given
by equation \eqref{KillingVectorCandidate} one has the decomposition
\begin{equation}
\nabla_{AA'}\xi_{BB'} = \bar{\eta}_{A'B'} \epsilon_{AB} +
\eta_{AB}\epsilon_{A'B'} + \frac{1}{2}S_{(AB)(A'B')}
\label{DecompositionDerivativeKillingVectorCandidate}
\end{equation}
where
\[
\eta_{AB} \equiv \frac{1}{2} \nabla_{AQ'} \xi_B{}^{Q'}.
\]
If $\xi_{AA'}$ is a real Killing vector then the spinor $\eta_{AB}$
encodes the information of the so-called \emph{Killing form}.

\begin{remark}
{\em From equation
\eqref{DecompositionDerivativeKillingVectorCandidate} it readily
follows by contraction that
\[
\nabla^{AA'}\xi_{AA'} =0
\]
independently of whether the alignment condition
\eqref{MatterAlignmentCondition} holds or not ---i.e. the Killing
vector candidate $\xi_{AA'}$ defined by equation
\eqref{KillingVectorCandidate} is always divergence free. This
observation, in turn, implies that
\[
S_{AA'}{}^{AA'}=0,
\]
so that one has the symmetry
\begin{equation}
\label{Symmetries:S}
S_{AA'BB'}=S_{(AB)(A'B')}.
\end{equation}}
\end{remark}

\begin{remark}
{\em The zero-quantities introduced in
equations \eqref{DefinitionH}-\eqref{DefinitionTheta} are a helpful
bookkeeping device. In particular, a calculation analogous to that of
the proof of Lemma
\ref{Lemma:IntegrabilityConditionsKillingSpinorEquations} shows that
\begin{eqnarray*}
&& \nabla_{(A}{}^{A'} H_{|A'|BCD)} =  -6 \Psi_{Q(ABC} \kappa_{D)}{}^Q, \\ 
&& \nabla^{AA'} H_{A'ABC} = 2 \big( \square \kappa_{BC} + \Psi_{BCPQ}\kappa^{PQ} \big).
\end{eqnarray*}}
\end{remark}

Thus, the integrability conditions of Lemma
\ref{Lemma:IntegrabilityConditionsKillingSpinorEquations} can be written,
alternatively, as
\[
\nabla_{(A}{}^{A'} H_{|A'|BCD)} =0, \qquad \nabla^{AA'} H_{A'ABC} =0.
\]
In particular, observe that if $\kappa_{AB}$ is a Killing spinor
candidate, then the zero quantity $H_{A'ABC}$ is divergence free. 

\subsection{A characterisation of Kerr-Newman in terms of spinors}
\label{Section:CharacterisationKerrNewman}

The following definition will play an important role in our subsquent
analysis:

\begin{definition}
\label{Definition:StationaryEnd}
A stationary asymptotically flat 4-end in an electrovacuum spacetime $(\mathcal{M},\bmg,\bmF)$ is an open
submanifold $\mathcal{M}_\infty\subset \mathcal{M}$ diffeomorphic to
$I\times ( \mathbb{R}^3\setminus \mathcal{B}_R)$ where $I\subset
\mathbb{R}$ is an open interval and $\mathcal{B}_R$ is a closed ball
of radius $R$. In the local coordinates $(t,x^\alpha)$ defined by the
diffeomorphism the components $g_{\mu\nu}$ and $F_{\mu\nu}$ of the
metric $\bmg$ and the Faraday tensor $\bmF$ 
satisfy 
\begin{subequations}
\begin{eqnarray}
&& |g_{\mu\nu} -\eta_{\mu\nu}| +|r \partial_\alpha g_{\mu\nu} | \leq C
r^{-1}, \label{StationaryEndCondition1}\\
&& |F_{\mu\nu}| + |r \partial_\alpha F_{\mu\nu}| \leq C' r^{-2}, \label{StationaryEndCondition2}\\
&& \partial_t g_{\mu\nu}=0, \label{StationaryEndCondition3}\\
&& \partial_t F_{\mu\nu} =0, \label{StationaryEndCondition4}
\end{eqnarray}
\end{subequations}
where $C$ and $C'$ are positive constants, $r\equiv (x^1)^2 +
(x^2)^2 + (x^3)^2,$ and $\eta_{\mu\nu}$ denote the components of the
Minkowski metric in Cartesian coordinates. 
\end{definition}

\begin{remark}
{\em  It follows from condition
\eqref{StationaryEndCondition3} in Definition
\ref{Definition:StationaryEnd} that the stationary asymptotically flat
end $\mathcal{M}_\infty$ is endowed with a Killing vector $\xi^a$
which takes the form $\bmpartial_t$ ---a so-called \emph{time
translation}. From condition \eqref{StationaryEndCondition4} one has
that the electromagnetic fields inherits the symmetry of the spacetime ---that is
$\mathcal{L}_\xi \bmF=0$, with $\mathcal{L}_\xi$ the Lie derivative
along $\xi^a$.}
\end{remark}

Of particular interest will be those stationary asymptotically flat ends \emph{generated by a Killing spinor}:

\begin{definition}
\label{Definition:KillingGeneratedEnd}
A stationary asymptotically flat end $\mathcal{M}_\infty\subset \mathcal{M}$ in an
electrovacuum spacetime $(\mathcal{M},\bmg,\bmF)$ endowed with a
Killing spinor $\kappa_{AB}$ is said to be generated by a
Killing spinor if the spinor $\xi_{AA'}\equiv
\nabla^B{}_{A'} \kappa_{AB}$ is the spinorial
counterpart of the Killing vector $\xi^a$. 
\end{definition}

\begin{remark}
{\em Stationary spacetimes have a natural definition of mass in terms of
the Killing vector $\xi^{a}$ that generates the isometry  ---the
so-called \emph{Komar mass} $m$ defined as
\[
m\equiv
-\frac{1}{8\pi}\lim_{r\rightarrow\infty}\int_{S_{r}}\epsilon_{abcd}\nabla^c\xi^d \mbox{d}S^{ab}
\]
where $S_{r}$ is the sphere of radius r centred at $r=0$ and
$\mbox{d}S^{ab}$ is the binormal vector to $S_r$.  Similarly, one can
define the \emph{total electromagnetic charge} of the spacetime via the integral
\[
q= -\frac{1}{4\pi}\lim_{r\rightarrow\infty}\int_{S_{r}} F_{ab} \mbox{d}S^{ab}.
\]}
\end{remark}

\begin{remark}
{\em In the asymptotic region the components of the metric can be written
in the form
\begin{eqnarray*}
&& g_{00} = 1- \frac{2 m}{r} + O(r^{-2}), \\
&& g_{0\alpha} = \frac{4 \epsilon_{\alpha\beta\gamma} S^\beta
  x^\gamma}{r^3} +O(r^{-3}), \\
&& g_{\alpha\beta} = -\delta_{\alpha\beta} + O(r^{-1}),
\end{eqnarray*}
where $m$ is the Komar mass of $\xi^a$ in the end
$\mathcal{M}_\infty$,  $\epsilon_{\alpha\beta\gamma}$ is the flat
rank 3 totally antisymmetric tensor and $S^\beta$ denotes a
3-dimensional tensor with constant entries. For the components of the Faraday
tensor one has that
\begin{eqnarray*}
&& F_{0\alpha} = \frac{q}{r^2} + O(r^{-3}), \\
&& F_{\alpha\beta} = O(r^{-3})
\end{eqnarray*}
---see e.g. \cite{Sim84b}. Thus, to leading order any stationary
electrovacuum spacetime is asymptotically a Kerr-Newman spacetime. }
\end{remark}

In \cite{ColVal16a} the following result has been proved:

\begin{theorem}
\label{Theorem:KerrNewmanCharacterisation}
Let $(\mathcal{M},\bmg,\bmF)$ be a smooth electrovacuum spacetime
satisfying the matter alignment condition with a stationary
asymptotically flat end $\mathcal{M}_\infty$ generated by a Killing
spinor $\kappa_{AB}$. Let both the Komar mass associated to the
Killing vector $\xi_{AA'}\equiv \nabla^B{}_{A'}\kappa_{AB}$ and the
total electromagnetic charge in $\mathcal{M}_\infty$ be non-zero. Then
$(\mathcal{M},\bmg,\bmF)$ is locally isometric to the Kerr-Newman spacetime.  
\end{theorem}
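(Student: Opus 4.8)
The plan is to show that the algebraic constraints carried by an aligned Killing spinor are strong enough to force the Weyl and Maxwell fields into the Kerr-Newman type-D form, and then to fix the remaining freedom from the asymptotics. First I would exploit the integrability condition \eqref{IntegrabilityCondition} of Lemma \ref{Lemma:IntegrabilityConditionsKillingSpinorEquations}. Writing the symmetric spinor in terms of its principal spinors, $\kappa_{AB} = \kappa\, o_{(A}\iota_{B)}$, the condition $\kappa_{(A}{}^Q\Psi_{BCD)Q}=0$ says that the principal spinors of $\kappa_{AB}$ are repeated principal spinors of $\Psi_{ABCD}$; decomposing this relation into its independent components forces the Weyl spinor to be algebraically special and, away from the zeros of $\kappa_{AB}$, of Petrov type D aligned with $\kappa_{AB}$, so that $\Psi_{ABCD}\propto o_{(A}o_B\iota_C\iota_{D)}$. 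In parallel, the matter alignment condition \eqref{MatterAlignmentCondition}, $\kappa_{(A}{}^Q\phi_{B)Q}=0$, forces $\phi_{AB}\propto\kappa_{AB}$, so the Maxwell field shares these two principal null directions and is non-null.

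Next I would extract the potentials that will identify the solution. Setting $\kappa^2 \equiv \tfrac{1}{2}\kappa_{AB}\kappa^{AB}$ and differentiating using the decomposition \eqref{Decomposition:DerivativeKappa} (with $H_{A'ABC}=0$, since $\kappa_{AB}$ is an honest Killing spinor) expresses $\nabla_{AA'}\kappa^2$ in terms of $\kappa_{AB}$ and $\xi_{AA'}$. Because the end is generated by the Killing spinor and the matter alignment condition holds, $\xi_{AA'}$ is the spinor counterpart of a genuine Killing vector $\xi^a$; combining these facts yields a closed relation between $\kappa$, the norm and twist of $\xi^a$, and the Maxwell potential, which is precisely the Ernst system of a stationary electrovacuum spacetime. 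The Bianchi identity \eqref{BianchiIdentity} and the Maxwell equation \eqref{EinsteinMaxwell2}, together with the alignment established above, then propagate these relations consistently over $\mathcal{M}_\infty$.

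The crux is to convert this data into the known local characterisations of \cite{Mar00} and \cite{Won09}. I would introduce the electrovacuum Mars-Simon tensor $\mathcal{S}_{abcd}$ built from the Weyl tensor and the complex Ernst potential, and show that the existence of the aligned Killing spinor is equivalent to $\mathcal{S}_{abcd}=0$; the algebraic alignment from the first step does most of the work, while the differential content of \eqref{KillingSpinorEquation} supplies the required relation between the potential and the curvature. Invoking the electrovacuum extension of Mars's theorem given in \cite{Won09}, the vanishing of $\mathcal{S}_{abcd}$ in an electrovacuum spacetime with a non-degenerate stationary end yields a local form of the metric and Faraday tensor belonging to the Kerr-Newman family.

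Finally I would pin down the parameters. The asymptotic expansions in the stationary end determine the Komar mass $m$ and charge $q$, and the rotation parameter $a$ is read off from the leading angular-momentum coefficient $S^\beta$; the hypotheses $m\neq 0$ and $q\neq 0$ exclude the degenerate limits (flat space, and the purely gravitational or purely electromagnetic specialisations) and place $(\mathcal{M},\bmg,\bmF)$ in the genuine Kerr-Newman family, giving the asserted local isometry. The main obstacle is the middle step: proving the equivalence between the aligned Killing spinor and the vanishing of $\mathcal{S}_{abcd}$, since this requires careful tracking of the differential identities --- the Killing spinor equation, the Bianchi identity and the commutators \eqref{BoxCommutators} --- rather than purely algebraic manipulations, and it is here that the electrovacuum case genuinely departs from the vacuum analysis of \cite{Mar00}.
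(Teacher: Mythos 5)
Note first that the paper does not prove Theorem~\ref{Theorem:KerrNewmanCharacterisation} itself but imports it from \cite{ColVal16a}; measured against the proof given there, your outline follows essentially the same route --- the integrability condition \eqref{IntegrabilityCondition} forces the Weyl spinor into type D aligned with $\kappa_{AB}$, the matter alignment condition \eqref{MatterAlignmentCondition} makes $\phi_{AB}$ proportional to $\kappa_{AB}$, the norm of the Killing spinor supplies the Ernst-type potentials, and the result is reduced to the electrovacuum characterisation of \cite{Won09}, with the non-vanishing Komar mass and charge excluding the degenerate cases. The only substantive caveats are cosmetic: \cite{ColVal16a} verifies Wong's hypotheses directly rather than through the vanishing of an electrovacuum Mars--Simon tensor, and the type-D conclusion requires $\kappa_{AB}\kappa^{AB}\neq 0$ (ensured here by the asymptotics of the stationary end) rather than merely the non-vanishing of $\kappa_{AB}$, since a null $\kappa_{AB}$ would instead yield type N.
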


The above result can be regarded as a refinement of the
characterisations of the Kerr-Newman spacetime given in \cite{Won09}. 

\section{The Killing spinor evolution system in electrovacuum
  spacetimes}
\label{Section:EvolutionKillingSpinors}

In this section we systematically investigate the interrelations
between the zero-quantities $H_{A'ABC}$, $S_{AA'BB'}$ and
$\Theta_{AB}$. The ultimate objective of this analysis is to obtain a
system of homogeneous wave equations for the zero-quantities.

\subsection{A wave equation for $\xi_{AA'}$}
Given a Killing spinor candidate $\kappa_{AB}$, the wave equation
\eqref{WaveEquationKappa} naturally implies a wave equation for the
Killing vector candidate $\xi_{AA'}$.  We first notice the following
alternative expression for the field $S_{AA'BB'}$:

\begin{lemma}
Let $\kappa_{AB}$ denote a symmetric spinor field in an electrovacuum
$(\mathcal{M},\bmg,\bmF)$. Then, one has that
\begin{equation}
S_{AA'BB'}=6 \bar{\phi}_{A'B'} \Theta_{AB} -  \frac{1}{2}
\nabla_{PA'}H_{B'AB}{}^{P} -  \frac{1}{2} \nabla_{PB'}H_{A'AB}{}^{P}.
\label{Identity:StoThetaDH}
\end{equation}
\end{lemma}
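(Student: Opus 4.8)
The plan is to establish \eqref{Identity:StoThetaDH} as a pointwise differential identity valid for an \emph{arbitrary} symmetric $\kappa_{AB}$, \emph{without} invoking the wave equation \eqref{WaveEquationKappa} or the integrability conditions of Lemma \ref{Lemma:IntegrabilityConditionsKillingSpinorEquations} (which hold only for genuine Killing spinors). First I would substitute $\xi_{BB'}=\nabla^Q{}_{B'}\kappa_{QB}$ into the definition \eqref{DefinitionS}, so that $S_{AA'BB'}$ becomes a sum of second covariant derivatives of $\kappa_{AB}$ in which the outer derivative carries one primed index together with a \emph{free} unprimed index, while the inner derivative carries the other primed index and an unprimed index contracted against $\kappa$. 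In parallel I would expand the two divergence terms on the right-hand side: writing $H_{A'ABC}=3\nabla_{A'(A}\kappa_{BC)}$ as $\nabla_{A'A}\kappa_{BC}+\nabla_{A'B}\kappa_{CA}+\nabla_{A'C}\kappa_{AB}$ and contracting against the outer $\nabla$, each quantity $\nabla_{PA'}H_{B'AB}{}^{P}$ splits into three second-derivative pieces: two (call them I, II) in which the free unprimed index sits on the \emph{inner} derivative, and one (III) in which the two unprimed derivative indices are contracted with each other.

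Next I would bring every term to a single canonical ordering of the two derivatives. In piece III the unprimed derivative indices are mutually contracted, so the primed-index analogue of \eqref{SecondDerivativeDecomposition} splits it into a trace $\tfrac12\epsilon_{A'B'}\square\kappa_{AB}$ and a $\square_{A'B'}\kappa_{AB}$ part. For pieces I and II I would commute the two covariant derivatives so as to move the free unprimed index onto the outer derivative, matching the index placement in $S$; the requisite commutator decomposes into the $\square_{AB}$ and $\square_{A'B'}$ operators of \eqref{BoxCommutators}, the former producing a Weyl contribution $\Psi\kappa$ and the latter a matter contribution $\Phi\kappa$, with the $\Lambda$-terms absent since $\Lambda=0$ by \eqref{EinsteinMaxwell1}.

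The observation that makes the identity close is the behaviour under $A'\leftrightarrow B'$. The two divergences enter with equal coefficients $-\tfrac12$, while the unwanted pieces they generate carry an explicit $\epsilon_{A'B'}$ (antisymmetric in $A'B'$), namely the $\square\kappa$ term from III and the Weyl $\square_{AB}\kappa\sim\Psi\kappa$ term from the commutators in I, II. Hence the d'Alembertian terms from III and its conjugate cancel, and the $\Psi\kappa$ terms from I, II cancel against their conjugates --- this is precisely why neither the wave equation nor the integrability condition \eqref{IntegrabilityCondition} is needed. By contrast the $\Phi$-pieces carry $\square_{A'B'}\kappa$, which is symmetric in $A'B'$, so they survive and add; inserting the Einstein-Maxwell relation $\Phi_{ABA'B'}=2\phi_{AB}\bar\phi_{A'B'}$ and using the symmetry $S_{AA'BB'}=S_{(AB)(A'B')}$ of \eqref{Symmetries:S}, the resulting $\phi\kappa$ contractions assemble into $\Theta_{AB}=2\kappa_{(A}{}^Q\phi_{B)Q}$, giving exactly $6\bar\phi_{A'B'}\Theta_{AB}$.

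The main obstacle I anticipate is purely one of bookkeeping: tracking the $\epsilon$-contractions and symmetrisation brackets so that (i) the numerous pure second-derivative terms of $S$ cancel exactly against the commutator-free remainders of I, II; (ii) the $\square\kappa$ and $\Psi\kappa$ terms cancel rather than merely simplify; and (iii) the matter coefficient comes out as precisely $6$. I would organise the computation by first verifying the vanishing of all Weyl and d'Alembertian terms using only the antisymmetry of $\epsilon_{A'B'}$, and only afterwards collecting the $\Phi$-terms, thereby separating the structural part of the argument from the coefficient chase.
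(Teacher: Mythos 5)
Your proposal is correct and follows essentially the same route as the paper: substitute $\xi_{AA'}=\nabla^Q{}_{A'}\kappa_{QA}$ into the definition of $S_{AA'BB'}$, commute covariant derivatives via the box commutators \eqref{BoxCommutators}, and use the Einstein--Maxwell relation together with the symmetry \eqref{Symmetries:S} to assemble the $6\bar\phi_{A'B'}\Theta_{AB}$ term. Your explicit observation that the $\square\kappa$ and $\Psi\kappa$ contributions cancel purely by the antisymmetry of $\epsilon_{A'B'}$ --- so that neither the wave equation nor the integrability condition is needed --- is a useful clarification of why the identity holds for an arbitrary symmetric $\kappa_{AB}$, consistent with the paper's remark to that effect.
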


\begin{proof}
To obtain the identity one starts by substituting the expression $\xi_{AA'} =
\nabla^Q{}_{A'}\kappa_{QA}$ into the definition of $S_{AA'BB'}$,
equation \eqref{DefinitionS}. One then commutes covariant derivatives using
the commutators \eqref{BoxCommutators} and makes use of the
decompositions of $\nabla_{AA'}\kappa_{BC}$, $\nabla_{AA'}\xi_{BB'}$
and $S_{AA'BB'}$ given by equations \eqref{Decomposition:DerivativeKappa}, \eqref{DecompositionDerivativeKillingVectorCandidate} and \eqref{Symmetries:S},
respectively, to simplify.
\end{proof}

\begin{remark}
{\em Observe that in the above result the spinor
$\kappa_{AB}$ is not assumed to be a Killing spinor candidate. }
\end{remark}

The latter is used, in turn, to obtain the main result of this section:

\begin{lemma}
\label{Lemma:WaveEquationKillingVectorCandidate}
Let $\kappa_{AB}$ denote a Killing spinor candidate in an
electrovacuum spacetime $(\mathcal{M},\bmg,\bmF)$. Then the Killing
vector candidate $\xi_{AA'}\equiv \nabla^Q{}_{A'} \kappa_{AQ}$
satisfies the wave equation
\begin{equation}
\square \xi_{AA'}=-2 \xi^{PP'} \Phi_{APA'P'} + \Phi^{PQ}{}_{A'}{}^{P'} H_{P'APQ} -  \
\Psi_{APQD} H_{A'}{}^{PQD} + 6 \bar{\phi}_{A'}{}^{P'}
\nabla_{PP'}\Theta_{A}{}^{P}.
\label{WaveEquationKillingVectorCandidate}
\end{equation}
\end{lemma}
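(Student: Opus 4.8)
The plan is to recast the wave operator acting on $\xi_{AA'}$ as a divergence of the zero-quantity $S_{AA'BB'}$ and then to feed in the already-established identity \eqref{Identity:StoThetaDH}. Since $S_{AA'BB'} = \nabla_{AA'}\xi_{BB'} + \nabla_{BB'}\xi_{AA'}$ by \eqref{DefinitionS}, contracting with $\nabla^{BB'}$ and using that $\xi_{AA'}$ is divergence free (so that $\nabla_{AA'}\nabla^{BB'}\xi_{BB'} = 0$) gives
\[
\square\xi_{AA'} = \nabla^{BB'}S_{AA'BB'} - [\nabla^{BB'},\nabla_{AA'}]\xi_{BB'}.
\]
Thus the computation splits cleanly into a commutator term, which carries the $\xi$-curvature contribution, and the divergence of $S$, which carries the $\Theta$- and $H$-contributions.

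For the commutator I would note that, being a full trace over the pair $BB'$ against the indices of $\xi_{BB'}$, it only sees the Ricci (trace) part of the curvature and is insensitive to the Weyl spinor $\Psi_{ABCD}$. Expanding $[\nabla_{BB'},\nabla_{AA'}]$ via the box operators of \eqref{BoxCommutators}, setting $\Lambda = 0$ and using $\Phi_{ABA'B'} = 2\phi_{AB}\bar\phi_{A'B'}$ from \eqref{EinsteinMaxwell1}, the surviving contribution is purely of the form $\Phi_{APA'P'}\xi^{PP'}$; tracking the numerical factor produces the term $-2\xi^{PP'}\Phi_{APA'P'}$.

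For the divergence of $S$ I substitute \eqref{Identity:StoThetaDH}. The term $6\nabla^{BB'}(\bar\phi_{A'B'}\Theta_{AB})$ splits by the Leibniz rule; the piece in which the derivative falls on $\bar\phi_{A'B'}$ vanishes by the Maxwell equation \eqref{EinsteinMaxwell2} in its complex-conjugate form, leaving exactly $6\bar\phi_{A'}{}^{P'}\nabla_{PP'}\Theta_A{}^P$. The two remaining pieces are second derivatives of $H$. In one I commute the outer derivative inwards to expose the divergence $\nabla^{BB'}H_{B'AB}{}^P$, which vanishes precisely because $\kappa_{AB}$ is a Killing spinor candidate---recall that $\nabla^{AA'}H_{A'ABC} = 2(\square\kappa_{BC} + \Psi_{BCPQ}\kappa^{PQ})$ is zero by \eqref{WaveEquationKappa}---while in the other I use \eqref{SecondDerivativeDecomposition} to split the contracted second derivative into a d'Alembertian, whose action on $H$ drops out by the total symmetry of the unprimed indices, and a box operator $\square_{BP}$. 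In both cases the residue is an algebraic curvature contraction with $H$, which via \eqref{BoxCommutators} and \eqref{EinsteinMaxwell1} reorganises into $\Phi^{PQ}{}_{A'}{}^{P'}H_{P'APQ} - \Psi_{APQD}H_{A'}{}^{PQD}$.

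I expect this last reorganisation of the $H$-derivative terms to be the main obstacle. The spinor $H_{A'ABC}$ carries one primed and three symmetric unprimed indices, so the box operators act on it through $\Psi$ (on the unprimed indices) as well as through $\Phi$ and $\bar\Psi$; one must verify that the $\bar\Psi$-contributions cancel between the two pieces and that the coefficients of the $\Phi$- and $\Psi$-contractions combine with the correct signs. Collecting the commutator contribution together with the three simplified pieces of $\nabla^{BB'}S_{AA'BB'}$ then yields \eqref{WaveEquationKillingVectorCandidate}.
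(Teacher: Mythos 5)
Your proposal is correct and follows essentially the same route as the paper: both compute the divergence of $S_{AA'BB'}$, substitute the identity \eqref{Identity:StoThetaDH}, discard the term where the derivative hits $\bar{\phi}_{A'B'}$ via the Maxwell equation, commute derivatives in the $H$-terms, and invoke the divergence-freeness of $\xi_{AA'}$ and of $H_{A'ABC}$ for a Killing spinor candidate. The only (cosmetic) difference is that you isolate the commutator term $[\nabla^{BB'},\nabla_{AA'}]\xi_{BB'}$ at the outset, whereas the paper carries the term $-\nabla_{AA'}\nabla_{PP'}\xi^{PP'}$ explicitly until the final step.
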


\begin{proof}
One makes use of the definition of $S_{AA'BB'}$ and the identity
\eqref{Identity:StoThetaDH} to write
\begin{eqnarray*}
&& \hspace{-1.5cm}\nabla^{AA'}\nabla_{AA'}\xi_{BB'} + \nabla^{AA'}\nabla_{BB'}\xi_{AA'} \
= 6 \Theta_{AB} \nabla^{AA'}\bar{\phi}_{A'B'} + 6 \bar{\phi}_{A'B'}
   \nabla^{AA'}\Theta_{AB} \\
&& \hspace{5cm}-  \frac{1}{2} \nabla^{AA'}\nabla_{CA'}H_{B'AB}{}^{C} -  \frac{1}{2} \nabla^{AA'}\nabla_{CB'}H_{A'AB}{}^{C}.
\end{eqnarray*}
The above expression can be simplified using the Maxwell
equations. Moreover, commuting covariant derivatives in the terms 
$\nabla^{AA'}\nabla_{CA'}H_{B'AB}{}^{C}$ and
$\nabla^{AA'}\nabla_{CB'}H_{A'AB}{}^{C}$ one arrives to
\begin{eqnarray*}
&& \square \xi_{AA'} = -2 \xi^{PP'} \Phi_{APA'P'} + \Phi^{PQ}{}_{A'}{}^{P'} H_{P'APQ} -  \
\Psi_{APQD} H_{A'}{}^{PQD} + 6 \bar{\phi}_{A'}{}^{P'}
   \nabla_{PP'}\Theta_{A}{}^{P} \\
&& \hspace{3cm}-  \nabla_{AA'}\nabla_{PP'}\xi^{PP'} -  \frac{1}{2} \nabla_{QA'}\nabla_{PP'}H^{P'}{}_{A}{}^{PQ}.
\end{eqnarray*}
Finally, using that $\xi_{AA'}$ is a Killing vector candidate (see
Remark 4) and that $\nabla^{AA'}H_{A'ABC}=0$ (see Remark 5) the result
follows. 
\end{proof}

\begin{remark}
{\em Important for the subsequent discussion is that the wave equation
\eqref{WaveEquationKillingVectorCandidate} takes, in tensorial terms,
the form}
\begin{equation}
\square \xi_a = -2 \Phi_{ab} \xi^b + J_a
\label{WaveEquationKillingVectorCandidate}
\end{equation}
where $J_a$ is defined in spinorial terms by
\[
J_{AA'}\equiv  \Phi^{PQ}{}_{A'}{}^{P'} H_{P'APQ} -  \
\Psi_{APQD} H_{A'}{}^{PQD} + 6 \bar{\phi}_{A'}{}^{P'}
\nabla_{PP'}\Theta_{A}{}^{P}. 
\]
In terms of the zero-quantity $\zeta_{AA'}$ to be introduced in
equation \eqref{Definition:zeta} one has 
\[
J_{AA'}\equiv  \Phi^{PQ}{}_{A'}{}^{P'} H_{P'APQ} -  \
\Psi_{APQD} H_{A'}{}^{PQD} - 6 \bar{\phi}_{A'}{}^{P'}
\zeta_{AP'}. 
\] 
\end{remark}
Thus, $J_{AA'}$ is an homogeneous expression of zero-quantities and
does not involve their derivatives.

\subsection{A wave equation for $H_{A'ABC}$}

\begin{lemma}
Let $\kappa_{AB}$ denote a Killing spinor candidate in an
electrovacuum spacetime $(\mathcal{M},\bmg,\bmF)$. Then the
zero-quantity $H_{A'ABC}$ satisfies the wave equation
\begin{eqnarray}
&& \square H_{A'BCD}=2 \Psi_{CDAF} H_{A'B}{}^{AF} + 2 \Psi_{BDAF}
   H_{A'C}{}^{AF} + 4 \phi_{D}{}^{A} \bar{\phi}_{A'}{}^{B'} H_{B'BCA}
   \nonumber \\
&& \hspace{4cm} -
   12 \bar{\phi}_{A'}{}^{B'} \nabla_{DB'}\Theta_{BC}- 2\nabla_D{}^{B'} S_{(BC)(A'B')}. \label{WaveEquation:H}
 \end{eqnarray}
\end{lemma}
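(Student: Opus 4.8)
The plan is to derive the wave equation for $H_{A'BCD}$ by applying the d'Alembertian $\square = \nabla^{AA'}\nabla_{AA'}$ directly to the definition $H_{A'ABC} = 3\nabla_{A'(A}\kappa_{BC)}$, exploiting the fact that $\kappa_{AB}$ is a Killing spinor candidate, i.e. it satisfies the wave equation \eqref{WaveEquationKappa} in the form $\square\kappa_{BC} + \Psi_{BCPQ}\kappa^{PQ}=0$. The strategy mirrors the proof of Lemma \ref{Lemma:WaveEquationKillingVectorCandidate}: commute derivatives using the box commutators \eqref{BoxCommutators}, substitute the known decompositions \eqref{Decomposition:DerivativeKappa}, \eqref{DecompositionDerivativeKillingVectorCandidate} and the symmetry \eqref{Symmetries:S}, and organise everything into a homogeneous expression in the zero-quantities $H$, $S$ and $\Theta$.

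First I would write $\square H_{A'BCD} = 3\,\square\big(\nabla_{A'(B}\kappa_{CD)}\big)$ and push the outer $\square$ through the derivative $\nabla_{A'B}$ using the operator identity \eqref{SecondDerivativeDecomposition}, which splits the once-contracted second derivative into the trace part $\tfrac12\epsilon\,\square$ and the curvature part $\square_{AB}$. This produces, schematically, a term in which $\square$ acts directly on $\kappa_{CD}$ — to be replaced via \eqref{WaveEquationKappa} by $-\Psi_{CDPQ}\kappa^{PQ}$ and then re-expressed through $H$ and $\xi$ using \eqref{Decomposition:DerivativeKappa} — together with commutator terms built from the box operators acting on the various spinor indices of $\nabla_{A'B}\kappa_{CD}$. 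The curvature appearing in those commutators is precisely $\Psi_{ABCD}$ and $\Phi_{ABA'B'}$, and by \eqref{EinsteinMaxwell1} the latter factorises as $2\phi_{AB}\bar{\phi}_{A'B'}$; this is what will generate the $\phi_D{}^A\bar{\phi}_{A'}{}^{B'}H_{B'BCA}$ contribution on the right-hand side.

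The key steps, in order, would then be: (i) commute the two covariant derivatives so that the Killing-candidate wave equation can be applied to the $\square\kappa$ piece; (ii) use the Bianchi identity \eqref{BianchiIdentity} and the Maxwell equation \eqref{EinsteinMaxwell2} to handle terms where a derivative falls on the curvature spinors $\Psi_{ABCD}$ and $\phi_{AB}$, converting them into the $\nabla_{DB'}\Theta_{BC}$ and the derivative-of-$S$ terms; and (iii) repeatedly substitute \eqref{Decomposition:DerivativeKappa} and \eqref{DecompositionDerivativeKillingVectorCandidate} to trade bare derivatives of $\kappa$ and $\xi$ for $H$, $\Theta$ and $S$, finally decomposing into irreducible parts to match the precise index structure on the right-hand side. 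Throughout, one checks that every term that is not manifestly a zero-quantity (or a curvature contracted against one) cancels, which is the mechanism guaranteeing that the system closes homogeneously.

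The hard part will be the bookkeeping in step (ii)–(iii): correctly tracking the numerous index contractions and symmetrisations so that the commutator-generated curvature terms reassemble into exactly $2\Psi_{CDAF}H_{A'B}{}^{AF} + 2\Psi_{BDAF}H_{A'C}{}^{AF}$ rather than some other symmetrisation, and ensuring the spinor $\xi_{AA'}$ is eliminated in favour of genuine zero-quantities using its Killing-candidate properties (Remarks 4 and 5) and the identity \eqref{Identity:StoThetaDH}. The appearance of $-2\nabla_D{}^{B'}S_{(BC)(A'B')}$ shows that a derivative of $S$ survives — so one must resist prematurely setting $S=0$ and instead retain it, which is consistent with the goal of a closed homogeneous system in all three zero-quantities. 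I expect the factor-chasing on the $\bar{\phi}\nabla\Theta$ term and the sign on the $S$-derivative term to be the most error-prone, requiring careful use of the matter content \eqref{EinsteinMaxwell1} and the symmetries of $\phi_{AB}$.
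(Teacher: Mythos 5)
Your plan is workable and would, with enough bookkeeping, reproduce \eqref{WaveEquation:H}, but it takes a genuinely different and substantially longer route than the paper. You propose to hit the definition $H_{A'ABC}=3\nabla_{A'(A}\kappa_{BC)}$ with the full d'Alembertian and commute $\square$ past $\nabla_{A'B}$; note that this requires the full (uncontracted) spinor curvature commutator, not just \eqref{SecondDerivativeDecomposition}, and it unavoidably generates derivative-of-curvature terms that must then be removed with the Bianchi identity \eqref{BianchiIdentity} and the Maxwell equations, plus a separate effort to reassemble the surviving $\xi$- and $\nabla\nabla\kappa$-terms into $\Theta$, $S$ and their derivatives. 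The paper instead starts one order lower: it rewrites the already-established identity \eqref{Identity:StoThetaDH} as $\nabla_{AB'}H_{A'BC}{}^{A}=6\bar{\phi}_{A'B'}\Theta_{BC}-S_{(BC)(A'B')}$, applies a single derivative $\nabla_{D}{}^{B'}$, and then uses \eqref{SecondDerivativeDecomposition} — now legitimately, since the two derivatives on the left have their primed indices contracted — so that $\frac{1}{2}\square H_{A'BCD}$ drops out of the trace part while the box commutators \eqref{BoxCommutators} and the Einstein-Maxwell relation $\Phi_{ABA'B'}=2\phi_{AB}\bar{\phi}_{A'B'}$ supply exactly the $\Psi H$ and $\phi\bar{\phi}H$ terms; the $\bar{\phi}\nabla\Theta$ and $\nabla S$ terms come for free from the right-hand side, with the $\Theta\nabla\bar{\phi}$ term killed by the Maxwell equations. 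In effect, the identity \eqref{Identity:StoThetaDH} that you relegate to a clean-up role in step (iii) is the engine of the paper's proof, and using it as the starting point is what makes the derivation a few lines rather than the index marathon you anticipate; your route also needs the wave equation \eqref{WaveEquationKappa} explicitly (equivalently $\nabla^{AA'}H_{A'ABC}=0$), whereas in the paper's ordering most of that work was already absorbed into establishing \eqref{Identity:StoThetaDH}.
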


\begin{proof}
We consider, again, the identity \eqref{Identity:StoThetaDH} in the
form
\[
\nabla_{AB'} H_{A'BC}{}^A = 6 \bar{\phi}_{A'B'} \Theta_{BC} -
S_{(BC)(A'B')}. 
\]
Applying the derivative $\nabla_D{}^{B'}$ to the above expression one
readily finds that
\[
\nabla_{D}{}^{B'}\nabla_{AB'}H_{A'BC}{}^{A} = 6 (\Theta_{BC} \nabla_{D}{}^{B'}\bar{\phi}_{A'B'} + \bar{\phi}_{A'B'} \nabla_{D}{}^{B'}\Theta_{BC}) -  \nabla_{D}{}^{B'}S_{(BC)(A'B')}).
\]
Using the identity \eqref{SecondDerivativeDecomposition} and the
box commutators \eqref{BoxCommutators} one obtains, after using the
Maxwell equations to simplify, the desired equation.
\end{proof}

\begin{remark}
{\em Observe that the righthand side of the wave
equation \eqref{WaveEquation:H} is an homogeneous expression in the
zero-quantity $H_{A'ABC}$ and the first order derivatives of
$\Theta_{AB}$ and $S_{AA'BB'}$. }
\end{remark}

\subsection{A wave equation for $\Theta_{AB}$}
In order to compute a wave equation for the zero-quantity associated
to the matter alignment condition it is convenient to introduce a
further zero-quantity:
\begin{equation}
\zeta_{AA'} \equiv \nabla^Q{}_{A'} \Theta_{AQ}.
\label{Definition:zeta}
\end{equation}
Clearly, if the matter alignment condition
\eqref{MatterAlignmentCondition} is satisfied, then $\zeta_{AA'}=0$. 
The reason for introducing this further field will become clear in
the sequel. Using the above definition one obtains the following:

\begin{lemma}
Let $\kappa_{AB}$ denote a symmetric spinor field in an electrovacuum spacetime
$(\mathcal{M},\bmg,\bmF)$. Then, one has that
\begin{equation}
\square \Theta_{AB} = 2 \Psi_{ABPQ}\Theta^{PQ} - 2
\nabla_B{}^{A'}\zeta_{AA'}. 
\label{WaveEquation:Theta}
\end{equation}
\end{lemma}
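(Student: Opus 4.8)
The plan is to prove \eqref{WaveEquation:Theta} as a purely second-order spinor identity for the symmetric spinor $\Theta_{AB}$, obtained by taking a further divergence of $\zeta_{AA'}$. Concretely, I would start from the definition \eqref{Definition:zeta}, $\zeta_{AA'}=\nabla^Q{}_{A'}\Theta_{AQ}$, and compute the contracted derivative $\nabla_B{}^{A'}\zeta_{AA'}=\nabla_B{}^{A'}\nabla^Q{}_{A'}\Theta_{AQ}$, the claim being that this equals $\Psi_{ABPQ}\Theta^{PQ}-\tfrac{1}{2}\square\Theta_{AB}$; rearranging and multiplying by $-2$ then gives \eqref{WaveEquation:Theta}. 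It is worth noting in advance that, unlike the earlier wave equations in this section, this identity uses neither the Maxwell equation \eqref{EinsteinMaxwell2} nor the explicit form of $\Theta_{AB}$ in \eqref{DefinitionTheta}, and in particular does not require $\kappa_{AB}$ to be a Killing spinor candidate (consistently with the hypothesis that $\kappa_{AB}$ is merely a symmetric spinor field). The only electrovacuum input is $\Lambda=0$ from \eqref{EinsteinMaxwell1}.

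The key step is to decompose the once-contracted second derivative $\nabla_B{}^{A'}\nabla^R{}_{A'}$ into irreducible pieces. Relabelling the inner contracted index and using \eqref{SecondDerivativeDecomposition} together with the antisymmetry of $\epsilon_{AB}$, one finds
\[
\nabla_B{}^{A'}\nabla^R{}_{A'} = -\epsilon^{RS}\Big(\tfrac{1}{2}\epsilon_{BS}\square + \square_{BS}\Big),
\]
so that, acting on $\Theta_{AR}$ and contracting the free index $R$ against $\Theta$, the trace term collapses via $\epsilon^{RS}\epsilon_{BS}=\delta^R{}_B$ to the expected $-\tfrac{1}{2}\square\Theta_{AB}$.

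It then remains to evaluate the curvature term $-\epsilon^{RS}\square_{BS}\Theta_{AR}$. Here I would apply the box commutator \eqref{BoxCommutators} to the valence-two spinor $\Theta_{AR}$, noting that $\square_{BS}$ produces a $\Psi$ contribution on each of the two indices $A$ and $R$, while the $\Lambda$ terms drop since $\Lambda=0$ in electrovacuum. The contribution acting on the index $R$ contracts two indices of the totally symmetric $\Psi_{ABCD}$ against $\epsilon^{RS}$ and therefore vanishes; the contribution acting on the index $A$ survives and, after using the total symmetry of $\Psi_{ABCD}$ and the symmetry of $\Theta_{AB}$, assembles into $+\Psi_{ABPQ}\Theta^{PQ}$. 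Collecting the two pieces gives $\nabla_B{}^{A'}\zeta_{AA'}=\Psi_{ABPQ}\Theta^{PQ}-\tfrac{1}{2}\square\Theta_{AB}$, and hence \eqref{WaveEquation:Theta}.

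The main bookkeeping hazard is sign control: one must carefully track the \emph{see-saw} signs incurred when raising and lowering the contracted spinor indices, namely $\alpha^A\beta_A=-\alpha_A\beta^A$, since a single misplaced sign flips the curvature term from $+\Psi_{ABPQ}\Theta^{PQ}$ to $-\Psi_{ABPQ}\Theta^{PQ}$. A useful consistency check is that both surviving terms are automatically symmetric in $A$ and $B$ --- as they must be, since the left-hand side $\square\Theta_{AB}$ is symmetric --- so the antisymmetric part of $\nabla_B{}^{A'}\zeta_{AA'}$ vanishes identically, confirming that no irreducible piece has been overlooked.
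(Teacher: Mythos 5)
Your proposal is correct and follows essentially the same route as the paper: the paper's proof likewise applies $\nabla_B{}^{A'}$ to the definition of $\zeta_{AA'}$ and invokes the decomposition \eqref{SecondDerivativeDecomposition} together with the box commutators \eqref{BoxCommutators} (with $\Lambda=0$), which is exactly the computation you carry out in detail. Your additional observations --- that the term $\epsilon^{RS}\Psi_{BSRD}$ vanishes by total symmetry of $\Psi_{ABCD}$ and that neither the Maxwell equation nor the Killing-spinor-candidate hypothesis is needed --- are accurate and consistent with the lemma's weaker hypothesis.
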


\begin{proof}
The wave equation follows from applying the derivative
$\nabla_B{}^{A'}$ to the definition of $\zeta_{AA'}$ and using the
identity \eqref{SecondDerivativeDecomposition} together with the box
commutators \eqref{BoxCommutators}. 
\end{proof}

\begin{remark}
{\em A direct computation using the definitions of
$\Theta_{AB}$ and $\zeta_{AA'}$ together with the expression for the
irreducible decomposition of $\nabla_{AA'}\kappa_{BC}$ given by
equation \eqref{Decomposition:DerivativeKappa} and the Maxwell
equations gives that
\begin{equation}
\zeta_{AA'} = - \nabla_{A'(A}\phi_{BC)} \kappa^{BC} +
\frac{4}{3}\xi^B{}_{A'} \phi_{AB} + \frac{1}{3}H_{A'ABC}\phi^{BC}.
\label{Identity:Zeta}
\end{equation}}
\end{remark}

\begin{remark}
{\em It follows directly from equation
\eqref{WaveEquation:Theta} that 
\[
\nabla^{AA'}\zeta_{AA'}=0.
\]
Alternatively, this property can be verified through a direct
computation using the identity \eqref{Identity:Zeta}.}
\end{remark} 

\medskip
As the right hand side of equation \eqref{WaveEquation:Theta} is an homogeneous expression in
$\Theta_{AB}$ and a first order derivative of $\zeta_{AA'}$, one needs
to construct a wave equation for $\zeta_{AA'}$. The required
expression follows from an involved computation ---as it can be seen
from the proof of the following lemma:

\begin{lemma}
Let $\kappa_{AB}$ denote a symmetric spinor field in an electrovacuum
$(\mathcal{M},\bmg,\bmF)$. Then, one has that
\begin{eqnarray}
&& \square \zeta_{AA'} =4 \zeta^{DB'} \phi_{AD} \bar{\phi}_{A'B'} + \frac{2}{3} \phi^{DB} 
\Psi_{DBCF} H_{A'A}{}^{CF} -  \frac{2}{3} \phi^{DB} \Psi_{ABCF} 
H_{A'D}{}^{CF} \nonumber\\
&& \hspace{2cm}-  \frac{4}{3} \phi_{A}{}^{D} \phi^{BC} 
\bar{\phi}_{A'}{}^{B'} H_{B'DBC} -  \frac{2}{3} H^{B'DBC}
\nabla_{AB'}\phi_{DA'BC} -  \frac{2}{3} H_{A'}{}^{DBC}
\nabla_{AB'}\phi_{D}{}^{B'}{}_{BC} \nonumber\\
&& \hspace{2cm}+ \frac{2}{3} \phi^{DB'BC} \nabla_{AB'}H_{A'DBC} +
\frac{2}{3} \phi^{D}{}_{A'}{}^{BC} \nabla_{AB'}H^{B'}{}_{DBC} -
\frac{4}{3} \phi^{DB}
\nabla_A{}^{B'}S_{(BD)(A'B')} \nonumber\\
&& \hspace{2cm} - 4 
\phi^{DB} \bar{\phi}_{A'}{}^{B'} \nabla_{BB'}\Theta_{AD} - \frac{2}{3}
\phi^{DB} \nabla_B{}^{B'}S_{(AD)(A'B')} +
\frac{2}{3} \nabla_{A}{}^{B'}\phi^{DB} \nabla_{CB'}H_{A'DB}{}^{C} \nonumber\\
&& \hspace{2cm}-  \frac{4}{3} \nabla_{A}{}^{B'}\phi^{DB}
   S_{(BD)(A'B')}. \label{WaveEquation:Zeta}
\end{eqnarray}
where $\phi_{AA'BC}\equiv\nabla_{AA'}\phi_{BC}$.
\end{lemma}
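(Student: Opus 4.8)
The plan is to compute $\square\zeta_{AA'}$ directly from the definition \eqref{Definition:zeta}, $\zeta_{AA'}=\nabla^Q{}_{A'}\Theta_{AQ}$, feeding in the already-established wave equation \eqref{WaveEquation:Theta} for $\Theta_{AB}$, rather than differentiating the identity \eqref{Identity:Zeta}. The latter route would force the appearance of $\square\kappa_{AB}$, which is not controlled here since $\kappa_{AB}$ is only assumed symmetric (not a Killing spinor candidate). Concretely, I would write $\square\zeta_{AA'}=\nabla_{PP'}\nabla^{PP'}\nabla^Q{}_{A'}\Theta_{AQ}$ and commute the two outer derivatives past $\nabla^Q{}_{A'}$ so as to expose $\nabla^Q{}_{A'}\square\Theta_{AQ}$, collecting at each commutation a curvature term built from \eqref{BoxCommutators}. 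This leaves $\square\zeta_{AA'}=\nabla^Q{}_{A'}\square\Theta_{AQ}$ plus two commutator contributions of the schematic form (curvature)$\cdot\nabla\Theta$ and $\nabla$(curvature)$\cdot\Theta$, none of which involve a second derivative of $\kappa_{AB}$.

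The key structural point I would exploit is the self-coupling hidden in \eqref{WaveEquation:Theta}: substituting $\square\Theta_{AQ}=2\Psi_{AQPR}\Theta^{PR}-2\nabla_Q{}^{B'}\zeta_{AB'}$ produces the term $-2\nabla^Q{}_{A'}\nabla_Q{}^{B'}\zeta_{AB'}$, and decomposing the operator $\nabla^Q{}_{A'}\nabla_Q{}^{B'}$ through the companion of \eqref{SecondDerivativeDecomposition} with the unprimed indices contracted, $\nabla^Q{}_{A'}\nabla_{QB'}=\frac{1}{2}\epsilon_{A'B'}\square+\square_{A'B'}$, returns a multiple of $\square\zeta_{AA'}$ together with the curvature piece $\square_{A'}{}^{B'}\zeta_{AB'}$. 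The $\square\zeta_{AA'}$ so produced is carried back to the left-hand side; consistency (the right-hand side must be a genuine source and not a constraint) fixes the relative sign so that one is left with $2\square\zeta_{AA'}=\ldots$. The curvature piece $\square_{A'}{}^{B'}\zeta_{AB'}$ is then expanded with \eqref{BoxCommutators}; using $\Lambda=0$ and $\Phi_{ABA'B'}=2\phi_{AB}\bar\phi_{A'B'}$ from \eqref{EinsteinMaxwell1} (the $\bar\Psi$ contribution dropping out by symmetry) it supplies the term of the form $\zeta^{DB'}\phi_{AD}\bar\phi_{A'B'}$.

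For the remaining source I would treat $\nabla^Q{}_{A'}\bigl(\Psi_{AQPR}\Theta^{PR}\bigr)$ by splitting the derivative: where it hits $\Psi$ I apply the Bianchi identity \eqref{BianchiIdentity}, turning $\nabla\Psi$ into $\bar\phi\,\nabla\phi$; where it hits $\Theta=2\kappa_{(A}{}^Q\phi_{B)Q}$ I use the Leibniz rule together with the irreducible decomposition \eqref{Decomposition:DerivativeKappa} to trade $\nabla\kappa$ for $H_{A'ABC}$ and $\xi_{AA'}$. The two commutator contributions are handled the same way. Wherever a bare contraction $\nabla_{A'(A}\phi_{BC)}\kappa^{BC}$ or $\xi\,\phi$ survives, I invoke the identity \eqref{Identity:Zeta}, together with the Maxwell equation \eqref{EinsteinMaxwell2} in the form $\nabla_{A'B}\phi_{CD}=\nabla_{A'(B}\phi_{CD)}$, to reassemble it into the zero-quantities $\zeta_{AA'}$, $H_{A'ABC}$, $\Theta_{AB}$ and $S_{AA'BB'}$, so that no non-homogeneous remainder is left.

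I expect the main obstacle to be the bookkeeping in this last stage rather than any single conceptual step: the second and third derivatives generate many terms of the types $\Psi\,\phi\,H$, $\phi\,\phi\,\bar\phi\,H$, $\phi\,\nabla H$, $H\,\nabla\phi$, $\phi\,\nabla S$, $\phi\,\bar\phi\,\nabla\Theta$ and $\nabla\phi\,S$, and these must be collected into exactly the combination displayed in \eqref{WaveEquation:Zeta}. Organising the computation by index valence and by the factor of $\phi$ or $\nabla\phi=\phi_{AA'BC}$ carried, and repeatedly using that $\phi_{AB}$ and $\kappa_{AB}$ are symmetric to separate products into symmetric and trace parts, is what makes the cancellations tractable. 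A useful internal consistency check at the end is that every surviving term contains at least one zero-quantity, as it must, since $\zeta_{AA'}$ vanishes identically whenever $\kappa_{AB}$ is an aligned Killing spinor.
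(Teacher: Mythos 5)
Your proposed route contains a genuine gap at its central step. You compute $\square\zeta_{AA'}=\nabla^Q{}_{A'}\square\Theta_{AQ}+[\square,\nabla^Q{}_{A'}]\Theta_{AQ}$ and then substitute \eqref{WaveEquation:Theta}. But \eqref{WaveEquation:Theta} is itself nothing more than the statement that $\nabla_B{}^{A'}\zeta_{AA'}=-\tfrac{1}{2}\square\Theta_{AB}+\Psi_{ABPQ}\Theta^{PQ}$, i.e.\ it is the factorisation of the wave operator through the operator $P:\Theta_{AB}\mapsto\nabla^Q{}_{A'}\Theta_{AQ}$ and its adjoint. Feeding it back into $\square(P\Theta)$ therefore returns $\square\zeta_{AA'}$ with coefficient $+1$, not $-1$: the same index gymnastics that produce the factor $-2$ in \eqref{WaveEquation:Theta} (namely $\nabla_B{}^{A'}\nabla^Q{}_{A'}=-\tfrac{1}{2}\delta_B{}^Q\square+\mbox{curvature}$) give, mirrored under priming, $\nabla^Q{}_{A'}\nabla_Q{}^{B'}=-\tfrac{1}{2}\delta_{A'}{}^{B'}\square+\mbox{curvature}$, so that $-2\nabla^Q{}_{A'}\nabla_Q{}^{B'}\zeta_{AB'}=+\square\zeta_{AA'}+\mbox{(curvature)}\cdot\zeta$. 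The wave operator cancels from both sides and you are left with a first-order \emph{identity} among curvature terms, $\nabla\Theta$ and $\zeta$ --- not a wave equation. Your remark that ``consistency fixes the relative sign'' is not an argument: the sign is fixed by the conventions, and it is the unfavourable one. This circularity is intrinsic to any attempt to close the equation for $\zeta_{AA'}$ using only the $\Theta$-equation.

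The paper breaks the loop with a different first-order input: it starts from the identity \eqref{Identity:Zeta}, which expresses $\zeta_{AA'}$ through $\kappa_{AB}$, $\phi_{AB}$, $\xi_{AA'}$ and $H_{A'ABC}$, differentiates once to obtain $\nabla^A{}_{B'}\zeta_{AA'}$ as an expression homogeneous in $H_{A'ABC}$, $\nabla H$, $\nabla\phi$ and $S_{(AB)(A'B')}$, and only then applies a second derivative, closing the system with the wave equations for $\phi_{AB}$ and $H_{A'ABC}$ --- equations \eqref{WaveEquationMaxwellSpinor} and \eqref{WaveEquation:H} --- rather than the one for $\Theta_{AB}$. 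This also explains why the right-hand side of \eqref{WaveEquation:Zeta} is built from $H$, $\nabla H$, $S$, $\nabla S$ and second derivatives of $\phi_{AB}$; your route, even with a cooperative sign, would only ever generate $\Theta$, $\nabla\Theta$ and $\zeta$, and so could not reproduce the stated formula. Finally, your reason for rejecting the paper's route --- that differentiating \eqref{Identity:Zeta} would force an uncontrolled $\square\kappa_{AB}$ --- does not hold: all derivatives of $\kappa_{AB}$ in that identity are already packaged into the zero-quantities $\xi_{AA'}$ and $H_{A'ABC}$ via \eqref{Decomposition:DerivativeKappa}, so the second differentiation produces $\square H$ and $\nabla S$, never a bare $\square\kappa_{AB}$.
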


\begin{proof}
Consider the identity \eqref{Identity:Zeta} and apply the derivative
$\nabla^A{}_{B'}$ to obtain 
\begin{eqnarray*}
&& \nabla^{A}{}_{B'}\zeta_{AA'} = - \kappa^{BC}
\nabla^{A}{}_{B'}\nabla_{AA'}\phi_{BC} -  \nabla_{AA'}\phi_{BC}
\nabla^{A}{}_{B'}\kappa^{BC} + \frac{4}{3} (\phi_{AB}
\nabla^{A}{}_{B'}\xi^{B}{}_{A'} + \xi^{B}{}_{A'}
\nabla^{A}{}_{B'}\phi_{AB}) \\
&& \hspace{3cm} + \frac{1}{3} (H_{A'ABC} \nabla^{A}{}_{B'}\phi^{BC} + \phi^{BC} \nabla^{A}{}_{B'}H_{A'ABC}).
\end{eqnarray*}
Some further simplifications yield
\[
\nabla{}^A{}_{B'}\zeta_{AA'} = \frac{1}{3} \nabla^{A}{}_{B'}\phi^{BC} H_{A'ABC} + \frac{1}{3} \nabla^{A}{}_{A'}\phi^{BC} H_{B'ABC} -  \frac{1}{3} \phi^{AB} \nabla_{CB'}H_{A'AB}{}^{C} + \frac{2}{3} \phi^{AB} S_{(AB)(A'B')}.
\]
To obtain the required wave equation we apply $\nabla_D{}^{B'}$ to the
above expression and make use of the decomposition \eqref{SecondDerivativeDecomposition} on the
terms 
\[
\frac{1}{3} \nabla_D{}^{B'}\nabla^{A}{}_{B'}\phi^{BC} H_{A'ABC}, \qquad \nabla_D{}^{B'}\nabla{}^A{}_{B'}\zeta_{AA'}, \qquad -  \frac{1}{3} \phi^{AB} \nabla_D{}^{B'}\nabla_{CB'}H_{A'AB}{}^{C}.
\]
Finally, substitution of the wave equations for $\phi_{AB}$ and
$H_{A'ABCD}$, equations \eqref{WaveEquationMaxwellSpinor} and
\eqref{WaveEquation:H} yields the required expression homogeneous in
zero-quantities.
\end{proof}

\subsection{A wave equation for $S_{AA'BB'}$}
The discussion of the wave equation for the spinorial field
$S_{AA'BB'}$ is best carried out in tensorial notation. Accordingly,
let $S_{ab}$ denote the tensorial counterpart of the (not
necessarily Hermitian) spinor $S_{AA'BB'}$. Key to this computation
is the wave equation for the Killing vector candidate $\xi^a$,
equation \eqref{WaveEquationKillingVectorCandidate}. 

\begin{lemma}
Let $\kappa_{AB}$ denote a Killing spinor candidate in an
electrovacuum spacetime $(\mathcal{M},\bmg,\bmF)$. Then the
zero-quantity $S_{ab}$ satisfies the wave equation
\begin{equation}
\square S_{ab} = -2 \mathcal{L}_{\xi}T_{ab} + 2 
T_{b}{}^{c} S_{ac} + 2 T_{a}{}^{c} S_{bc} -  T^{cd} S_{cd}g_{ab} -
T_{ab} S^{c}{}_{c} - 2 C_{acbd} S^{cd} + \nabla_{a}J_{b} +
\nabla_{b}J_{a}
\label{WaveEquation:S}
\end{equation}
where $\mathcal{L}_{\xi}T_{ab}$ denotes the Lie derivative of the
energy-momentum of the Faraday tensor.  
\end{lemma}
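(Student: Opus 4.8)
The plan is to carry out the computation entirely in tensorial notation, exploiting the fact that the zero-quantity is nothing but the Killing operator applied to the Killing vector candidate, $S_{ab} = \nabla_a\xi_b + \nabla_b\xi_a = 2\nabla_{(a}\xi_{b)}$, together with the tensorial wave equation $\square\xi_a = -2\Phi_{ab}\xi^b + J_a$ of Lemma \ref{Lemma:WaveEquationKillingVectorCandidate}. Since $\Lambda=0$ in electrovacuum, the full Ricci tensor coincides with the trace-free Ricci spinor $\Phi_{ab}$, which by equation \eqref{EinsteinMaxwell1} is precisely the energy-momentum tensor $T_{ab}$ of the Faraday field; I would use this identification throughout to trade curvature contractions for $T$. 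First I would apply $\square=\nabla^c\nabla_c$ to $S_{ab}$ and commute it past the symmetrised derivative $\nabla_{(a}$.

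The commutation step is where the curvature enters. Moving $\square$ across one covariant derivative on the one-form $\xi_b$ produces the transported term $\nabla_{(a}\square\xi_{b)}$ together with curvature corrections that are, schematically, of three types: a Ricci contraction $R\cdot\nabla\xi$, a full Riemann contraction $R_{acbd}\nabla^c\xi^d$, and a term $(\nabla^c R_{cab}{}^d)\xi_d$ linear in $\xi$ and in the derivative of the curvature. I would collect these after symmetrising over $(ab)$ and then substitute the wave equation for $\xi$ into $\nabla_{(a}\square\xi_{b)}$. The $J$-part immediately yields $\nabla_a J_b+\nabla_b J_a$, while $-2\nabla_{(a}(T_{b)c}\xi^c)$ splits into a piece $-2(\nabla_{(a}T_{b)c})\xi^c$ carrying a derivative of $T$ and a piece $-2T_{c(b}\nabla_{a)}\xi^c$ algebraic in $T$ and $\nabla\xi$.

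The crux of the proof is to reorganise these contributions into the Lie derivative $\mathcal{L}_\xi T_{ab}=\xi^c\nabla_c T_{ab}+T_{cb}\nabla_a\xi^c+T_{ac}\nabla_b\xi^c$. The transport term $\xi^c\nabla_c T_{ab}$ does not appear directly; instead it must be assembled from the derivative-of-$T$ piece above and the $(\nabla R)\cdot\xi$ commutator term, which the contracted second Bianchi identity $\nabla^c R_{cabd}=\nabla_a R_{bd}-\nabla_b R_{ad}$ converts into derivatives of $T$. Decomposing the remaining Riemann contraction into its Weyl and Ricci parts ($\Lambda=0$) isolates the Weyl contribution $-2C_{acbd}S^{cd}$, the antisymmetric (Killing-form) part of $\nabla\xi$ being absorbed, together with the $T_{c(b}\nabla_{a)}\xi^c$ terms, into the $T_{cb}\nabla_a\xi^c+T_{ac}\nabla_b\xi^c$ pieces of $\mathcal{L}_\xi T_{ab}$. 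Writing $\nabla_a\xi_c=\tfrac12 S_{ac}+\nabla_{[a}\xi_{c]}$ then produces the algebraic structure $2T_b{}^cS_{ac}+2T_a{}^cS_{bc}-T^{cd}S_{cd}g_{ab}-T_{ab}S^c{}_c$; here $S^c{}_c=0$ by the divergence-freeness of $\xi$ noted after equation \eqref{DecompositionDerivativeKillingVectorCandidate}, so the last term drops, and the trace and symmetry \eqref{Symmetries:S} of $S$ are used to tidy the contractions.

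The main obstacle I expect is bookkeeping rather than conceptual: the delicate point is that the transport term $\xi^c\nabla_c T_{ab}$ of $\mathcal{L}_\xi T_{ab}$ only emerges once the $(\nabla R)\cdot\xi$ commutator term and the $(\nabla_{(a}T_{b)c})\xi^c$ term are combined through the second Bianchi identity, and one must verify that the antisymmetric part of $\nabla\xi$ lands exactly in the Killing-form terms of the Lie derivative rather than surviving as a spurious remainder. Pinning down the coefficients of the algebraic $T\cdot S$ terms and, in particular, the factor $-2$ in front of the Weyl term requires careful use of the Riemann symmetries and of the tracelessness and divergence-freeness of both $T_{ab}$ and $\xi^a$.
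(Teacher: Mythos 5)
Your proposal follows essentially the same route as the paper: apply $\square$ to $S_{ab}=\nabla_a\xi_b+\nabla_b\xi_a$, commute covariant derivatives, insert the wave equation $\square\xi_a=-2\Phi_{ab}\xi^b+J_a$ together with the Einstein equation $R_{ab}=T_{ab}$, handle the derivative-of-curvature terms via the contracted Bianchi identity, and use the split $\nabla_a\xi_b=\tfrac12 S_{ab}+\nabla_{[a}\xi_{b]}$ to express everything in terms of $\mathcal{L}_\xi T_{ab}$, the Weyl contraction and the algebraic $T\cdot S$ terms. The paper states the Bianchi input in the Weyl form $\nabla^aC_{abcd}=\nabla_{[c}T_{d]b}$ rather than for the Riemann tensor, but that is an equivalent bookkeeping choice, not a different argument.
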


\begin{proof}
The required expression follows from applying $\square = \nabla_a
\nabla^a$ to 
\[
S_{ab} = \nabla_a\xi_b + \nabla_b \xi_a,
\]
commuting covariant derivatives, using the wave equation
\eqref{WaveEquationKillingVectorCandidate}, the Einstein equation
\[
R_{ab}=T_{ab},
\]
the contracted Bianchi identity 
\[
\nabla^a C_{abcd} = \nabla_{[c} T_{d]b}
\] 
and the relation 
\[
\nabla_a \xi_b = \frac{1}{2}S_{ab} + \nabla_{[a}\xi_{b]}.
\]
\end{proof}

\medskip
A straightforward computation computation shows that the Lie
  derivative of the electromagnetic energy-momentum tensor can be
  expressed in terms of the Lie derivative of the Faraday tensor and
  the zero-quantity $S_{ab}$ as
\[
\mathcal{L}_{\xi}T_{ab} = - \frac{1}{4} F_{cd} F^{cd} S_{ab} -  F_{a}{}^{c} F_{b}{}^{d} S_{cd} + \frac{1}{2} F_{c}{}^{f} F^{cd} g_{ab} S_{df} + F_{b}{}^{c} \mathcal{L}_{\xi}F_{ac} + F_{a}{}^{c} \mathcal{L}_{\xi}F_{bc} -  \frac{1}{2} F^{cd} g_{ab} \mathcal{L}_{\xi}F_{cd}.
\]
Furthermore, the Lie derivative of the Faraday tensor can be expressed in terms of the Lie derivative of the Maxwell spinor as
\begin{align*}
\mathcal{L}_{\xi}F_{AA'BB'}=\left(\mathcal{L}_{\xi}\phi_{AB}-\frac{1}{2}S_{AC'BD'}\bar{\phi}^{C'D'}\right)\epsilon_{A'B'} + \text{complex conjugate},
\end{align*}
where the \emph{Lie derivative of the Maxwell spinor} is defined by
\begin{equation}
\mathcal{L}_{\xi}\phi_{AB}\equiv \xi^{CC'}\nabla_{CC'}\phi_{AB}+\phi_{C(A}\nabla_{B)C'}\xi^{CC'}
\label{Definition:LieDerivativeMaxwell}
\end{equation}
---see Section 6.6 in \cite{PenRin86}. This expression can be written
in terms of zero quantities by using the wave equations for the
Killing and Maxwell spinors, the Maxwell equations and the identity
\begin{equation*}
\kappa^{D}{}_{(A}\Psi_{B)DEF}\phi^{EF}=\frac{1}{2}\Psi_{ABCD}\Theta^{CD}+\frac{1}{3}\phi^{EF}\nabla_{(A|}{}^{A'}H_{A'|BEF)}
\end{equation*}
along with the wave equations for the Killing and Maxwell spinors and
the Maxwell equations, equations \eqref{WaveEquationKappa} and
\eqref{WaveEquationKillingVectorCandidate}, so as to obtain
\begin{equation*}
\mathcal{L}_{\xi}\phi_{AB}=-\frac{3}{2}\nabla_{(A}{}^{A'}\zeta_{B)A'}+H_{A'CD(A}\nabla_{B)}{}^{A'}\phi^{CD}-\phi^{CD}\nabla_{(A|}{}^{A'}H_{A'|BCD)}.
\end{equation*}

From the previous discussion it follows that:

\begin{lemma}
\label{Lemma:LieDerivativeEnergyMomentumTensor}
Let $\kappa_{AB}$ denote a Killing spinor candidate in an
electrovacuum spacetime $(\mathcal{M},\bmg,\bmF)$. Then the Lie
derivative $\mathcal{L}_\xi T_{ab}$ can be expressed as an homogeneous
expression in the zero-quantities
\[
S_{AA'BB'}, \quad \zeta_{AA'}, \quad H_{A'ABC}
\]
and their first order derivatives.
\end{lemma}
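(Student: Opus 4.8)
The plan is to prove the statement by assembling the three displayed formulas that immediately precede it into a single chain of substitutions and then simply inspecting the homogeneity degree of each resulting term. The natural starting point is the tensorial identity for $\mathcal{L}_\xi T_{ab}$ written in terms of $S_{ab}$ and $\mathcal{L}_\xi F_{ab}$ displayed above. Its first three terms are manifestly of the schematic form $(\text{Faraday data})\times S$, and so are already homogeneous in the zero-quantity $S_{AA'BB'}$ of \eqref{DefinitionS}. Hence the entire content of the lemma reduces to showing that the three terms carrying $\mathcal{L}_\xi F_{ab}$ can be rewritten as homogeneous expressions in $S_{AA'BB'}$, $\zeta_{AA'}$, $H_{A'ABC}$ and their first derivatives.

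First I would substitute the spinorial decomposition of $\mathcal{L}_\xi F_{AA'BB'}$ recorded above. This splits each factor $\mathcal{L}_\xi F_{ab}$ into a part proportional to $S_{AC'BD'}\bar{\phi}^{C'D'}$, which is again homogeneous in $S$, and a part built from $\mathcal{L}_\xi\phi_{AB}$ together with its complex conjugate. Since the background coefficients $F_a{}^c$, $g_{ab}$ contributing in front of these factors carry no zero-quantity, the whole problem is thereby reduced to controlling $\mathcal{L}_\xi\phi_{AB}$. At this point I would insert the zero-quantity form of the Lie derivative of the Maxwell spinor derived above, namely
\[
\mathcal{L}_{\xi}\phi_{AB}=-\tfrac{3}{2}\nabla_{(A}{}^{A'}\zeta_{B)A'}+H_{A'CD(A}\nabla_{B)}{}^{A'}\phi^{CD}-\phi^{CD}\nabla_{(A|}{}^{A'}H_{A'|BCD)}.
\]
Every summand here carries exactly one factor of $\zeta_{AA'}$, $H_{A'ABC}$, or a first derivative of one of these, so $\mathcal{L}_\xi\phi_{AB}$ is homogeneous of degree one in the zero-quantities. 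Substituting this back and collecting all contributions then exhibits $\mathcal{L}_\xi T_{ab}$ as a sum of terms each of which contains precisely one factor drawn from $\{S_{AA'BB'},\,\zeta_{AA'},\,H_{A'ABC},\,\nabla\zeta,\,\nabla H\}$, multiplied by coefficients depending only on the background fields $\phi_{AB}$, $\bar{\phi}_{A'B'}$ and the metric. This is exactly the asserted homogeneity, and I would note in passing that $\Theta_{AB}$ need not appear in the final list because it has already been eliminated in favour of $\zeta$ and $H$ at the level of $\mathcal{L}_\xi\phi_{AB}$.

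Within the lemma itself there is therefore no genuine obstacle: once the formula for $\mathcal{L}_\xi\phi_{AB}$ is granted, the argument is pure substitution and inspection. The real work, which the statement relies upon, is the justification of that formula. Computing $\mathcal{L}_\xi\phi_{AB}$ directly from the definition \eqref{Definition:LieDerivativeMaxwell} and eliminating the second derivatives by means of the wave equations \eqref{WaveEquationKappa} and \eqref{WaveEquationMaxwellSpinor} together with the Maxwell equations produces a term of the schematic form $\kappa^{D}{}_{(A}\Psi_{B)DEF}\phi^{EF}$ that involves neither $S$, nor $\zeta$, nor $H$ and would spoil homogeneity. The decisive point is the algebraic identity
\[
\kappa^{D}{}_{(A}\Psi_{B)DEF}\phi^{EF}=\tfrac{1}{2}\Psi_{ABCD}\Theta^{CD}+\tfrac{1}{3}\phi^{EF}\nabla_{(A|}{}^{A'}H_{A'|BEF)}
\]
displayed above, which rewrites this would-be source entirely in terms of $\Theta_{AB}$ and of a first derivative of $H_{A'ABC}$.

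The hard part is thus verifying that identity. I expect it to follow by feeding the integrability condition \eqref{IntegrabilityCondition}, which controls the totally symmetric contraction $\kappa_{(A}{}^{Q}\Psi_{BCD)Q}$, together with the definitions \eqref{DefinitionH} and \eqref{DefinitionTheta}, through the symmetrised double contraction of $\Psi$ with $\phi$; the symmetrisation on $AB$ isolates the $\Theta$-contribution while the non-symmetric remainder organises into the $\phi\,\nabla H$ term. This is the one place where real spinor algebra is required, and once it is secured the homogeneity of $\mathcal{L}_\xi T_{ab}$ in $S_{AA'BB'}$, $\zeta_{AA'}$, $H_{A'ABC}$ and their first derivatives is a matter of bookkeeping.
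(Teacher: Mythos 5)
Your proposal is correct and follows essentially the same route as the paper, which itself proves this lemma only by assembling the three displayed formulas preceding it (the tensorial expression for $\mathcal{L}_\xi T_{ab}$, the spinorial split of $\mathcal{L}_\xi F_{ab}$, and the zero-quantity form of $\mathcal{L}_\xi\phi_{AB}$) exactly as you do, with the stated identity for $\kappa^{D}{}_{(A}\Psi_{B)DEF}\phi^{EF}$ as the one nontrivial algebraic input. One small caution on your sketch of that identity: what you need is not the integrability condition \eqref{IntegrabilityCondition} itself (which asserts $\kappa_{(A}{}^{Q}\Psi_{BCD)Q}=0$ and holds only for an actual Killing spinor), but the off-shell identity $\nabla_{(A}{}^{A'}H_{|A'|BCD)}=-6\Psi_{Q(ABC}\kappa_{D)}{}^{Q}$ from the remark following the zero-quantity definitions, valid for any symmetric $\kappa_{AB}$, which is precisely what converts the totally symmetric contraction into the $\phi\,\nabla H$ term without trivialising it.
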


\begin{remark}
{\em In the context of the present discussion the object $\mathcal{L}_\xi
\phi_{AB}$, as defined in \eqref{Definition:LieDerivativeMaxwell}, must be regarded as a convenient shorthand for a
complicated expression. It is only consistent with the usual notion of
Lie derivative of tensor fields if $\xi^{AA'}$ is the spinorial
counterpart of a conformal Killing vector $\xi^a$ ---see
\cite{PenRin86}, Section 6.6, for further discussion on this point.}
\end{remark}

\subsection{Summary}
We summarise the discussion of the present section in the following:

\begin{proposition}
Let $\kappa_{AB}$ denote a Killing spinor candidate in an
electrovacuum spacetime $(\mathcal{M},\bmg,\bmF)$. Then the zero-quantities
\[
H_{A'ABC}, \quad \Theta_{AB}, \quad \zeta_{AA'}, \quad S_{AA'BB'}
\]
satisfy a system of wave equations, consisting of equations
\eqref{WaveEquation:H}, \eqref{WaveEquation:Theta},
\eqref{WaveEquation:Zeta} and \eqref{WaveEquation:S}, which is homogeneous on the above
zero-quantities and their first order derivatives. 
\end{proposition}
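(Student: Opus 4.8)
The plan is to recognise that this proposition is a summary statement: the four wave equations have already been derived individually in the preceding lemmas, and the only thing left to establish is that, taken together, they form a \emph{closed} system whose right-hand sides are homogeneous in the four zero-quantities $H_{A'ABC}$, $\Theta_{AB}$, $\zeta_{AA'}$, $S_{AA'BB'}$ and their first-order covariant derivatives. Accordingly, I would proceed equation by equation, inspecting each right-hand side and confirming that every term is either algebraic in one of the zero-quantities (with coefficients built from $\Psi_{ABCD}$, $\phi_{AB}$, $\bar\phi_{A'B'}$ and their derivatives) or is a single covariant derivative of a zero-quantity, so that no inhomogeneous source survives.

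First I would dispatch the equations for $H_{A'ABC}$ and $\Theta_{AB}$, which are immediate: the right-hand side of \eqref{WaveEquation:H} is, as already noted, algebraic in $H_{A'ABC}$ together with the derivatives $\nabla\Theta$ and $\nabla S$, while the right-hand side of \eqref{WaveEquation:Theta} is algebraic in $\Theta_{AB}$ plus the single derivative $\nabla\zeta$. The $\zeta$-equation \eqref{WaveEquation:Zeta} requires more care, but its homogeneity is guaranteed by construction: the computation producing it was arranged so that the wave equations \eqref{WaveEquationMaxwellSpinor} and \eqref{WaveEquation:H} remove all second derivatives of $\phi_{AB}$ and $H_{A'ABC}$ in favour of zero-quantities, and I would verify term by term that the surviving expressions are algebraic in $H$, $\Theta$, $\zeta$, $S$ or are first derivatives $\nabla H$, $\nabla\Theta$, $\nabla S$.

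The substantive step is the equation \eqref{WaveEquation:S} for $S_{ab}$, whose right-hand side contains the Lie-derivative term $\mathcal{L}_\xi T_{ab}$ that is not manifestly of the required form. Here I would appeal directly to Lemma \ref{Lemma:LieDerivativeEnergyMomentumTensor}, which rewrites $\mathcal{L}_\xi T_{ab}$ as a homogeneous combination of $S_{AA'BB'}$, $\zeta_{AA'}$, $H_{A'ABC}$ and their first derivatives; the curvature-times-$S$ terms in \eqref{WaveEquation:S} are already homogeneous, and the source $\nabla_a J_b + \nabla_b J_a$ consists of first derivatives of $J_{AA'}$, which the remark following Lemma \ref{Lemma:WaveEquationKillingVectorCandidate} displays as a homogeneous combination of $H_{A'ABC}$, $\Theta_{AB}$ and $\zeta_{AA'}$. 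Collecting these facts shows that each of the four right-hand sides lies in the span of the zero-quantities and their first derivatives, so the system closes. I expect the main obstacle to lie precisely in controlling the $S$-equation: one must check that eliminating $\mathcal{L}_\xi T_{ab}$ through Lemma \ref{Lemma:LieDerivativeEnergyMomentumTensor} and expanding $\nabla J$ leaves behind no stray term in which an undifferentiated Faraday tensor or Weyl spinor fails to multiply a zero-quantity, which would break homogeneity.
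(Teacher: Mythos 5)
Your proposal is correct and matches the paper's (implicit) argument: the proposition is indeed stated as a summary whose justification is exactly the collection of the preceding lemmas, with the only non-obvious ingredient being the treatment of $\mathcal{L}_\xi T_{ab}$ via Lemma \ref{Lemma:LieDerivativeEnergyMomentumTensor} and of $\nabla_a J_b+\nabla_b J_a$ via the remark expressing $J_{AA'}$ homogeneously in $H_{A'ABC}$ and $\zeta_{AA'}$, both of which you invoke. Nothing further is needed.
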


A direct consequence of the above and the uniqueness of solutions to
homogeneous wave equations is the following:

\begin{theorem}
\label{Theorem:SpacetimeConditionKillingSpinor}
Let $\kappa_{AB}$ denote a Killing spinor candidate in an
electrovacuum spacetime $(\mathcal{M},\bmg,\bmF)$ and let
$\mathcal{S}$ denote a Cauchy hypersurface of
$(\mathcal{M},\bmg,\bmF)$. The spinor $\kappa_{AB}$
is an actual Killing spinor if and only if on $\mathcal{S}$ one has that
\begin{subequations}
\begin{eqnarray}
H_{A'ABC}|_{\mathcal{S}}=0, &\qquad& \nabla_{EE'} H_{A'ABC}|_{\mathcal{S}}=0  \label{DataCondition1}\\
S_{AA'BB'}|_{\mathcal{S}}=0,&\qquad &\nabla_{EE'} S_{AA'BB'}|_{\mathcal{S}}=0 \label{DataCondition2}\\
\Theta_{AB}|_{\mathcal{S}}=0, &\qquad &\nabla_{EE'} \Theta_{AB}|_{\mathcal{S}}=0 \label{DataCondition3}\\
\zeta_{AA'}|_{\mathcal{S}}=0, &\qquad &\nabla_{EE'} \zeta_{AA'}|_{\mathcal{S}}=0.  \label{DataCondition4}
\end{eqnarray}
\end{subequations}
\end{theorem}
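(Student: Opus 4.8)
The engine of the proof is the preceding Proposition: the four zero-quantities $H_{A'ABC}$, $\Theta_{AB}$, $\zeta_{AA'}$, $S_{AA'BB'}$ satisfy the closed system of wave equations \eqref{WaveEquation:H}, \eqref{WaveEquation:Theta}, \eqref{WaveEquation:Zeta}, \eqref{WaveEquation:S}, whose right-hand sides are linear and homogeneous in these quantities and their first-order derivatives, with coefficients built from the (smooth) Weyl spinor, Maxwell spinor and their conjugates. Writing $Z$ schematically for the collection of zero-quantities, the system takes the form $\square Z = \mathcal{A}(\nabla Z) + \mathcal{B}(Z)$ with $\mathcal{A},\mathcal{B}$ smooth linear maps. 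The principal part is the diagonal scalar wave operator $\square=\nabla_a\nabla^a$ acting component-wise, so the principal symbol is $g^{ab}\xi_a\xi_b$ times the identity; since $\bmg$ is Lorentzian this is a hyperbolic system. The plan is to prove both implications through the single statement that $Z\equiv 0$ on the development $D(\mathcal{S})$.

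For the direction $(\Leftarrow)$ I would assume the data conditions \eqref{DataCondition1}--\eqref{DataCondition4}, that is $Z|_{\mathcal{S}}=0$ and $\nabla_{EE'}Z|_{\mathcal{S}}=0$. Since tangential derivatives of $Z$ along $\mathcal{S}$ are fixed by $Z|_{\mathcal{S}}$ and the transverse derivative is supplied by $\nabla_{EE'}Z|_{\mathcal{S}}$, this amounts to trivial Cauchy data for the second-order system. Uniqueness of solutions of a linear homogeneous hyperbolic system with vanishing Cauchy data then forces $Z\equiv 0$ throughout $D(\mathcal{S})$. In particular $H_{A'ABC}\equiv 0$, so by the observation recorded after \eqref{DefinitionH}--\eqref{DefinitionTheta} the spinor $\kappa_{AB}$ is an actual Killing spinor (and, since simultaneously $\Theta_{AB}\equiv 0$ and $S_{AA'BB'}\equiv 0$, the matter alignment condition \eqref{MatterAlignmentCondition} holds and $\xi_{AA'}$ is a genuine Killing vector).

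For the direction $(\Rightarrow)$, if $\kappa_{AB}$ is an actual Killing spinor (in the aligned sense relevant to the Kerr--Newman characterisation) then $H_{A'ABC}\equiv 0$ directly from \eqref{DefinitionH}, while the matter alignment condition gives $\Theta_{AB}\equiv 0$. The remaining two quantities follow algebraically: $\zeta_{AA'}\equiv 0$ from its definition \eqref{Definition:zeta} once $\Theta_{AB}\equiv 0$, and $S_{AA'BB'}\equiv 0$ from the identity \eqref{Identity:StoThetaDH} once both $H$ and $\Theta$ vanish. Restricting these identities, together with their covariant derivatives, to $\mathcal{S}$ yields \eqref{DataCondition1}--\eqref{DataCondition4}.

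The substantive step is the uniqueness invoked in $(\Leftarrow)$, and the main obstacle is that the right-hand sides of \eqref{WaveEquation:H}--\eqref{WaveEquation:S} contain \emph{first-order derivatives} of the zero-quantities, not the quantities alone, so a uniqueness theorem for a decoupled set of scalar wave equations does not apply verbatim. I would resolve this by reducing the second-order system to first order: adjoin the first derivatives $\nabla_{EE'}Z$ as independent unknowns, use the wave equations to provide their evolution and the trivial relations $\nabla_{EE'}Z-(\text{new variable})=0$ to close the system, thereby producing a linear, homogeneous, first-order symmetric hyperbolic system with smooth coefficients. The prescribed data $Z|_{\mathcal{S}}=0$, $\nabla_{EE'}Z|_{\mathcal{S}}=0$ are exactly trivial initial data for this reduction, and the standard energy estimate for symmetric hyperbolic systems gives $Z\equiv 0$ on $D(\mathcal{S})$. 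Checking that the reduced system is genuinely symmetric hyperbolic --- equivalently, that the component-wise $\square$ principal part admits the usual energy identity --- is the only point demanding care; the rest is bookkeeping already contained in the derivation of the wave equations.
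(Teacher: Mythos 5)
Your proposal is correct and follows essentially the same route as the paper: trivial Cauchy data for the homogeneous system of wave equations satisfied by the zero-quantities forces them to vanish on the development by uniqueness, and the converse follows by restricting the identically vanishing zero-quantities of an actual (aligned) Killing spinor to $\mathcal{S}$. The only difference is that you spell out the first-order symmetric-hyperbolic reduction needed because the right-hand sides involve first derivatives of the zero-quantities, a point the paper's proof leaves implicit.
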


\begin{proof}
The initial data for the homogeneous system of wave equations for the
fields $H_{A'ABC}$, $\Theta_{AB}$, $\zeta_{AA'}$ and $S_{AA'BB'}$
given by equations \eqref{WaveEquation:H}, \eqref{WaveEquation:Theta},
\eqref{WaveEquation:Zeta} and \eqref{WaveEquation:S} 
consists of the values of these fields and their normal derivatives at
the Cauchy surface $\mathcal{S}$. Because of the homogeneity of
the equations, the unique solution to these
equations with vanishing initial data is given by
\[
H_{A'ABC}=0, \quad \Theta_{AB}=0, \quad \zeta_{AA'}=0, \quad S_{AA'BB'}=0.
\]
Thus, if this is the case, the spinor $\kappa_{AB}$ satisfies the Killing
equation on $\mathcal{M}$ and, accordingly, it is a Killing
spinor. Conversely, given a Killing spinor $\kappa_{AB}$ over
$\mathcal{M}$, its restriction to $\mathcal{S}$ satisfies the
conditions \eqref{DataCondition1}-\eqref{DataCondition4}. 
\end{proof}

\begin{remark}
{\em As the spinorial zero-fields $H_{A'ABC}$, $\Theta_{AB}$,
  $\zeta_{AA'}$ and $S_{AA'BB'}$ can be expressed in terms of the
  spinor $\kappa_{AB}$, it follows that the conditions
  \eqref{DataCondition1}-\eqref{DataCondition4} are, in fact, conditions on
  $\kappa_{AB}$, and its (spacetime) covariant derivatives up to third
  order. In the next section it will be shown how these conditions can
be expressed in terms of objects intrinsic to the hypersurface
$\mathcal{S}$.}
\end{remark}

\section{The Killing spinor data equations}
\label{Section:KillingSpinorDataEquation}

The purpose of this section is to show how the conditions
\eqref{DataCondition1}-\eqref{DataCondition4} of Theorem
\ref{Theorem:SpacetimeConditionKillingSpinor} can be reexpressed as
conditions which are intrinsic to the hypersurface $\mathcal{S}$. To
this end we make use of the space-spinor formalism outlined in
\cite{BaeVal10b} with some minor notational changes.

\subsection{The space-spinor formalism}

In what follows assume that the spacetime $(\mathcal{M},\bmg)$
obtained as the development of Cauchy initial data
$(\mathcal{S},\bmh,\bmK)$ can be covered by a congruence of timelike
curves with tangent vector $\tau^a$ satisfying the normalisation
condition $\tau_a \tau^a=2$ ---the reason for normalisation will be
clarified in the following ---see equation \eqref{ContractionEpsilons}.  Associated to the vector $\tau^a$ one has
the projector
 \[
 h_a{}^b \equiv \delta_a{}^b - \frac{1}{2}\tau_a\tau^b 
 \]
projecting tensors into the distribution $\langle {\bm \tau}
\rangle^\perp$ of hyperplanes orthogonal to $\tau^a$. 

\begin{remark}
\label{Remark:HypersurfaceOrthogonality}
{\em The congruence
of curves needs not to be hypersurface orthogonal ---however, for
convenience it will be assumed that the vector field $\tau^a$ is
orthogonal to the Cauchy hypersurface $\mathcal{S}$. }
\end{remark}

Now, let $\tau^{AA'}$ denote the spinorial counterpart of
the vector $\tau^a$ ---by definition one has that
\begin{equation}
\tau_{AA'} \tau^{AA'} = 2.
\label{NormalisationTau}
\end{equation}
Let $\{o^A,\iota^A\}$ denote a normalised spin-dyad satisfying
$o_A\iota^A=1$. In the following we restrict the
attention to spin-dyads such that
\begin{equation}
\tau^{AA'}= o^A\bar{o}^{A'} + \iota^A\bar{\iota}^{A'}.
\label{ExpansionTau}
\end{equation}
It follows then that
\begin{equation}
\tau_{AA'} \tau^{BA'} =\delta_A{}^B,
\label{ContractionEpsilons}
\end{equation}
consistent with the normalisation condition
\eqref{NormalisationTau}. As a consequence of this relation, the
spinor $\tau^{AA'}$ can be used to introduce a formalism in which all
primed indices in spinors and spinorial equations are replaced by
unprimed indices by suitable contractions with $\tau_A{}^{A'}$.

\begin{remark}
{\em The set of transformations on the dyad $\{o^A,\iota^A\}$
  preserving the expansion \eqref{ExpansionTau} is given by the group $SU(2,\mathbb{C})$.}
\end{remark}

\subsubsection{The Sen connection}
The \emph{space-spinor} counterpart of the spinorial covariant
derivative 
$\nabla_{AA'}$ is defined as 
\begin{equation}
\nabla_{AB} \equiv \tau_B{}^{A'}\nabla_{AA'}.
\label{Definition:SpaceSpinorNabla}
\end{equation}
The derivative operator $\nabla_{AB}$ can be decomposed in irreducible
terms as
\begin{equation}
\nabla_{AB} = \frac{1}{2}\epsilon_{AB}\mathcal{P} +\mathcal{D}_{AB}
\label{Decomposition:SpaceSpinorNabla}
\end{equation}
where
\[
\mathcal{P} \equiv \tau^{AA'}\nabla_{AA'} = \nabla_Q{}^Q, \qquad \mathcal{D}_{AB}
\equiv \tau_{(A}{}^{A'} \nabla_{B)A'} = \nabla_{(AB)}. 
\]
The operator $\mathcal{P}$ is the directional derivative of
$\nabla_{AA'}$ in the direction of $\tau^{AA'}$ while
$\mathcal{D}_{AB}$ corresponds to the so-called \emph{Sen connection
  of the covariant derivative $\nabla_{AA'}$ implied by
  $\tau^{AA'}$}.

\subsubsection{The acceleration and the extrinsic curvature}
 Of particular relevance in the subsequent discussion
is the decomposition of the covariant derivative of the spinor
$\tau_{BB'}$, namely $\nabla_{AA'} \tau_{BB'}$. A calculation readily
shows that the content of this derivative is encoded in the spinors
\[
K_{AB} \equiv \tau_B{}^{A'} \mathcal{P} \tau_{AA'} , \qquad
K_{ABCD}\equiv \tau_D{}^{C'} \mathcal{D}_{AB}\tau_{CC'}
\]
corresponding, respectively, to the spinorial counterparts of the
acceleration and the Weingarten tensor, expressed in tensorial terms
as
\[
K_a \equiv -\frac{1}{2} \tau^b \nabla_b \tau_a, \qquad K_{ab} \equiv -
h_a{}^c h_b{}^d \nabla_c \tau_d.
\]
 It can be readily verified
that
\begin{equation}
K_{AB}= K_{(AB)}, \qquad K_{ABCD}= K_{(AB)(CD)}.
\label{BasicSymmetryK}
\end{equation}
In the sequel it will be convenient to express $K_{ABCD}$ in terms of
its irreducible components. To this end define
\[
\Omega_{ABCD} \equiv K_{(ABCD)},\qquad \Omega_{AB}\equiv K_{(A}{}^Q{}_{B)Q}, \qquad K\equiv K_{AB}{}^{CD},
\]
so that one can define 
\[
K_{ABCD} = \Omega_{ABCD} -\frac{1}{2}\epsilon_{A(C}\Omega_{D)B}-\frac{1}{2}\epsilon_{B(C}\Omega_{D)A}- \frac{1}{3}\epsilon_{A(C} \epsilon_{D)B} K.
\]
If the vector field $\tau^a$ is hypersurface orthogonal, then one has
that $\Omega_{AB}=0$, and thus the Weingarten tensor satisfies the symmetry
$K_{ab}=K_{(ab)}$  so that it can be regarded as the extrinsic
curvature of the leaves of a foliation of the spacetime
$(\mathcal{M},\bmg)$. If this is the case, in addition to the second
symmetry in \eqref{BasicSymmetryK} one has that
\[
K_{ABCD} = K_{CDAB}. 
\]
In particular, $K_{ABCD}$ restricted to the hypersurface
$\mathcal{S}$ satisfies the above symmetry and one has
$\Omega_{AB}=0$ ---cfr. Remark \ref{Remark:HypersurfaceOrthogonality}. 

\medskip
In what follows denote by $D_{AB}=D_{(AB)}$ the spinorial counterpart of the
Levi-Civita connection of the metric $\bmh$ on $\mathcal{S}$. The Sen
connection $\mathcal{D}_{AB}$ and the Levi-Civita connection $D_{AB}$
are related to each other through the spinor $K_{ABCD}$. For example,
for a valence 1 spinor $\pi_A$ one has that 
\[
\mathcal{D}_{AB} \pi_C = D_{AB}\pi_C + \frac{1}{2}K_{ABC}{}^Q\pi_Q,
\]
with the obvious generalisations for higher order spinors.

\subsubsection{Hermitian conjugation}
Given a spinor $\pi_A$, its \emph{Hermitian conjugate} is defined as
\[
\widehat{\pi}_A \equiv \tau_A{}^{Q'} \bar{\pi}_{Q'}.
\]
This operation can be extended in the obvious way to higher valence
pairwise symmetric spinors. The operation of Hermitian conjugation
allows to introduce a notion of \emph{reality}. Given spinors
$\nu_{AB}=\nu_{(AB)}$ and $\xi_{ABCD}=\xi_{(AB)(CD)}$, we say that
they are \emph{real} if and only if 
\[
\widehat{\nu}_{AB}= -\nu_{AB}, \qquad \widehat{\xi}_{ABCD}= \xi_{ABCD}.
\]
If the spinors are real then it can be shown that there exist real
spatial 3-dimensional tensors $\nu_i$ and $\xi_{ij}$ such that
$\nu_{AB}$ and $\xi_{ABCD}$ are their spinorial counterparts. We also
note that 
\[
\nu_{AB} \widehat{\nu}^{AB} \geq 0, \qquad \xi_{ABCD}\widehat{\xi}^{ABCD}\geq 0
\] 
independently of whether $\nu_{AB}$ and $\xi_{ABCD}$ are real or not.

Finally, it is observed that while the Levi-Civita covariant
derivative $D_{AB}$ is real in the sense that
\[
\widehat{D_{AB}\pi_C} = - D_{AB}\hat{\pi}_C,
\]
the Sen connection $\mathcal{D}_{AB}$ is not. More precisely, one has
that
\[
\widehat{\mathcal{D}_{AB}\pi_C} =- \mathcal{D}_{AB}\widehat{\pi}_C +
\frac{1}{2}K_{ABC}{}^Q \hat{\pi}_Q.
\]

\subsubsection{Commutators}
The main analysis of this section will require a systematic use of the
commutators of the the covariant derivatives
$\mathcal{P}$ and $\mathcal{D}_{AB}$. In order to discuss these in a
convenient manner it is convenient to define the Hermitian conjugate
of the Penrose box operator $\square_{AB} \equiv
\nabla_{C'(A}\nabla_{B)}{}^{C'}$ in the natural manner as
\[
\widehat{\square}_{AB} \equiv \tau_A{}^{A'}\tau_B{}^{B'} \square_{A'B'}.
\]
From the definition of $\square_{A'B'}$ it follows that 
\[
\widehat{\square}_{AB} \pi_C =\tau_A{}^{A'} \tau_B{}^{B'} \Phi_{FCA'B'} \pi^F.
\]

In terms of $\square_{AB}$ and $\widehat{\square}_{AB}$, the
commutators of $\mathcal{P}$ and $\mathcal{D}_{AB}$ read
\begin{subequations}
\begin{eqnarray}
&& [\mathcal{P},\mathcal{D}_{AB}] = \widehat{\square}_{AB} -
   \square_{AB} -\frac{1}{2}K_{AB}\mathcal{P} +
   K^D{}_{(A}\mathcal{D}_{B)D} - K_{ABCD}\mathcal{D}^{CD}, \label{SpatialCommutator1}\\
&&
   [\mathcal{D}_{AB},\mathcal{D}_{CD}]=\frac{1}{2}\big(\epsilon_{A(C}\square_{D)B}+\epsilon_{B(C}\square_{D)A}\big)+\frac{1}{2}\big(\epsilon_{A(C}\widehat{\square}_{D)B}+\epsilon_{B(C}\widehat{\square}_{D)A}\big)
  \nonumber\\
&& \hspace{3cm}+\frac{1}{2}\big(K_{CDAB}\mathcal{P}-K_{ABCD}\mathcal{P}\big)+K_{CDF(A}\mathcal{D}_{B)}{}^{F}-K_{ABF(C}\mathcal{D}_{D)}{}^{F}.\label{SpatialCommutator2}
\end{eqnarray}
\end{subequations}

\begin{remark}
\label{Remark:TorsionSenConnection}
{\em  Observe that on the hypersurface $\mathcal{S}$ the commutator
  \eqref{SpatialCommutator2} involves only objects intrinsic to
  $\mathcal{S}$. Notice, also, that the Sen connection $\mathcal{D}_{AB}$
has torsion. Namely, for a scalar $\phi$ one has that 
\[
[\mathcal{D}_{AB},\mathcal{D}_{CD}] \phi =
K_{CDF(A}\mathcal{D}_{B)}{}^{F}\phi-K_{ABF(C}\mathcal{D}_{D)}{}^{F} \phi.
\]}
\end{remark}

\subsection{Basic decompositions}
The purpose of this section is to provide a systematic discussion of
the irreducible decompositions of the various spinorial fields and equations that
will be required in the subsequent analysis.

\subsubsection{Space-spinor decomposition of the Killing spinor and
  Maxwell equations}
For reference, we provide a brief discussion of the space-spinor
decomposition of the Killing equation, equation \eqref{KillingSpinorEquation}, and the
Maxwell equation, equation \eqref{EinsteinMaxwell2}. 

\medskip
Contracting the Killing spinor equation \eqref{KillingSpinorEquation}
in the form $\nabla_{(A|A'|}\kappa_{CD)}=0$ with $\tau_B{}^{A'}$ one
obtains
\[
\nabla_{(A|B|}\kappa_{CD)}=0
\]
where $\nabla_{AB}$ is the differential operator defined in equation
\eqref{Definition:SpaceSpinorNabla}. Using the decomposition
\eqref{Definition:SpaceSpinorNabla} one further obtains
\[
\frac{1}{2}\epsilon_{(A|B|} \mathcal{P} \kappa_{CD)} + \mathcal{D}_{(A|B|}\kappa_{CD)}=0.
\]
Taking, respectively, the trace and the totally symmetric part of the
above expression one readily obtains the equations
\begin{subequations}
\begin{eqnarray}
&& \mathcal{P} \kappa_{AB} + \mathcal{D}_{(A}{}^Q\kappa_{B)Q}=0, \label{KillingSpinorEquationDecomposition1}\\
&& \mathcal{D}_{(AB}\kappa_{CD)}=0. \label{KillingSpinorEquationDecomposition2}
\end{eqnarray}
\end{subequations}
Equation \eqref{KillingSpinorEquationDecomposition1} can be
naturally interpreted as an evolution equation for the spinor
$\kappa_{AB}$ while equation
\eqref{KillingSpinorEquationDecomposition2} plays the role of a
\emph{constraint}.

\medskip
A similar calculation applied to the Maxwell equation, equation
\eqref{EinsteinMaxwell2}, in the form $\nabla^A{}_{A'}\phi_{AC}=0$
yields the equations
\begin{subequations}
\begin{eqnarray}
&& \mathcal{P}\phi_{AB} - 2 \mathcal{D}^Q{}_{(A}\phi_{B)Q}=0, \label{MaxwellEquationDecomposition1}\\
&& \mathcal{D}^{AB}\phi_{AB}=0.\label{MaxwellEquationDecomposition2}
\end{eqnarray}
\end{subequations}

Again, equation \eqref{MaxwellEquationDecomposition1} is an
evolution equation for the Maxwell spinor $\phi_{AB}$ while
\eqref{MaxwellEquationDecomposition2} is the spinorial version
of the electromagnetic Gauss constraint. 

\begin{remark}
{\em The operation of Hermitian conjugation can be used to define, respectively, the
  \emph{electric} and \emph{magnetic} parts of the Maxwell spinor:}
\[
E_{AB} \equiv \frac{1}{2}\big(\widehat{\phi}_{AB}-\phi_{AB} \big),
\qquad B_{AB} \equiv \frac{\mbox{\em i}}{2}\big(
\phi_{AB} + \widehat{\phi}_{AB}\big).
\]
{\em It can be readily verified that}
\[
\widehat{E}_{AB}=- E_{AB}, \qquad \widehat{B}_{AB} = -B_{AB}.
\]
{\em Thus, $E_{AB}$ and $B_{AB}$ are the spinorial counterparts of
3-dimensional tensors $E_i$ and $B_i$ ---the electric and magnetic
parts of the Faraday tensor with respect to the normal to the
hypersurface $\mathcal{S}$. }
\end{remark}

\subsubsection{The decomposition of the components of the curvature}
Crucial for our subsequent discussion will be the fact that the
restriction of the Weyl spinor $\Psi_{ABCD}$ to an hypersurface
$\mathcal{S}$ can be expressed in terms of quantities intrinsic to the
hypersurface. 

In analogy to the case of the Maxwell spinor $\phi_{AB}$, the
Hermitian conjugation operation can be used to decompose the Weyl
spinor $\Psi_{ABCD}$ into its electric and magnetic parts with respect
to the normal to $\mathcal{S}$ as
\[
E_{ABCD}\equiv \frac{1}{2}\big( \Psi_{ABCD}+ \hat{\Psi}_{ABCD}\big),
\qquad B_{ABCD} \equiv \frac{\mbox{i}}{2}\big( \hat{\Psi}_{ABCD} - \Psi_{ABCD}\big)
\]
so that 
\[
\Psi_{ABCD} =E_{ABCD} + \mbox{i} B_{ABCD}. 
\]
The electrovacuum Bianchi identity \eqref{BianchiIdentity} implies on $\mathcal{S}$
the constraint
\[
\mathcal{D}^{AB}\Psi_{ABCD} =-2 \widehat{\phi}^{AB} \mathcal{D}_{AB}\phi_{CD}.
\]
Finally, using the Gauss-Codazzi and Codazzi-Mainardi equations one finds that
\begin{eqnarray*}
&& E_{ABCD} = -r_{(ABCD)} + \frac{1}{2}\Omega_{(AB}{}^{PQ}
   \Omega_{CD)PQ} - \frac{1}{6}\Omega_{ABCD} K + E_{(AB}E_{CD)}, \\
&& B_{ABCD}= -\mbox{i} D^Q{}_{(A}\Omega_{BCD)Q},
\end{eqnarray*}
where $r_{ABCD}$ is the spinorial counterpart of the Ricci tensor of
the intrinsic metric of the hypersurface $\mathcal{S}$. 

\subsubsection{Decomposition of the spatial derivatives of the Killing
  spinor candidate}
Given a spinor $\kappa_{AB}$ defined on the Cauchy hypersurface
$\mathcal{S}$, it will prove convenient to define:
\begin{subequations}
\begin{eqnarray}
&&\xi\equiv\mathcal{D}^{AB}\kappa_{AB}, \label{Definition:xi0}\\
&&\xi_{AB}\equiv\frac{3}{2}\mathcal{D}_{(A}{}^{C}\kappa_{B)C}, \label{Definition:xi2}\\
&&\xi_{ABCD}\equiv\mathcal{D}_{(AB}\kappa_{CD)}. \label{Definition:xi4}
\end{eqnarray}
\end{subequations}
These spinors correspond to the irreducible components of the Sen derivative of $\kappa_{AB}$, as follows:
\[
\mathcal{D}_{AB}\kappa_{CD}=\xi_{ABCD}-\frac{1}{3}\epsilon_{A(C}\xi_{D)B}-\frac{1}{3}\epsilon_{B(C}\xi_{D)A}-\frac{1}{3}\epsilon_{A(C}\epsilon_{D)B}\xi.
\]

Using the commutation relation for the Sen derivatives, equation
\eqref{SpatialCommutator2}, we can also calculate the derivatives of
$\xi$ and $\xi_{AB}$. The irreducible components of
$\mathcal{D}_{AB}\xi_{CD}$ are given on $\mathcal{S}$ ---where $\Omega_{AB}=0$--- by
\begin{subequations}
\begin{align}
\mathcal{D}_{AB}\xi^{AB}&=-\frac{1}{2}K\xi+\frac{3}{4}\Omega^{ABCD}\xi_{ABCD}+\frac{3}{2}\Theta_{AB}\widehat{\phi}^{AB},\label{totalcontract} \\
\mathcal{D}_{A(B}\xi_{C)}{}^{A}&=-\mathcal{D}_{BC}\xi-\frac{3}{2}\Psi_{BCAD}\kappa^{AD}+\frac{2}{3}K\xi_{BC}+\frac{1}{2}\Omega_{BCAD}\xi^{AD}-\frac{3}{2}\Omega_{(B}{}^{ADF}\xi_{C)ADF}\notag \\
&\qquad+\frac{3}{2}\mathcal{D}_{AD}\xi_{BC}{}^{AD}-3\Theta_{A(B}\hat{\phi}_{C)}{}^{A},
  \label{symcontract}\\
\mathcal{D}_{(AB}\xi_{CD)}&=3\Psi_{F(ABC}\kappa_{D)}{}^{F}+K\xi_{ABCD}-\frac{1}{2}\xi\Omega_{ABCD}+\Omega_{(ABC}{}^{F}\xi_{D)F}\notag \\
&\qquad-\frac{3}{2}\Omega_{(AB}{}^{PQ}\xi_{CD)PQ}+3\mathcal{D}^{F}{}_{(A}\xi_{BCD)F}-3\Theta_{(AB}\widehat{\phi}_{CD)}\label{totalsym},
\end{align}
\end{subequations}
where we have also used the \emph{Hermitian conjugate} of the Maxwell spinor, defined by
\begin{equation*}
\widehat{\phi}_{AB}\equiv\tau_{A}{}^{A'}\tau_{B}{}^{B'}\bar{\phi}_{A'B'}.
\end{equation*}
Note that in \eqref{symcontract}, the term $\mathcal{D}_{AB}\xi$
appears ---there is no independent equation for the Sen derivative of
$\xi$.

We can also calculate the second order derivatives of $\xi$. Again, on
the hypersurface $\mathcal{S}$ these take the form:
\begingroup
\allowdisplaybreaks 
\begin{subequations}
\begin{align}
\mathcal{D}_{AB}\mathcal{D}^{AB}\xi =\;& -  \frac{1}{6} K^2 \xi+\frac{1}{2} K\widehat{\phi}^{AB} \Theta_{AB} - 2 \widehat{\phi}^{AB} \phi_{AB} \xi + \frac{2}{3} \xi^{AB} \mathcal{D}_{AB}K + 3 \Theta^{AB} \mathcal{D}_{BC}\widehat{\phi}_{A}{}^{C} \notag\\
&  - 4 \widehat{\phi}^{AB} \phi_{A}{}^{C} \xi_{BC} -  \frac{3}{2} \Psi^{ABCD} \xi_{ABCD} + 3 \widehat{\phi}^{AB} \phi^{CD} \xi_{ABCD} - 3 \widehat{\phi}^{AB} \Theta^{CD} \Omega_{ABCD} \notag\\
& -  \frac{1}{2} \Omega_{ABCD} \Omega^{ABCD} \xi + \frac{5}{4} K \Omega^{ABCD} \xi_{ABCD} + 3 \kappa^{AB} \Psi_{A}{}^{CDF} \Omega_{BCDF} \notag\\
&  -  \frac{3}{2} \Omega_{AB}{}^{FG} \Omega^{ABCD} \xi_{CDFG} - 3 \kappa^{AB} \widehat{\phi}^{CD} \mathcal{D}_{BD}\phi_{AC} + 3 \kappa^{AB} \widehat{\phi}_{A}{}^{C} \mathcal{D}_{CD}\phi_{B}{}^{D} \notag\\
& -  \frac{3}{2} \kappa^{AB} \mathcal{D}_{CD}\Psi_{AB}{}^{CD}  + \frac{1}{2} \xi^{AB} \mathcal{D}_{CD}\Omega_{AB}{}^{CD} + \frac{3}{2} \mathcal{D}_{CD}\mathcal{D}_{AB}\xi^{ABCD} \notag\\
& + \frac{3}{2} \xi^{ABCD} \mathcal{D}_{DF}\Omega_{ABC}{}^{F} -  \frac{9}{2} \Omega^{ABCD} \mathcal{D}_{DF}\xi_{ABC}{}^{F}, \label{contractxi}\\
\mathcal{D}^{C}{}_{(A}\mathcal{D}_{B)C}\xi =&\;\frac{1}{2}\Omega_{ABCD}\mathcal{D}^{CD}\xi -\frac{1}{3}K\mathcal{D}_{AB}\xi, \label{symcontractxi}\\
\mathcal{D}_{(AB}\mathcal{D}_{CD)}\xi =&\; \frac{1}{3} \widehat{\phi}^{EF} \Theta_{EF} \Omega_{ABCD} -  \Psi_{ABCD} \xi -  \frac{5}{9} K \Omega_{ABCD} \xi + \frac{1}{6} \Omega^{EFPQ} \Omega_{ABCD} \xi_{EFPQ}\notag \\
& + \frac{8}{9} K^2 \xi_{ABCD} + \frac{1}{3} \xi \mathcal{D}_{E(A}\Omega_{BCD)}{}^{E} -  \frac{10}{3} K \mathcal{D}_{E(A}\xi_{BCD)}{}^{E} \notag\\
& + \frac{3}{2} \mathcal{D}_{(AB}\mathcal{D}_{|EF|}\xi_{CD)}{}^{EF} + \frac{3}{2} \mathcal{D}_{F(A}\mathcal{D}_{B|E|}\xi_{CD)}{}^{EF} + \frac{1}{2} \mathcal{D}_{(A|F}\mathcal{D}_{E|}{}^{F}\xi_{BCD)}{}^{E} \notag\\
& + \frac{8}{3} K \kappa_{(A}{}^{E}\Psi_{BCD)E} - \frac{3}{2} \kappa_{(A}{}^{E}\mathcal{D}_{B|F|}\Psi_{CD)E}{}^{F} -  \frac{3}{2} \kappa^{EF}\mathcal{D}_{(AB}\Psi_{CD)EF} \notag\\
& -  \frac{1}{2} \kappa^{EF}\mathcal{D}_{F(A}\Psi_{BCD)E} + 2 \xi \widehat{\phi}_{(AB}\phi_{CD)} -  \frac{8}{3} K \widehat{\phi}_{(AB}\Theta_{CD)} + \Theta_{(AB}\mathcal{D}_{C|E|}\widehat{\phi}_{D)}{}^{E} \notag\\
& + 3 \Theta_{(A}{}^{E}\mathcal{D}_{BC}\widehat{\phi}_{D)E} + \Theta_{(A}{}^{E}\mathcal{D}_{B|E|}\widehat{\phi}_{CD)} + 2 \Psi_{E(ABC}\xi_{D)}{}^{E} + \frac{1}{6} \xi \Omega_{(AB}{}^{EF}\Omega_{CD)EF} \notag\\
& -  \frac{14}{9} K \Omega_{E(ABC}\xi_{D)}{}^{E} -  \frac{5}{3} K \Omega_{(AB}{}^{EF}\xi_{CD)EF} + \frac{2}{3} \Omega_{E(ABC}\mathcal{D}_{D)}{}^{E}\xi \notag\\
& + \frac{3}{2} \Omega_{(ABC}{}^{E}\mathcal{D}^{FP}\xi_{D)EFP} - \Omega_{(AB}{}^{EF}\mathcal{D}_{C}{}^{P}\xi_{D)EFP} + \frac{1}{2} \Omega_{(AB}{}^{EF}\mathcal{D}_{|FP|}\xi_{CD)E}{}^{P} \notag\\
& -  \frac{3}{2} \Omega_{(A}{}^{EFP}\mathcal{D}_{BC}\xi_{D)EFP} + \frac{1}{2} \Omega_{(A}{}^{EFP}\mathcal{D}_{B|P|}\xi_{CD)EF} + \frac{2}{3} \xi_{(AB}\mathcal{D}_{CD)}K \notag\\
& + \frac{1}{2} \xi_{(A}{}^{E}\mathcal{D}_{B|F|}\Omega_{CD)E}{}^{F} + \frac{1}{2} \xi^{EF}\mathcal{D}_{(AB}\Omega_{CD)EF} + \frac{1}{6} \xi^{EF}\mathcal{D}_{F(A}\Omega_{BCD)E} \notag\\
& + \frac{2}{3} \xi_{E(ABC}\mathcal{D}_{D)}{}^{E}K + \frac{1}{2} \xi_{(AB}{}^{EF}\mathcal{D}_{C|P|}\Omega_{D)EF}{}^{P} + \frac{3}{2} \xi_{(A}{}^{EFP}\mathcal{D}_{BC}\Omega_{D)EFP} \notag\\
& + \frac{1}{2} \xi_{(A}{}^{EFP}\mathcal{D}_{B|P|}\Omega_{CD)EF} + \kappa_{(A}{}^{E}\widehat{\phi}_{BC}\mathcal{D}_{D)F}\phi_{E}{}^{F} - 3 \kappa_{(A}{}^{E}\widehat{\phi}_{B}{}^{F}\mathcal{D}_{CD)}\phi_{EF} \notag\\
& + 2 \kappa_{(A}{}^{E}\widehat{\phi}_{B}{}^{F}\mathcal{D}_{C|F|}\phi_{D)E} + \kappa^{EF}\widehat{\phi}_{(AB}\mathcal{D}_{C|F|}\phi_{D)E} + 3 \kappa^{EF}\widehat{\phi}_{E(A}\mathcal{D}_{BC}\phi_{D)F} \notag\\
& +  \frac{1}{2} \kappa_{E(A}\Psi_{B}{}^{EFP}\Omega_{CD)FP} -  \frac{1}{2} \kappa^{EF}\Psi^{P}{}_{E(AB}\Omega_{CD)FP} + \frac{3}{2} \kappa^{EF}\Psi^{P}{}_{EF(A}\Omega_{BCD)P} \notag\\
& + \frac{10}{3} \widehat{\phi}_{(AB}\phi_{C}{}^{E}\xi_{D)E} + \frac{2}{3} \widehat{\phi}_{(A}{}^{E}\phi_{BC}\xi_{D)E} + \frac{2}{3} \phi_{E(A}\widehat{\phi}_{B}{}^{E}\xi_{CD)} -  \widehat{\phi}_{(AB}\phi^{EF}\xi_{CD)EF} \notag\\
& + \widehat{\phi}_{(A}{}^{E}\phi_{B}{}^{F}\xi_{CD)EF} + 3 \phi_{E}{}^{F}\widehat{\phi}_{(A}{}^{E}\xi_{BCD)F} + \frac{1}{6} \widehat{\phi}_{(AB}\Theta^{EF}\Omega_{CD)EF} \notag\\
& + \frac{2}{3} \widehat{\phi}_{(A}{}^{E}\Theta_{B}{}^{F}\Omega_{CD)EF} + \frac{3}{2} \Theta_{E}{}^{F}\widehat{\phi}_{(A}{}^{E}\Omega_{BCD)F} + \frac{1}{6} \widehat{\phi}^{EF}\Theta_{(AB}\Omega_{CD)EF} \notag\\
& + \frac{3}{2} \widehat{\phi}^{EF}\Theta_{E(A}\Omega_{BCD)F} + \frac{1}{2} \Omega_{(ABC}{}^{P}\Omega_{D)EFP}\xi^{EF} + \frac{1}{6} \Omega_{EFP(A}\Omega_{BC}{}^{FP}\xi_{D)}{}^{E}  \notag\\
& - \frac{3}{4} \Omega_{(ABC}{}^{E}\Omega_{D)}{}^{FPQ}\xi_{EFPQ} -  \frac{3}{4} \Omega_{E}{}^{FPQ}\Omega_{(ABC}{}^{E}\xi_{D)FPQ}  + \frac{1}{12} \Omega_{(AB}{}^{EF}\Omega_{CD)}{}^{PQ}\xi_{EFPQ} \notag\\
& + \frac{1}{3} \Omega_{E(A}{}^{PQ}\Omega_{BC}{}^{EF}\xi_{D)FPQ} + \frac{1}{12} \Omega_{EF}{}^{PQ}\Omega_{(AB}{}^{EF}\xi_{CD)PQ} \label{symxi}.
\end{align}
\end{subequations}
\endgroup

\begin{remark}
{\em It is of interest to remark that equation \eqref{symcontractxi} is
just the statement that the Sen connection has torsion  ---cf. Remark
\ref{Remark:TorsionSenConnection}. }
\end{remark}

An important and direct consequence of the above expressions is the following:

\begin{lemma}
\label{Lemma:DerivativesKappa}
Assume that $\Omega_{AB}=0$ and $\mathcal{D}_{(AB}\kappa_{CD)}=0$ on $\mathcal{S}$. Then
\[
\mathcal{D}_{AB}\mathcal{D}_{CD}\mathcal{D}_{EF}\kappa_{GH}=H_{ABCDEFGH}
\]
on $\mathcal{S}$, where $H_{ABCDEFGH}$ is a linear combination of $\kappa_{AB}$, $\mathcal{D}_{AB}\kappa_{CD}$ and $\mathcal{D}_{AB}\mathcal{D}_{CD}\kappa_{EF}$ with coefficients
depending on $\Psi_{ABCD}$, $K_{ABCD}$, $\phi_{AB}$,
$\widehat{\phi}_{AB}$ and $\mathcal{D}_{AB}\phi_{CD}$.
\end{lemma}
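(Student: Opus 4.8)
The plan is to build the third Sen derivative iteratively, feeding in the irreducible decomposition of $\mathcal{D}_{AB}\kappa_{CD}$ together with the derivatives of $\xi$, $\xi_{AB}$ and $\xi_{ABCD}$ (defined in \eqref{Definition:xi0}--\eqref{Definition:xi4}) that have already been computed. The two hypotheses enter directly: $\mathcal{D}_{(AB}\kappa_{CD)}=0$ is exactly $\xi_{ABCD}=0$ on $\mathcal{S}$, while $\Omega_{AB}=0$ is what makes the decomposition formulas \eqref{totalcontract}--\eqref{symxi} valid there. The key preliminary observation is that, since $\mathcal{D}_{AB}$ differs from the intrinsic operator $D_{AB}$ only by algebraic extrinsic-curvature terms, any field that vanishes on all of $\mathcal{S}$ has \emph{every} iterated Sen derivative vanishing on $\mathcal{S}$; in particular $\mathcal{D}_{PQ}\xi_{ABCD}=0$ and $\mathcal{D}_{PQ}\mathcal{D}_{RS}\xi_{ABCD}=0$ on $\mathcal{S}$, so every term carrying $\xi_{ABCD}$ or its Sen derivatives may be struck from \eqref{totalcontract}--\eqref{symxi} at the outset. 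A second structural point I would record first is that on $\mathcal{S}$ the symmetry $K_{ABCD}=K_{CDAB}$ kills the $\mathcal{P}$-term in the commutator \eqref{SpatialCommutator2}, so commuting Sen derivatives never reintroduces the normal derivative $\mathcal{P}$ and keeps the whole computation intrinsic.

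I would then proceed by order. The decomposition of $\mathcal{D}_{AB}\kappa_{CD}$ in terms of $\xi,\xi_{AB},\xi_{ABCD}$ shows that, on $\mathcal{S}$, the first derivative $\mathcal{D}_{AB}\kappa_{CD}$ is an $\epsilon$-combination of $\xi$ and $\xi_{AB}$ alone. Differentiating once and substituting the three irreducible parts of $\mathcal{D}_{CD}\xi_{AB}$ given by \eqref{totalcontract}, \eqref{symcontract} and \eqref{totalsym} (after the $\xi_{ABCD}$-terms are discarded), the second derivative $\mathcal{D}_{CD}\mathcal{D}_{EF}\kappa_{GH}$ is expressed as a linear combination of $\kappa_{AB}$, the first-order quantities $\xi,\xi_{AB}$ and the one genuinely new second-order object $\mathcal{D}_{AB}\xi$, with coefficients built from $\Psi_{ABCD}$, $K_{ABCD}$ and the quadratic Maxwell term $\Theta_{AB}\widehat\phi_{CD}$, the spinor $\Theta_{AB}=2\kappa_{(A}{}^{Q}\phi_{B)Q}$ being itself algebraic in $\kappa$ and $\phi$.

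Differentiating this order-two expression once more yields the third derivative. The only contributions that are a priori of third order in $\kappa$ are those in which $\mathcal{D}_{AB}$ lands on $\xi_{ABCD}$ (or a derivative thereof), which vanish on $\mathcal{S}$ by the preliminary observation, and the object $\mathcal{D}_{AB}\mathcal{D}_{CD}\xi$. Since the latter carries two symmetric index pairs, it is completely determined by its three irreducible components — the total trace \eqref{contractxi}, the pair-antisymmetric valence-two part \eqref{symcontractxi}, and the totally symmetric part \eqref{symxi} — and each of these has already been reduced, on $\mathcal{S}$, to terms of order at most two in $\kappa$. Substituting them removes every surviving third-order contribution, leaving $\mathcal{D}_{AB}\mathcal{D}_{CD}\mathcal{D}_{EF}\kappa_{GH}$ as a linear combination of $\kappa_{AB}$, $\mathcal{D}_{AB}\kappa_{CD}$ and $\mathcal{D}_{AB}\mathcal{D}_{CD}\kappa_{EF}$, which is the assertion.

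The step I expect to be the main obstacle is the bookkeeping of the coefficients. Both \eqref{contractxi}--\eqref{symxi} and the act of differentiating the order-two expression generate derivatives of the geometric data — $\mathcal{D}_{AB}\Psi_{CDEF}$, $\mathcal{D}_{AB}\Omega_{CDEF}$, $\mathcal{D}_{AB}K$ and $\mathcal{D}_{AB}\widehat\phi_{CD}$ — none of which appear in the advertised list $\{\Psi_{ABCD},K_{ABCD},\phi_{AB},\widehat\phi_{AB},\mathcal{D}_{AB}\phi_{CD}\}$. These must be reabsorbed using the constraints available on $\mathcal{S}$: the electrovacuum Bianchi identity \eqref{BianchiIdentity}, which on $\mathcal{S}$ reads $\mathcal{D}^{AB}\Psi_{ABCD}=-2\widehat\phi^{AB}\mathcal{D}_{AB}\phi_{CD}$, disposes of the divergences of $\Psi$; the Codazzi--Mainardi relation $B_{ABCD}=-\mathrm{i}\,D^{Q}{}_{(A}\Omega_{BCD)Q}$ together with the momentum constraint reduces the surviving contractions of $\mathcal{D}_{AB}\Omega_{CDEF}$ and $\mathcal{D}_{AB}K$ to curvature and to quadratic Maxwell terms; and the non-reality of $\mathcal{D}_{AB}$ converts $\mathcal{D}_{AB}\widehat\phi_{CD}$ into $\mathcal{D}_{AB}\phi_{CD}$ modulo $K_{ABCD}$-terms. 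The delicate part is to verify that the curvature derivatives occur \emph{only} through precisely these contractions — so that no unreducible totally symmetrised derivative of $\Psi$ or $\Omega$ is left over — and hence that the coefficients genuinely close on the stated set.
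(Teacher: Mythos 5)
Your proposal is correct and is essentially the argument the paper intends: the paper's entire proof is the statement that the result follows by direct inspection of equations \eqref{totalcontract}--\eqref{totalsym} and \eqref{contractxi}--\eqref{symxi}, and your write-up simply supplies the reasons that inspection works, namely that the tangentiality of $\mathcal{D}_{AB}$ forces every term carrying $\xi_{ABCD}$ or its iterated Sen derivatives to vanish on $\mathcal{S}$, and that the only surviving genuinely third-order object, $\mathcal{D}_{AB}\mathcal{D}_{CD}\xi$, is recovered from its three irreducible components. The one point on which you part company with the paper is your final paragraph: the reabsorption of curvature derivatives that you anticipate as the main obstacle is neither needed nor fully achievable, since \eqref{contractxi} and \eqref{symxi} retain irreducible symmetrised derivatives such as $\kappa^{EF}\mathcal{D}_{(AB}\Psi_{CD)EF}$, $\xi^{EF}\mathcal{D}_{(AB}\Omega_{CD)EF}$ and $\xi_{(AB}\mathcal{D}_{CD)}K$, which the Bianchi and Codazzi constraints (controlling only divergences) cannot remove; the coefficient list in the statement is therefore to be read loosely as including first Sen derivatives of the listed geometric fields, which is harmless because the only use made of the lemma, in Lemma \ref{SKSEinjectivity}, is a decay estimate on those coefficients.
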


\begin{proof}
The proof of the above result follows from direct inspection of
equations \eqref{totalcontract}-\eqref{totalsym} and
\eqref{contractxi}-\eqref{symxi}. 
\end{proof}

\begin{remark}
{\em We observe that the above result is strictly not true if
  $\xi_{ABCD}=\mathcal{D}_{(AB}\kappa_{CD)}\neq 0$. }
\end{remark}

\subsection{The decomposition of the Killing spinor data equations}

In this section we provide a systematic discussion of the
decomposition of the \emph{Killing initial data conditions} in Theorem
\ref{Theorem:SpacetimeConditionKillingSpinor}. The main purpose of
this decomposition is to untangle the interrelations between the
various conditions and to obtain a \emph{minimal} set of equations
which is intrinsic to the Cauchy hypersurface $\mathcal{S}$.

\medskip
For the ease of the discussion we make explicit the assumptions we
assume to hold throughout this section:

\begin{assumption}
Given a Cauchy hypersurface $\mathcal{S}$ of an electrovacuum
spacetime $(\mathcal{M},\bmg)$, we assume that the hypothesis and
conclusions of Theorem \ref{Theorem:SpacetimeConditionKillingSpinor}
hold. 
\end{assumption}

Also, to ease the calculations, without loss of generality we assume:

\begin{assumption}
The spinor $\tau^{AA'}$ which on $\mathcal{S}$ is normal to
$\mathcal{S}$ is extended off the initial hypersurface in such a way
that it is the spinorial counterpart of the tangent vector to a congruence of
$\bmg$-geodesics. Accordingly one has that $K_{AB}=0$ ---that is, the
acceleration vanishes.
\end{assumption}

\subsubsection{Decomposing $H_{A'ABC}=0$}
Splitting the expression $\tau_{D}{}^{A'}H_{A'ABC}$ into irreducible
parts, and using the definitions
\eqref{Definition:xi0}-\eqref{Definition:xi4} gives that the condition
$H_{A'ABC}=0$ is equivalent to
\begin{subequations}
\begin{eqnarray}
&&\xi_{ABCD}=0, \label{Hdecompose1}\\
&&\mathcal{P}\kappa_{AB}=-\displaystyle\frac{2}{3}\xi_{AB}. \label{Hdecompose2}
\end{eqnarray}
\end{subequations}

Equation \eqref{Hdecompose1} is a condition intrinsic to the
hypersurface while \eqref{Hdecompose2} is extrinsic ---i.e. it
involves derivatives in the direction normal to $\mathcal{S}$.

\begin{remark}
{\em Observe that the conditions \eqref{Hdecompose1} and
  \eqref{Hdecompose2} are essentially the equations
  \eqref{KillingSpinorEquationDecomposition1} and
  \eqref{KillingSpinorEquationDecomposition2}. }
\end{remark}

\subsubsection{Decomposing $\nabla_{EE'} H_{A'ABC}=0$}

If $H_{A'ABC}=0$ on $\mathcal{S}$, it readily follows that
$\mathcal{D}_{EF}H_{A'ABC}=0$ on $\mathcal{S}$. Thus, in order
investigate the consequences of the second condition in \eqref{DataCondition1} it is
only necessary to consider the transverse derivative
$\mathcal{P}H_{A'ABC}$. It follows that
\[
\tau_{D}{}^{A'}\mathcal{P} H_{A'ABC}=\mathcal{P}\big(\tau_{D}{}^{A'}H_{A'ABC}\big)-H_{A'ABC}\mathcal{P}\tau_{D}{}^{A'}
\]
and so as $H_{A'ABC}|_{\mathcal{S}}=0$, the irreducible parts of
$\tau_{D}{}^{A'}\mathcal{P} H_{A'ABC}=0$ are given by
\begin{subequations}
\begin{align}
&\mathcal{P}\xi_{ABCD}=0, \label{PH1} \\
&\mathcal{P}^{2}\kappa_{AB}=-\frac{2}{3}\mathcal{P}\xi_{AB}.\label{PH2}
\end{align}
\end{subequations}

Taking equation \eqref{PH1} and commuting the $\mathcal{D}_{AB}$ and
$\mathcal{P}$ derivatives, and using equations \eqref{Hdecompose1} and
\eqref{Hdecompose2}, gives
\begin{align*}
\mathcal{P}\xi_{ABCD}&=\mathcal{P}\mathcal{D}_{(AB}\kappa_{CD)} \\
&=2\Psi^{F}{}_{(ABC}\kappa_{D)F}-\frac{1}{3}\xi\Omega_{ABCD}+\frac{2}{3}\Omega^{F}{}_{(ABC}\xi_{D)F} -\frac{2}{3}\mathcal{D}_{(AB}\xi_{CD)}-2\Theta_{(AB}\widehat{\phi}_{CD)}.
\end{align*}
Substituting for the derivative of $\xi_{AB}$ using \eqref{totalsym},
and using equations \eqref{Hdecompose1} and \eqref{Hdecompose2} again,
gives
\begin{equation}
\mathcal{P}\xi_{ABCD}=4\Psi^{F}{}_{(ABC}\kappa_{D)F}=0. \label{gradHdecompose}
\end{equation}

To reexpress condition \eqref{PH2}, we use the following result which
is obtained by commuting the $\mathcal{D}_{AB}$ and $\mathcal{P}$
derivatives:
\begin{align}
\mathcal{P}\xi_{AB} &= \frac{3}{2} \kappa^{CD} \Psi_{ABCD} -  3  \Theta_ {C(A}\widehat{\phi}_{B)}{}^{C}  -  \frac{1}{3} K \xi_{AB} + \frac{1}{2} \Omega_{ABCD} \xi^{CD} -  \frac{3}{2} \mathcal{D}_{C(A}\mathcal{P}\kappa_{B)}{}^{C}\label{Normxi2}.
\end{align}
Recall that the Killing spinor candidate $\kappa_{AB}$ satisfies the
homogeneous wave equation \eqref{WaveEquationKappa}. We can use the
space-spinor decomposition to split the wave operator into Sen
 and normal  derivative
operators. The result is:
\begin{align*}
\mathcal{P}^2\kappa_{AB} =& -2 \kappa^{CD} \Psi_{ABCD} + \frac{1}{3} K_{AB} \xi + \frac{2}{3} \Omega_{AB} \xi -  \tfrac{2}{3} K_{(A}{}^{C} \xi_{B)C} \\
&  -  \frac{4}{3} \Omega_{(A}{}^{C} \xi_{B)C} + K^{CD} \xi_{ABCD} + 2 \Omega^{CD} \xi_{ABCD} \\
& -  K \mathcal{P}\kappa_{AB} -  \frac{2}{3} \mathcal{D}_{AB}\xi +  \frac{4}{3} \mathcal{D}_{(A}{}^{C}\xi_{B)C} - 2 \mathcal{D}_{CD}\xi_{AB}{}^{CD}
\end{align*}
Applying conditions \eqref{Hdecompose1} and \eqref{Hdecompose2} to the
right hand side of the latter, evaluating at $\mathcal{S}$ (where
$\Omega_{AB}=0$) and setting $K_{AB}=0$ gives
\[
\mathcal{P}^2\kappa_{AB} = -2 \kappa^{CD} \Psi_{ABCD} + \frac{2}{3} K
\xi_{AB} -  \frac{2}{3} \mathcal{D}_{AB}\xi +  \frac{4}{3}
\mathcal{D}_{(A}{}^{C}\xi_{B)C}. 
\]
Then, using equations \eqref{Normxi2} and \eqref{symcontract}, as well
as \eqref{Hdecompose1} and \eqref{Hdecompose2} as needed, it can be shown that
\begin{equation}
\mathcal{P}^2\kappa_{AB}=-\frac{2}{3}\mathcal{P}\xi_{AB} \label{gradHdecompose2}
\end{equation}
which is exactly the condition we needed. Thus, we have shown that the
condition \eqref{PH2} is purely a consequence of the evolution equation for
the Killing spinor candidate, along with the conditions arising from
$H_{A'ABC}|_{\mathcal{S}}=0$. 

\medskip
In summary, if $\kappa_{AB}$ satisfies
$\square\kappa_{AB}+\Psi_{ABCD}\kappa^{CD}=0$, then:
\[
H_{A'ABC}|_{\mathcal{S}}=\mathcal{P}
H_{A'ABC}|_{\mathcal{S}}=0\,\Longleftrightarrow\, \xi_{ABCD}=0, \quad
\mathcal{P}\kappa_{AB}+\frac{2}{3}\xi_{AB}=0, \quad \Psi^{F}{}_{(ABC}\kappa_{D)F}=0.
\]

\subsubsection{Decomposing $\Theta_{AB}=0$}
As $\Theta_{AB}$ has no unprimed indices, it is already in a
space-spinor compatible form ---we have the condition:
\begin{equation}
\Theta_{AB}=\kappa_{(A}{}^{C}\phi_{B)C}=0. \label{Z=0}
\end{equation}

\subsubsection{Decomposing $\nabla_{EE'} \Theta_{AB}=0$}
If $\Theta_{AB}|_{\mathcal{S}}=0$, one only needs to consider the
normalderivative $\mathcal{P}\Theta_{AB}$. Using the evolution
equation for the spinor $\phi_{AB}$ implied by Maxwell equations,
equation \eqref{MaxwellEquationDecomposition1}, along with
\eqref{Hdecompose2} in the condition $\mathcal{P} \Theta_{AB}=0$ gives the
spatially intrinsic condition
\begin{equation}
\kappa_{(A|}{}^{C}\mathcal{D}_{CD}\phi_{|B)}{}^{D}=\frac{1}{3}\phi_{(A}{}^{C}\xi_{B)C} \label{Zdecomposecondition}
\end{equation}

\medskip
In summary, assuming \eqref{Hdecompose2} holds, then:
\begin{equation*}
\Theta_{AB}|_{\mathcal{S}}=\mathcal{P}
\Theta_{AB}|_{\mathcal{S}}=0\,\Longleftrightarrow\,\kappa_{(A}{}^{C}\phi_{B)C}=0,\quad
\kappa_{(A|}{}^{C}\mathcal{D}_{CD}\phi_{|B)}{}^{D}=\frac{1}{3}\phi_{(A}{}^{C}\xi_{B)C}.
\end{equation*}

\subsubsection{Decomposing $S_{AA'BB'}=0$}
Our point of departure to decompose the condition
$S_{AA'BB'}|_{\mathcal{S}}=0$ is the relation linking $S_{AA'BB'}$ to
$\Theta_{AB}$ and the derivative of $H_{A'ABC}$ given by equation
\eqref{Identity:StoThetaDH}. Splitting the derivative of $H_{A'ABC}$ into normal and tangential parts gives
\begin{equation}
S_{AA'BB'}=-6\bar{\phi}_{A'B'}\Theta_{AB}+\frac{1}{2}\tau^{C}{}_{(A'}\mathcal{P}
H_{B')ABC}+\tau_{D(A'}\mathcal{D}^{DC}H_{B')ABC}. 
\label{SDecompositionIntermmediate}
\end{equation}
We already have conditions ensuring that
$\Theta_{AB}|_{\mathcal{S}}=H_{A'ABC}|_{\mathcal{S}}=\mathcal{P}
H_{A'ABC}|_{\mathcal{S}}=0$, and so as a consequence we automatically
have that $S_{AA'BB'}|_{\mathcal{S}}=0$.

\subsubsection{Decomposing $\nabla_{EE'} S_{AA'BB'}=0$}
Again as $S_{AA'BB'}|_{\mathcal{S}}=0$, one only needs to consider the
normal derivative $\mathcal{P}S_{AA'BB'}$. Taking the normal
derivative of equation \eqref{SDecompositionIntermmediate} and using
that one has a Gaussian gauge gives on $\mathcal{S}$ that
\begin{align*}
\mathcal{P} S_{AA'BB'}=&-6\mathcal{P}\bar{\phi}_{A'B'}\Theta_{AB}-6\bar{\phi}_{A'B'}\mathcal{P} \Theta_{AB}+\tau^{C}{}_{(A'}\mathcal{P}^{2} H_{B')ABC} +\tau_{D(A'}\mathcal{P}\mathcal{D}^{DC}H_{B')ABC}. 
\end{align*}
The first and second terms on the right hand side are zero as a
consequence of conditions \eqref{Z=0} and
\eqref{Zdecomposecondition}. The last term can be also shown to be
zero by commuting the derivatives and using \eqref{Hdecompose1},
\eqref{Hdecompose2} and \eqref{gradHdecompose}. This leaves
\begin{equation}
0=\mathcal{P} S_{AA'BB'}=\tau^{C}{}_{(A'}\mathcal{P}^{2} H_{B')ABC}.\label{gradSdecomp}
\end{equation}
Eliminating the primed indices by multiplying by factors of $\tau_{AA'}$ gives
\[
\tau_{(C|}{}^{A'}\mathcal{P}^{2}H_{A'AB|D)}=0
\]
Thus, if this condition is satisfied on $\mathcal{S}$, then we have
that $\mathcal{P} S_{AA'BB'}|_{\mathcal{S}}=0$. In the following we
investigate further the consequences of this condition. As in a
Gaussian gauge $\mathcal{P}\tau_{AA'}=0$ it readily follows that, in
fact, one has  
\[
\mathcal{P}^{2}\left(\tau_{(C|}{}^{A'}H_{A'AB|D)}\right)=0.
\]
Splitting into irreducible parts, one obtains two necessary conditions:
\begin{subequations}
\begin{align}
&\mathcal{P}^{2}\xi_{ABCD}=0, \label{gradSdecompose1} \\
&\mathcal{P}^{2}\left(\mathcal{P}\kappa_{AB}+\frac{2}{3}\xi_{AB}\right)=0. \label{gradSdecompose2}
\end{align}
\end{subequations}

Let us first consider condition \eqref{gradSdecompose1}. We can
commute the Sen derivative with one of the normal derivatives to obtain
\begin{align*}
\mathcal{P}\big(\mathcal{P}\xi_{ABCD}\big)&=\mathcal{P}\big(\mathcal{P}\mathcal{D}_{(AB}\kappa_{CD)}\big) \\
&=\mathcal{P}\bigg(2\Psi_{(ABC}{}^{F}\kappa_{D)F}-2\Theta_{(AB}\widehat{\phi}_{CD)}-\frac{1}{3}\Omega_{ABCD}\xi-\frac{2}{3}\Omega_{F(ABC}\xi^{F}{}_{D)} \\
&\qquad-\frac{1}{3}\Omega_{(AB}\xi_{CD)}-\frac{1}{3}K\xi_{ABCD}+\Omega^{F}{}_{(A}\xi_{BCD)F} -\Omega_{(AB}{}^{EF}\xi_{CD)EF}+\mathcal{D}_{(AB}\mathcal{P}\kappa_{CD)}\bigg).
\end{align*}
\begin{align*}
\mathcal{P}\big(\mathcal{P}\xi_{ABCD}\big)&=\mathcal{P}\big(\mathcal{P}\mathcal{D}_{(AB}\kappa_{CD)}\big) \\
&=\mathcal{P}\bigg(2\Psi_{(ABC}{}^{F}\kappa_{D)F}-2\Theta_{(AB}\widehat{\phi}_{CD)}-\frac{1}{3}\Omega_{ABCD}\xi-\frac{2}{3}\Omega_{F(ABC}\xi^{F}{}_{D)} \\
&\qquad -\frac{1}{3}A_{(AB}\xi_{CD)}-\frac{1}{3}\Omega_{(AB}\xi_{CD)}-\frac{1}{3}K\xi_{ABCD}+A^{F}_{(A}\xi_{BCD)F}+\Omega^{F}{}_{(A}\xi_{BCD)F} \\
&\qquad -\Omega_{(AB}{}^{EF}\xi_{CD)EF}-\frac{1}{2}A_{(AB}\nabla\kappa_{CD)}+\mathcal{D}_{(AB}\nabla\kappa_{CD)}\bigg).
\end{align*}
Now, we can use our previous conditions on $\mathcal{S}$ to eliminate
terms. For example, the second term in the bracket is zero from
conditions \eqref{Z=0} and \eqref{Zdecomposecondition}. The fifth,
sixth and seventh terms vanish from \eqref{Hdecompose1} and
\eqref{gradHdecompose}. We can also use \eqref{Hdecompose2} and
\eqref{gradHdecompose2} to replace the last term ---alternatively, one
can commute the derivatives, use the substitution and then commute
back; the result is the same. From this substitution one obtains a
factor $\mathcal{D}_{(AB}\xi_{CD)}$ inside the normal derivative,
which can be replaced using \eqref{totalsym} ---this equation is valid
on the whole spacetime rather than just the hypersurface, so one is
allowed to take normal derivatives of it.

Proceeding as above, condition \eqref{gradSdecompose1} can be reduced to
\begin{equation}
\mathcal{P}^{2}\xi_{ABCD}=\mathcal{P}\left(4\Psi_{(ABC}{}^{F}\kappa_{D)F}\right)=0.
\label{gradSdecompose1Intermmediate}
\end{equation}
Now, splitting the covariant derivatives in the Bianchi identity
\eqref{BianchiIdentity} into normal and tangential components gives
the following space-spinor version:
\begin{equation*}
\mathcal{P}\Psi_{ABCD}=-4\widehat{\phi}_{F(A}\mathcal{D}^{F}{}_{B}\phi_{CD)}-4\widehat{\phi}_{(AB}\mathcal{D}^{F}{}_{C}\phi_{D)F}-2\mathcal{D}_{F(A}\Psi_{BCD)}{}^{F}.
\end{equation*}
One can use the latter expression to further reduce condition
\eqref{gradSdecompose1Intermmediate} to 
\begin{equation}
\Psi_{F(ABC}\xi_{D)}{}^{F}+6\widehat{\phi}_{F(A}\kappa^{E}{}_{B}\mathcal{D}^{F}{}_{C}\phi_{D)E} +6\widehat{\phi}_{(AB}\kappa^{E}{}_{C}\mathcal{D}^{F}{}_{D)}\phi_{EF}+3\kappa_{(A}{}^{F}\mathcal{D}_{B|E}\Psi_{F|CD)}{}^{E}=0. \label{gradSdecomposecondition}
\end{equation}
This is an intrinsic condition on $\mathcal{S}$.

\medskip
In order to obtain insight into condition \eqref{gradSdecompose2} we
make use, again, of the wave equation \eqref{WaveEquationKappa} for the spinor
$\kappa_{AB}$. Taking a normal derivative of this equation one obtains
\[
\mathcal{P}\left(\square\kappa_{AB}+\Psi_{ABCD}\kappa^{CD}\right)=0.
\]
Splitting the spacetime derivatives into normal and tangential parts and rearranging gives
\begin{align*}
\mathcal{P}\left(\mathcal{P}^{2}\kappa_{AB}\right)=\mathcal{P}&\big(-2 \kappa^{CD} \Psi_{ABCD} +\frac{2}{3} \Omega_{AB} \xi -  \frac{4}{3} \Omega_{(A}{}^{C} \xi_{B)C} + 2 \Omega^{CD} \xi_{ABCD}\\
&  -  K \mathcal{P}\kappa_{AB} - 
 \frac{2}{3} \mathcal{D}_{AB}\xi -  \frac{4}{3} 
\mathcal{D}_{C(A}\xi_{B)}{}^{C} - 2 \mathcal{D}_{CD}\xi_{AB}{}^{CD}\big).
\end{align*}
As before, we can use our previous conditions to eliminate terms. The
fourth and eight terms on the right hand side vanish due to
\eqref{Hdecompose1} and \eqref{gradHdecompose}. Also, we can use
equation \eqref{symcontract} to replace the the seventh term ---this
is because the relation \eqref{symcontract} holds on the whole
spacetime, and so one can take normal derivatives of it freely. These
steps give
\[
\mathcal{P}\big(\mathcal{P}^{2}\kappa_{AB}\big)=\mathcal{P}\bigg(\frac{2}{3}\Omega_{(A}{}^{C}\xi_{B)C}-\frac{2}{9}K\xi_{AB}
                                                                -\frac{2}{3}\Omega_{ABCD}\xi^{CD}+\frac{2}{3}\mathcal{D}_{AB}\xi\bigg).
\]
Alternatively, consider the second derivative of $\xi_{AB}$, given by
applying a normal derivative to equation \eqref{Normxi2} ---note that
equation \eqref{Normxi2} applies on the whole spacetime), so one can
take the normal derivative. This yields
\begin{align*}
\mathcal{P}^{2}\xi_{AB} &= \mathcal{P}\bigg(\frac{3}{2} \kappa^{CD} \Psi_{ABCD} -  3  \Theta_ {C(A}\widehat{\phi}_{B)}{}^{C}  -  \frac{1}{2} \Omega_{AB} \xi -  \frac{1}{3} K \xi_{AB} + \frac{1}{2} \Omega_{(A}{}^{C} \xi_{B)C} + \frac{1}{2} \Omega_{ABCD} \xi^{CD} \\
&\qquad + \frac{3}{4} \Omega^{CD} \xi_{ABCD} -  \frac{3}{2} \Omega_{(A}{}^{CDF} \xi_{B)CDF} -  \frac{3}{2} \mathcal{D}_{C(A}\mathcal{P}\kappa_{B)}{}^{C}\bigg).
\end{align*}
As before, we can use the conditions \eqref{Hdecompose1},
\eqref{Hdecompose2}, \eqref{gradHdecompose} and
\eqref{gradHdecompose2}, and the identity \eqref{symcontract} to reduce this to
\[
\mathcal{P}^{2}\xi_{AB}=\mathcal{P}\bigg(\frac{1}{3}K\xi_{AB}-\Omega_{(A}{}^{C}\xi_{B)C} + \Omega_{ABCD}\xi^{CD} - \mathcal{D}_{AB}\xi\bigg).
\]
By comparing terms, we find that 
\[
\mathcal{P}^{3}\kappa_{AB}=-\frac{2}{3}\mathcal{P}^{2}\xi_{AB}
\]
which is exactly the second condition \eqref{gradSdecompose2}. So, no further conditions are needed to be prescribed on the hypersurface ---this condition arises naturally from the evolution equation for the Killing spinor.

\subsubsection{Decomposing $\zeta_{AA'}=0$}
Recalling the definition of $\zeta_{AA'}$, equation \eqref{Definition:zeta}, and
splitting the spacetime spinorial  derivative into normal and
tangential parts one obtains 
\begin{align*}
\zeta_{AA'}&=\nabla^{B}{}_{A'}\Theta_{AB} \\
&=\frac{1}{2}\tau^{B}{}_{A'}\mathcal{P} \Theta_{AB}-\tau^{C}{}_{A'}\mathcal{D}_{C}{}^{B}\Theta_{AB}.
\end{align*}
From conditions \eqref{Z=0} and \eqref{Zdecomposecondition} it then follows that
$\zeta_{AA'}|_{\mathcal{S}}=0$. 

\subsubsection{Decomposing $\nabla_{EE'}\zeta_{AA'}=0$}
Again, if $\zeta_{AA'}|_{\mathcal{S}}=0$ then one one only needs to
consider the transverse derivative $\mathcal{P}\zeta_{AA'}$. By definition one has that 
\begin{align*}
\mathcal{P}\zeta_{AA'}&=\mathcal{P}\nabla^{B}{}_{A'}\Theta_{AB} \\
& = \mathcal{P}\bigg(-\tau^{C}{}_{A'}\mathcal{D}^{B}{}_{C}+\frac{1}{2}\tau^{B}{}_{A'}\mathcal{P}\bigg)\Theta_{AB} \\
& = \frac{1}{2}\tau^{B}{}_{A'}\mathcal{P}^{2}\Theta_{AB}
\end{align*}
where the last equation has been obtained by commuting the Sen and
normal derivatives, and using \eqref{Zdecomposecondition}. Therefore
one only needs to show that
\begin{equation*}
\mathcal{P}^{2}\Theta_{AB}=0.
\end{equation*}
Now, recalling the wave equation for $\Theta_{AB}$, equation \eqref{WaveEquation:Theta},
one readily notices that the right hand side vanishes on $\mathcal{S}$ as a consequence
of \eqref{Hdecompose1}, \eqref{Hdecompose2} and
\eqref{gradHdecompose}, so that
one is left with 
\[
\square \Theta_{AB}|_{\mathcal{S}} =0.
\]
Finally, expanding the left hand side one finds that on $\mathcal{S}$
\begin{align*}
\square \Theta_{AB} & = \nabla^{CC'}\nabla_{CC'}\Theta_{AB} \\
& = \left(-\tau^{BC'}\mathcal{D}^{C}{}_{B}+\frac{1}{2}\tau^{CC'}\mathcal{P}\right)\left(-\tau^{B}{}_{C'}\mathcal{D}_{BC}+\frac{1}{2}\tau_{CC'}\mathcal{P}\right)\Theta_{AB} \\
& = \frac{1}{4}\tau^{CC'}\tau_{CC'}\mathcal{P}^{2}\Theta_{AB}
\end{align*}
where the last line follows by commuting the derivatives where
appropriate and using conditions \eqref{Z=0} and
\eqref{Zdecomposecondition}. Finally, as $\tau^{CC'}\tau_{CC'}=2$ by
definition, we get that $\mathcal{P}^{2}\Theta_{AB}=0$ as a consequence of
the evolution equation for $\Theta_{AB}$.


\subsection{Eliminating  redundant conditions}

The discussion of the previous subsections can be summarised in the
following:

\begin{theorem}
\label{Theorem:IntrinsicKillingSpinorDataV1}
Let $\kappa_{AB}$ denote a Killing spinor candidate on an
electrovacuum spacetime $(\mathcal{M},\bmg,\bmF)$. If $\kappa_{AB}$
satisfies on a Cauchy hypersurface $\mathcal{S}$ the intrinsic conditions
\begin{subequations}
\begin{align}
\xi_{ABCD}&=0, \\
\Psi_{F(ABC}\kappa_{D)}{}^{F}&=0, \\
\kappa_{(A}{}^{C}\phi_{B)C}&=0, \\
\kappa_{(A|}{}^{C}\mathcal{D}_{CD}\phi_{|B)}{}^{D}&=\frac{1}{3}\phi_{(A}{}^{C}\xi_{B)C},\label{shortcondition} \\
3\kappa_{(A}{}^{F}\mathcal{D}_{B}{}^{E}\Psi_{CD)EF}+\Psi_{(ABC}{}^{F}\xi_{D)F}&=6\widehat{\phi}_{F(A}\kappa^{E}{}_{B}\mathcal{D}^{F}{}_{C}\phi_{D)E} +6\widehat{\phi}_{(AB}\kappa^{E}{}_{C}\mathcal{D}^{F}{}_{D)}\phi_{EF},\label{longcondition}
\end{align}
\end{subequations}
and its normal derivative at $\mathcal{S}$ is given by
\[
\mathcal{P}\kappa_{AB} = -\frac{2}{3}\xi_{AB},
\]
then $\kappa_{AB}$ is, in fact, a Killing spinor. 
\end{theorem}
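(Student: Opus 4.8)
The plan is to deduce the statement directly from Theorem \ref{Theorem:SpacetimeConditionKillingSpinor}, which asserts that $\kappa_{AB}$ is an actual Killing spinor precisely when the eight data conditions \eqref{DataCondition1}--\eqref{DataCondition4} hold on $\mathcal{S}$. Since $\kappa_{AB}$ is assumed to be a Killing spinor candidate, it satisfies the wave equation \eqref{WaveEquationKappa}, and the entire task is therefore to verify that the five intrinsic conditions listed in the statement, together with the prescription $\mathcal{P}\kappa_{AB}=-\frac{2}{3}\xi_{AB}$ for the normal derivative, are equivalent (modulo \eqref{WaveEquationKappa}) to the full system \eqref{DataCondition1}--\eqref{DataCondition4}. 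This amounts to collecting and organising the space-spinor decompositions carried out in the preceding subsections.

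First I would dispose of the undifferentiated and automatic conditions. The decomposition of $H_{A'ABC}=0$ yields exactly \eqref{Hdecompose1}--\eqref{Hdecompose2}, i.e. $\xi_{ABCD}=0$ and the stated normal-derivative prescription; the decomposition of $\Theta_{AB}=0$ is immediate and gives $\kappa_{(A}{}^{C}\phi_{B)C}=0$. Feeding these, via the identity \eqref{Identity:StoThetaDH} together with $H_{A'ABC}|_{\mathcal{S}}=\mathcal{P}H_{A'ABC}|_{\mathcal{S}}=0$, into the expression \eqref{SDecompositionIntermmediate} shows that $S_{AA'BB'}|_{\mathcal{S}}=0$ holds automatically. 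Likewise, writing $\zeta_{AA'}$ in space-spinor form shows $\zeta_{AA'}|_{\mathcal{S}}=0$ follows from \eqref{Z=0} and the short condition \eqref{shortcondition}, and $\mathcal{P}\zeta_{AA'}|_{\mathcal{S}}=0$ follows by reducing it to $\mathcal{P}^{2}\Theta_{AB}=0$ and invoking the wave equation \eqref{WaveEquation:Theta} for $\Theta_{AB}$ together with the already-established conditions.

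The genuine content lies in the differentiated conditions. For $\mathcal{P}H_{A'ABC}=0$, commuting the Sen and normal derivatives and repeatedly substituting \eqref{Hdecompose1}--\eqref{Hdecompose2} collapses $\mathcal{P}\xi_{ABCD}=0$ to the single algebraic requirement $\Psi^{F}{}_{(ABC}\kappa_{D)F}=0$ as in \eqref{gradHdecompose}, while the companion $\mathcal{P}^{2}\kappa_{AB}=-\frac{2}{3}\mathcal{P}\xi_{AB}$ is shown to be a pure consequence of \eqref{WaveEquationKappa} and hence imposes nothing new. The condition $\mathcal{P}\Theta_{AB}=0$ reduces, via the Maxwell evolution equation \eqref{MaxwellEquationDecomposition1} and \eqref{Hdecompose2}, to the short condition \eqref{shortcondition}. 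Finally, $\mathcal{P}S_{AA'BB'}=0$ is shown to be equivalent to $\mathcal{P}^{2}\xi_{ABCD}=0$ together with $\mathcal{P}^{2}\big(\mathcal{P}\kappa_{AB}+\frac{2}{3}\xi_{AB}\big)=0$; the first of these is reduced---using the space-spinor form of the Bianchi identity \eqref{BianchiIdentity}---to the long condition \eqref{longcondition}, and the second is again seen to follow automatically by differentiating \eqref{WaveEquationKappa}. Matching each of \eqref{DataCondition1}--\eqref{DataCondition4} to one of the five listed intrinsic conditions (or declaring it redundant) then lets Theorem \ref{Theorem:SpacetimeConditionKillingSpinor} deliver the conclusion.

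I expect the main obstacle to be the bookkeeping in the $\mathcal{P}S_{AA'BB'}=0$ reduction: one must commute normal and Sen derivatives through high-valence expressions, substitute \eqref{WaveEquationKappa} and its normal derivative, and apply the second-derivative identities \eqref{contractxi}--\eqref{symxi}. A point requiring particular care is that several of the auxiliary identities---notably \eqref{totalsym}, \eqref{symcontract} and \eqref{Normxi2}---hold off $\mathcal{S}$ on the whole spacetime, so that their normal derivatives may be taken freely when commuting $\mathcal{P}$ past $\mathcal{D}_{AB}$; overlooking this would obstruct the cancellations that ultimately leave only \eqref{longcondition}.
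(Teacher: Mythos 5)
Your proposal follows essentially the same route as the paper: the theorem is exactly the summary of the paper's subsection-by-subsection space-spinor decomposition of the eight conditions of Theorem \ref{Theorem:SpacetimeConditionKillingSpinor}, with $\mathcal{P}H_{A'ABC}=0$ collapsing to $\Psi^{F}{}_{(ABC}\kappa_{D)F}=0$, $\mathcal{P}\Theta_{AB}=0$ to \eqref{shortcondition}, $\mathcal{P}S_{AA'BB'}=0$ to \eqref{longcondition}, and the remaining conditions shown redundant via the wave equation \eqref{WaveEquationKappa}. You also correctly flag the one genuinely delicate point, namely that identities such as \eqref{totalsym}, \eqref{symcontract} and \eqref{Normxi2} hold off $\mathcal{S}$ and may therefore be differentiated normally, so no comparison beyond this is needed.
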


\begin{remark}
{\em We note that
\[
\Theta_{AB}=\kappa_{(A}{}^{C}\phi_{B)C}=0\quad\mbox{implies}\quad
\phi_{AB}\propto\kappa_{AB}
\]
 Using this fact, one can show that
\eqref{shortcondition} and \eqref{longcondition} can be more simply
expressed as a condition on the proportionality between the Killing
spinor $\kappa_{AB}$ and the Maxwell spinor $\phi_{AB}$.}
\end{remark}

In order to simplify the conditions in Theorem
\ref{Theorem:IntrinsicKillingSpinorDataV1} and to analyse their
various interrelations we proceed by looking at the different
algebraic types that the Killing spinor can have.  First, we consider
the algebraically general case:

\begin{lemma}
\label{Lemma:Redundancy1}
Assume that a symmetric spinor $\kappa_{AB}$ 
satisfies the conditions 
\[
\kappa_{AB}\kappa^{AB}\neq0,\qquad \xi_{ABCD}=\Psi_{F(ABC}\kappa_{D)}{}^{F}=\kappa_{(A}{}^{C}\phi_{B)C}=0
\]
on an open subset $\mathcal{U}\subset\mathcal{S}$. Then, there exists a spin
basis $\{o^{A},\iota^{A}\}$ with $o_{A}\iota^{A}=1$ such that the
spinors $\kappa_{AB}$ and $\phi_{AB}$ can be expanded as 
\[
\kappa_{AB}=e^{\varkappa}o_{(A}\iota_{B)},\qquad\phi_{AB}=\varphi o_{(A}\iota_{B)}.
\]
Furthermore, if $\mathfrak{Q}\equiv\varphi e^{2\varkappa}$ is a constant
on $\mathcal{U}$, then conditions \eqref{shortcondition} and
\eqref{longcondition} are satisfied on $\mathcal{U}$.
\end{lemma}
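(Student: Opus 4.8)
The statement has two parts: first, an algebraic claim that under the stated pointwise conditions one can simultaneously diagonalise $\kappa_{AB}$ and $\phi_{AB}$ into the stated canonical forms; second, an analytic claim that constancy of $\mathfrak{Q}=\varphi e^{2\varkappa}$ forces the two remaining intrinsic conditions \eqref{shortcondition} and \eqref{longcondition} to hold. I would treat these separately.

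\textbf{The algebraic normal form.} The plan is to exploit $\kappa_{AB}\kappa^{AB}\neq 0$, which guarantees $\kappa_{AB}$ is algebraically general (two distinct principal spinors) rather than a null spinor. Hence there is a spin basis $\{o^A,\iota^A\}$ with $o_A\iota^A=1$ in which $\kappa_{AB}=e^{\varkappa}o_{(A}\iota_{B)}$ for some scalar $\varkappa$ (the normalisation $\kappa_{AB}\kappa^{AB}\neq 0$ fixes the product of the principal spinors, which I absorb into $e^{\varkappa}$). The matter alignment condition $\Theta_{AB}=\kappa_{(A}{}^C\phi_{B)C}=0$ is exactly the statement that $\phi_{AB}$ shares the principal spinors of $\kappa_{AB}$ (as noted in the Remark after Theorem \ref{Theorem:IntrinsicKillingSpinorDataV1}, it forces $\phi_{AB}\propto\kappa_{AB}$), so in the same basis $\phi_{AB}=\varphi o_{(A}\iota_{B)}$. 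I would verify this by contracting $\Theta_{AB}=0$ with the basis spinors and checking that the $o_{(A}o_{B)}$ and $\iota_{(A}\iota_{B)}$ components of $\phi_{AB}$ must vanish. The remaining conditions $\xi_{ABCD}=0$ and $\Psi_{F(ABC}\kappa_{D)}{}^F=0$ are used to pin down the behaviour of $\varkappa$ under the Sen derivative; in particular $\Psi_{F(ABC}\kappa_{D)}{}^F=0$ says the Weyl spinor is aligned with $\kappa_{AB}$ (type D on $\mathcal{U}$), which constrains $\Psi_{ABCD}$ to a single component $\Psi_2$ in this frame.

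\textbf{Verifying the two remaining conditions under $\mathfrak{Q}=\text{const}$.} This is where the real work lies. I would substitute the canonical expansions into \eqref{shortcondition} and \eqref{longcondition} and reduce both sides to scalar equations in the dyad components of $\xi_{AB}$, $\mathcal{D}_{AB}\varkappa$, $\mathcal{D}_{AB}\varphi$, and the single surviving Weyl component. Using $\xi_{AB}=\tfrac{3}{2}\mathcal{D}_{(A}{}^C\kappa_{B)C}$ from \eqref{Definition:xi2}, the derivative $\xi_{AB}$ is expressible through $\mathcal{D}_{AB}\varkappa$ and connection coefficients of the dyad. The alignment $\phi_{AB}\propto\kappa_{AB}$ means $\varphi$ and $e^{\varkappa}$ differ by $\mathfrak{Q}$ up to the factor $e^{2\varkappa}$, so $\varphi=\mathfrak{Q}e^{-2\varkappa}$; differentiating and using $\mathfrak{Q}=\text{const}$ gives $\mathcal{D}_{AB}\varphi=-2\varphi\,\mathcal{D}_{AB}\varkappa$, the crucial relation that ties $\mathcal{D}_{AB}\phi_{CD}$ back to $\mathcal{D}_{AB}\varkappa$ and hence to $\xi_{AB}$. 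I expect \eqref{shortcondition} to fall out almost directly once this substitution is made, since both sides then involve the same combination $\phi\,\mathcal{D}\varkappa$.

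\textbf{The main obstacle.} The hard part will be \eqref{longcondition}, which couples $\mathcal{D}_{AB}\Psi_{CDEF}$ to the Maxwell derivative terms $\widehat{\phi}_{F(A}\kappa^E{}_B\mathcal{D}^F{}_C\phi_{D)E}$. To handle the Weyl derivative I would need the relation between $\mathcal{D}_{AB}\Psi_{CDEF}$ and the Maxwell field supplied by the intrinsic Bianchi constraint $\mathcal{D}^{AB}\Psi_{ABCD}=-2\widehat{\phi}^{AB}\mathcal{D}_{AB}\phi_{CD}$ together with the type-D alignment; in a type-D frame the derivative of $\Psi_2$ is itself governed by the Maxwell field and the dyad connection (this is the electrovacuum analogue of the Goldberg–Sachs/type-D structure). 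The delicate step is ensuring that after inserting $\mathcal{D}_{AB}\varphi=-2\varphi\,\mathcal{D}_{AB}\varkappa$ and expressing $\mathcal{D}_{AB}\Psi_2$ through the constraint, the Weyl-derivative term on the left and the two Maxwell-derivative terms on the right cancel identically — and this cancellation should be precisely what the constancy of $\mathfrak{Q}$ encodes. I would organise the computation by projecting onto the irreducible ($o_{(A}\cdots$, $\iota_{(A}\cdots$) components, reducing \eqref{longcondition} to a single scalar identity, and then showing that identity is equivalent to $\mathcal{D}_{AB}\mathfrak{Q}=0$. The risk is that additional terms involving the trace-free part $\Omega_{ABCD}$ of the extrinsic curvature or the dyad spin-coefficients survive; I would expect these to be absorbed by the definition of $\xi_{AB}$ and the vanishing of $\xi_{ABCD}$, but confirming their cancellation is the genuinely technical core of the proof.
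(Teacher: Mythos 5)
Your proposal follows essentially the same route as the paper: establish the dyad normal forms from $\kappa_{AB}\kappa^{AB}\neq0$, the alignment condition and the Weyl condition, then reduce \eqref{shortcondition} and \eqref{longcondition} to dyad components using the electromagnetic Gauss constraint, the intrinsic Bianchi constraint $\mathcal{D}^{CD}\Psi_{ABCD}=-2\widehat{\phi}^{CD}\mathcal{D}_{CD}\phi_{AB}$ to control $\mathcal{D}_{AB}\Psi$, and the relation $\mathcal{D}_{AB}\varphi+2\varphi\,\mathcal{D}_{AB}\varkappa=0$ coming from $\mathcal{D}_{AB}\mathfrak{Q}=0$. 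The paper's computation confirms the cancellation you flag as the technical core (each irreducible part of \eqref{longcondition} reduces to a nonzero multiple of $D\varphi+2\varphi D\varkappa$ or $\Delta\varphi+2\varphi\Delta\varkappa$), the only detail you leave implicit being the explicit expansion of $\widehat{\phi}_{AB}$ via $\iota^{A}=\widehat{o}^{A}+\beta o^{A}$ needed to evaluate the Hermitian-conjugate terms.
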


\begin{proof}
The first part of the lemma follows directly from
$\kappa_{AB}\kappa^{AB}\neq0$, and the fact that
$\kappa_{(A}{}^{C}\phi_{B)C}=0$ implies that
$\phi_{AB}\propto\kappa_{AB}$. The condition
$\Psi_{F(ABC}\kappa_{D)}{}^{F}=0$ also allows us to expand the Weyl
spinor in the same basis:
\[
\Psi_{ABCD}=\psi o_{(A}o_{B}\iota_{C}\iota_{D)}.
\]

To show the redundancy of \eqref{shortcondition} and
\eqref{longcondition}, we first decompose the equation
$\mathcal{D}_{(AB}\kappa_{CD)}=0$ into irreducible components. To
simplify the notation, we borrow the $D, \Delta, \delta$ symbols from
the Newman-Penrose formalism to represent directional derivatives:
\begin{equation}
D\equiv o^{A}o^{B}\mathcal{D}_{AB},\qquad\Delta\equiv \iota^{A}\iota^{B}\mathcal{D}_{AB},\qquad \delta\equiv o^{A}\iota^{B}\mathcal{D}_{AB}.
\end{equation}
The components of $\mathcal{D}_{(AB}\kappa_{CD)}=0$ then become:
\begin{subequations}
\begin{align}
o^{C}Do_{C}&=0, \label{SKSEdecomp}\\
o^{C}\delta o_{C}&= -\frac{1}{2} D\varkappa, \\
\iota^{C}D\iota_{C}-o^{C}\Delta o_{C} &= 2\delta\varkappa, \\
\iota^{C}\delta\iota_{C}&=\frac{1}{2}\Delta\varkappa, \\
\iota^{C}\Delta\iota_{C} &=0.
\end{align}
\end{subequations}
Using these, one can show that
\[
e^{-\varkappa}\xi_{AB}=-3o_{A}o_{B}\iota^{F} \delta\iota_{F}-3\iota_{A}\iota_{B}o^{F} \delta o_{F} +\frac{3}{2}o_{(A}\iota_{B)}\left(\iota^{F}D\iota_{F} +o^{F}\Delta o_{F}\right).
\]
Now, using the electromagnetic Gauss constraint, equation
\eqref{MaxwellEquationDecomposition2}, together with the expansion for
$\phi_{AB}$ one obtains that using the basis expansion for $\phi_{AB}$
one obtains
\begin{equation}
\delta\varphi +2\varphi\delta\varkappa=0 \label{Maxwellconstraint}
\end{equation}
on $\mathcal{S}$. Now, the spacetime Bianchi identity
\eqref{BianchiIdentity} implies the contraint
\begin{equation}
\mathcal{D}^{CD}\Psi_{ABCD}=-2\hat{\phi}^{CD} \mathcal{D}_{CD}\phi_{AB} \label{Bianchispace}
\end{equation}
on $\mathcal{S}$. To find the basis expansion of the Hermitian
conjugate $\widehat{\phi}_{AB}$, note that:
\[
o_{A}\ho^{A}\equiv o_{A}\tau^{AA'}\bar{o}_{A'}=\tau_{AA'}o^{A}\bar{o}^{A'}=\tau_{a}k^{a}
\]
where $k_{a}\equiv o_{A}\bar{o}_{A'}$. As $\tau_{a}$ is timelike and
$k_{a}$ is null, this scalar product is non-zero, and the pair
$\{o_{A},\ho_{A}\}$ forms  a basis. We expand the spinor $\iota^{A}$
in this basis as 
\[
\iota^{A}=\alpha\ho^{A}+\beta o^{A}
\]
Contracting this with $o_{A}$, we find $1/\alpha=o_{A}\ho^{A}\geq0$,
and so $\alpha\geq0$. Performing a Lorentz transformation on the basis
$\{o_{A},\iota_{A}\}$ parametrised by the complex function one has
that 
\begin{align*}
o^{A}\mapsto \tilde{o}&=\frac{1}{\lambda} o^{A}, \\
\iota^{A}\mapsto \tilde{\iota}&=\lambda\iota^{A}.
\end{align*}
This transformation preserves the value of $o_{A}\iota^{A}$ and the
symmetrised product $o_{(A}\iota_{B)}$, and thus, it preserves the
form of the basis expansions of $\kappa_{AB}$ and
$\phi_{AB}$. Moreover, one has that 
\[
\tilde{\iota}^{A}=\alpha|\lambda|^2\widehat{\tilde{o}}^{A}+\beta\lambda^2 \tilde{o}^{A}.
\]
So, by choosing $|\lambda|^2=1/\alpha$ and $\tilde{\beta}=\beta\lambda^2$, and dropping the tildes, we get
\begin{align*}
\iota^{A}&=\widehat{o}^{A}+\beta o^{A}, \\
\widehat{\iota}^{A}&=-o^{A}+\bar{\beta}\ho^{A}.
\end{align*}
Using the above expressions we can find the basis expansion of
$\hat{\phi}_{AB}$. Namely, one has that:
\begin{align*}
\widehat{\phi}_{AB}&=\frac{1}{2}\bar{\varphi}(\ho_{A}\widehat{\iota}_{B}+\widehat{\iota}_{A}\ho_{B}) \\
&=\frac{1}{2}\bar{\varphi}(-o_{A}\ho_{B}-\ho_{A}o_{B}+2\bar{\beta}\ho_{A}\ho_{B}) \\
&=\bar{\varphi}\bar{\beta}\iota_{A}\iota_{B}+\bar{\varphi}\beta(1+|\beta|^2)o_{A}o_{B}-\bar{\varphi}(1+2|\beta|^2)o_{(A}\iota_{B)}.
\end{align*}

Now, using the basis expansion for the Weyl spinor, contracting with
combinations of $o^{A}$ and $\iota^{A}$ and using the relations given
in \eqref{SKSEdecomp} and \eqref{Maxwellconstraint}, the components of
\eqref{Bianchispace} become 
\begin{align*}
D\psi +3\psi D\varkappa &= 6|\varphi|^2(1+2|\beta|^2) D\varkappa +12\bar{\beta}|\varphi|^2o^{A}\Delta o_{A}, \\
\Delta\psi +3\psi\Delta\varkappa &= 6|\varphi|^2(1+2|\beta|^2) \Delta\varkappa -12\beta|\varphi|^2(1+|\beta|^2)\iota^{A}D\iota_{A}, \\
\delta\psi +3\psi \delta\varkappa &=-6|\varphi|^2(1+2|\beta|^2)\delta\varkappa -3\beta\bar{\varphi}(1+|\beta|^2)D\varphi-3\bar{\beta}\bar{\varphi}\Delta\varphi.
\end{align*}
Exploiting the conditions \eqref{SKSEdecomp}, the expansions of the
Maxwell and the Bianchi constraints it can be shown that condition \eqref{longcondition} can be decomposed into the following non-trivial irreducible parts:
\begin{align*}
\bar{\beta}\bar{\varphi}\left(D\varphi+2\varphi D\varkappa\right)&=0, \\
\bar{\varphi}(1+2|\beta|^2)\left(D\varphi+2\varphi D\varkappa\right)&=0, \\
\bar{\varphi}(1+2|\beta|^2)\left(\Delta\varphi +2\varphi \Delta\varkappa\right)&=0, \\
\beta\bar{\varphi}(1+|\beta|^2)\left(\Delta\varphi+2\varphi\Delta\varkappa\right)&=0. 
\end{align*}
Assuming $\varphi\neq0$, these conditions along with the Maxwell
constraint \eqref{Maxwellconstraint} are equivalent to the
following basis-independent expression, also independent of the value
of $\beta$:
\[
\mathcal{D}_{AB}\varphi+2\varphi\mathcal{D}_{AB}\varkappa=0.
\]
The latter can be written as
\[ 
\mathcal{D}_{AB}\left(\varphi e^{2\varkappa}\right)=\mathcal{D}_{AB}\mathfrak{Q}=0.
\] 
Therefore, under the hypotheses of the present lemma, equation
\eqref{longcondition} is equivalent to the requirement of
$\mathfrak{Q}$ being constant in a domain
$\mathcal{U}\subset\mathcal{S}$. In a similar way, substituting the
above relations in equation \eqref{shortcondition} and splitting into
irreducible parts gives the following set of equivalent conditions:
\begin{align*}
e^{\varkappa}\left(D\varphi+2\phi D\varkappa\right) &=0, \\
e^{\varkappa}\left(\Delta\varphi+2\varphi \Delta\varkappa\right)  &=0, \\
e^{\varkappa}\left(\delta\varphi+2\varphi \delta\varkappa\right) &=0.
\end{align*}
As $e^{\varkappa}$ is non-zero, this set of conditions is again
equivalent to the constancy of $\mathfrak{Q}$ in
$\mathcal{U}\subset\mathcal{S}$.
\end{proof}

Next, we consider the case when the Killing spinor is algebraically special:

\begin{lemma}
\label{Lemma:Redundancy2}
Assume the symmetric spinor $\kappa_{AB}$ satisfies the conditions
\[
\kappa_{AB}\kappa^{AB}=0,\qquad\kappa_{AB}\hat{\kappa}^{AB}\neq0,\qquad\xi_{ABCD}=\Psi_{F(ABC}\kappa_{D)}{}^{F}=\kappa_{(A}{}^{C}\phi_{B)C}=0
\]
on an open subset $\mathcal{U}\subset\mathcal{S}$. Then, there exists
a normalised spin basis $\{o^{A},\iota^{A}\}$  such that the spinors
$\kappa_{AB}$ and $\phi_{AB}$ can be expanded as
\[
\kappa_{AB}=e^{\varkappa}o_{A}o_{B},\qquad\phi_{AB}=\varphi o_{A}o_{B}.
\]
Furthermore, the equations \eqref{shortcondition} and
\eqref{longcondition} are satisfied on
$\mathcal{U}\subset\mathcal{S}$.
\end{lemma}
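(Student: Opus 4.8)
The plan is to follow the template of the proof of Lemma~\ref{Lemma:Redundancy1}, specialised to a single (quadruple) principal spinor. First I would establish the three expansions. Since $\kappa_{AB}\kappa^{AB}=0$ forces the two principal spinors of $\kappa_{AB}$ to coincide, while $\kappa_{AB}\widehat{\kappa}^{AB}\neq0$ guarantees $\kappa_{AB}\neq0$, one may choose a normalised dyad $\{o^A,\iota^A\}$ with $o_A\iota^A=1$ so that $\kappa_{AB}=e^{\varkappa}o_Ao_B$. The matter alignment condition $\kappa_{(A}{}^C\phi_{B)C}=0$ then gives $\phi_{AB}\propto\kappa_{AB}$, hence $\phi_{AB}=\varphi o_Ao_B$. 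To pin down the Weyl spinor I would decompose $\Psi_{F(ABC}\kappa_{D)}{}^F=0$ in the dyad: contracting this totally symmetric valence-$4$ equation with the five independent combinations of $o^A$ and $\iota^A$ forces all but one Newman--Penrose component of $\Psi_{ABCD}$ to vanish, leaving $\Psi_{ABCD}=\psi\,o_Ao_Bo_Co_D$. The single structural fact driving the whole argument is then that $o^A$ annihilates each of the undifferentiated aligned fields, $\kappa_{AB}o^B=\phi_{AB}o^B=\Psi_{ABCD}o^D=0$, while their Sen derivatives do not.

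Next I would extract the differential identities. Decomposing the intrinsic constraint $\xi_{ABCD}=\mathcal{D}_{(AB}\kappa_{CD)}=0$ in the dyad produces the analogue of \eqref{SKSEdecomp}, a set of relations among the directional derivatives of $o_A$ and of $\varkappa$; these I would use to evaluate $\xi_{AB}=\tfrac{3}{2}\mathcal{D}_{(A}{}^C\kappa_{B)C}$ and, in particular, the contractions $o^A\xi_{AB}$ and $\iota^A\xi_{AB}$ that enter \eqref{shortcondition} and \eqref{longcondition}. In parallel I would record the space-spinor form of the electromagnetic Gauss constraint \eqref{MaxwellEquationDecomposition2} (the analogue of \eqref{Maxwellconstraint}) and the Bianchi constraint \eqref{Bianchispace}, the latter expressing $\mathcal{D}_{AB}\Psi_{CDEF}$ in terms of $\psi$ and of $\widehat{\phi}\,\mathcal{D}\phi$. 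Finally, exactly as in Lemma~\ref{Lemma:Redundancy1}, I would expand $\iota^A=\widehat{o}^A+\beta o^A$ so that $\widehat{\phi}_{AB}=\bar\varphi\,\widehat{o}_A\widehat{o}_B=\bar\varphi(\iota_A-\beta o_A)(\iota_B-\beta o_B)$, giving the $\beta$-dependent form of the Hermitian conjugate needed on the right-hand side of \eqref{longcondition}.

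With these in hand the verification is direct. For \eqref{shortcondition}, substituting $\kappa_A{}^C=e^{\varkappa}o_Ao^C$ and $\phi_B{}^D=\varphi o_Bo^D$ and contracting, both sides collapse to multiples of $o_Ao_B$ whose scalar coefficients match by virtue of the $o_A$-derivative relations together with the Gauss constraint. For \eqref{longcondition} the type-N identity $\Psi_{ABCD}o^D=0$ is the key lever: in $3\kappa_{(A}{}^F\mathcal{D}_B{}^E\Psi_{CD)EF}$ every term in which the derivative does not fall on the contracted factor is killed by $o^Fo_F=0$, and what remains is rewritten through the Bianchi constraint \eqref{Bianchispace} as $\psi$- and $\widehat{\phi}\,\mathcal{D}\phi$-terms; these combine with $\Psi_{(ABC}{}^F\xi_{D)F}$ on the left and with the two $\widehat{\phi}$-terms on the right. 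I would then decompose the resulting identity into its $o_Ao_Bo_Co_D$ and $o_{(A}o_Bo_C\iota_{D)}$ parts and check that each vanishes.

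The main obstacle is precisely this last reduction of \eqref{longcondition}. In the type-D case of Lemma~\ref{Lemma:Redundancy1} the analogous manipulation left a residual differential condition, namely $\mathcal{D}_{AB}\mathfrak{Q}=0$ with $\mathfrak{Q}=\varphi e^{2\varkappa}$; the content of the present lemma is that in the algebraically special case this residue vanishes identically, so that no constancy requirement survives. I expect the delicate point to be the bookkeeping of the $\beta$-dependent $\widehat{\phi}$ contributions, as in Lemma~\ref{Lemma:Redundancy1}, but with the crucial simplification that, because $o^A$ annihilates all the undifferentiated aligned spinors, every surviving term shares the common skeleton $o_{(A}o_Bo_C(\,\cdot\,)_{D)}$, forcing the scalar coefficients to cancel without producing any further derivative of $\mathfrak{Q}$.
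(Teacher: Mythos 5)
Your proposal is correct and follows essentially the same route as the paper's proof: establish the aligned dyad expansions of $\kappa_{AB}$, $\phi_{AB}$ and $\Psi_{ABCD}$, decompose $\mathcal{D}_{(AB}\kappa_{CD)}=0$ into directional-derivative relations, invoke the Gauss and Bianchi constraints together with the expansion $\iota^A=\widehat{o}^A+\beta o^A$ for $\widehat{\phi}_{AB}$, and then verify \eqref{shortcondition} and \eqref{longcondition} by direct substitution and splitting into irreducible parts. The paper states this final verification tersely, whereas you additionally identify the structural reason it closes (the quadruple principal spinor annihilating all undifferentiated aligned fields), but the argument is the same.
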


\begin{proof}
The first part of the lemma follows directly from the hypothesis
$\kappa_{AB}\kappa^{AB}=0,\, \kappa_{AB}\widehat{\kappa}^{AB}\neq0$,
and the fact that $\kappa_{(A}{}^{C}\phi_{B)C}=0$ implies
$\phi_{AB} \propto\kappa_{AB}$. The condition
$\Psi_{F(ABC}\kappa_{D)}{}^{F}=0$ also allows us to expand the Weyl
spinor in the same basis as
\[
\Psi_{ABCD}=\psi o_{A}o_{B}o_{C}o_{D}.
\]
In this basis, the components of the equation $\mathcal{D}_{(AB}\kappa_{CD)}=0$ become
\begin{align*}
&o^{A}Do_{A}=0, \\
&D\varkappa+4o^{A}\delta o_{A}+2\iota^{A}Do_{A}=0, \\
&\delta\varkappa+o^{A}\Delta o_{A}+2\iota^{A}\delta o_{A}=0, \\
&\Delta\varkappa+2\iota^{A}\Delta o_{A}=0.
\end{align*}
Using these relations one can show that
\[
e^{-\varkappa}\xi_{AB}=3o_{A}o_{B}o^{C}\Delta o_{C}
-6o_{(A}\iota_{B)}o^{C}\delta o_{C}. 
\]
The Maxwell constraint, equation
\eqref{MaxwellEquationDecomposition2}, on $\mathcal{S}$ is equivalent
to
\[
D\phi-\phi D\varkappa-6\phi o^{A}\delta o_{A}=0,
\]
and the $o_{(A}\iota_{B)}$ component of the Bianchi constraint 
\[
\mathcal{D}^{CD}\Psi_{ABCD}=-2\widehat{\phi}^{CD}
\mathcal{D}_{CD}\phi_{AB}
\]
 on $\mathcal{S}$, as a consequence of the previous relations, is equivalent to the following condition:
\begin{equation*}
|\varphi|^2 o^{A}\Delta o_{A}-2\beta|\varphi|^2o^{A}\delta o_{A}=0.
\end{equation*}
Then, by substituting all the relevant basis expansions into
\eqref{shortcondition} and \eqref{longcondition}, and splitting the
equations into irreducible parts, one finds that both conditions are
automatically satisfied as a result of the above relations.
\end{proof}

We round up the discussion of this section with the following
electrovacuum analogue of Theorem in \cite{BaeVal12}:

\begin{lemma}
Assume that one has a symmetric spinor $\kappa_{AB}$ satisfying the conditions
\[
\mathcal{D}_{(AB}\kappa_{CD)}=\Psi_{F(ABC}\kappa_{D)}{}^{F}=\kappa_{(A}{}^{C}\phi_{B)C}=0
\]
on the Cauchy
hypersurface $\mathcal{S}$ and that the complex
function 
\[
\mathfrak{Q}^{2}\equiv \left(\kappa_{AB}\kappa^{AB} \right)^{2}
\phi_{AB}\phi^{AB}
\]
 is constant on $\mathcal{S}$. Then the domain of dependence, $D^+(\mathcal{S})$, of the
initial data set $(\mathcal{S},\bmg,\bmK,\bmF)$ will admit a Killing
spinor.
\end{lemma}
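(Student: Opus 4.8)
The plan is to promote the given data to a Killing spinor candidate on $D^+(\mathcal{S})$ and then appeal to Theorem \ref{Theorem:IntrinsicKillingSpinorDataV1}. First I would extend $\kappa_{AB}$ off $\mathcal{S}$ by solving the linear homogeneous wave equation \eqref{WaveEquationKappa}, $\square\kappa_{AB}+\Psi_{ABCD}\kappa^{CD}=0$, with Cauchy data given by $\kappa_{AB}|_{\mathcal{S}}$ together with the transverse derivative $\mathcal{P}\kappa_{AB}|_{\mathcal{S}}=-\tfrac{2}{3}\xi_{AB}$, where $\xi_{AB}$ is the intrinsic spinor \eqref{Definition:xi2}. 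As $(\mathcal{M},\bmg)$ is globally hyperbolic this problem is well posed and produces on $D^+(\mathcal{S})$ a unique symmetric spinor which is, by construction, a Killing spinor candidate. Theorem \ref{Theorem:IntrinsicKillingSpinorDataV1} then upgrades this candidate to a genuine Killing spinor provided its five intrinsic conditions, and the prescribed form of $\mathcal{P}\kappa_{AB}$, hold on $\mathcal{S}$.

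Three of those conditions, $\mathcal{D}_{(AB}\kappa_{CD)}=\xi_{ABCD}=0$, $\Psi_{F(ABC}\kappa_{D)}{}^{F}=0$ and $\kappa_{(A}{}^{C}\phi_{B)C}=0$, are precisely the algebraic hypotheses of the present lemma, and the required $\mathcal{P}\kappa_{AB}$ has been built into the extension. Hence the whole task reduces to verifying the two remaining conditions \eqref{shortcondition} and \eqref{longcondition} on $\mathcal{S}$, and this is where the constancy of $\mathfrak{Q}^{2}$ is used, through the redundancy Lemmas \ref{Lemma:Redundancy1} and \ref{Lemma:Redundancy2}. In the canonical expansions $\kappa_{AB}=e^{\varkappa}o_{(A}\iota_{B)}$ and $\phi_{AB}=\varphi\,o_{(A}\iota_{B)}$ of the algebraically general case, a short computation gives $\kappa_{AB}\kappa^{AB}=-\tfrac{1}{2}e^{2\varkappa}$ and $\phi_{AB}\phi^{AB}=-\tfrac{1}{2}\varphi^{2}$, so that $\mathfrak{Q}^{2}=(\kappa_{AB}\kappa^{AB})^{2}\phi_{AB}\phi^{AB}$ equals a fixed nonzero multiple of $(\varphi e^{2\varkappa})^{2}$, the square of the quantity whose constancy Lemma \ref{Lemma:Redundancy1} demands.

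The argument is then organised by the constant value of $\mathfrak{Q}^{2}$. The cleanest way to exploit Lemma \ref{Lemma:Redundancy1} is through the differential form of its conclusion, namely that \eqref{shortcondition} and \eqref{longcondition} hold on the region $\kappa_{AB}\kappa^{AB}\neq 0$ if and only if $\mathcal{D}_{AB}(\varphi e^{2\varkappa})=0$. Since $\mathcal{D}_{AB}\big((\varphi e^{2\varkappa})^{2}\big)=2(\varphi e^{2\varkappa})\,\mathcal{D}_{AB}(\varphi e^{2\varkappa})$ and $(\varphi e^{2\varkappa})^{2}$ is a fixed multiple of $\mathfrak{Q}^{2}$, the hypothesis $\mathcal{D}_{AB}\mathfrak{Q}^{2}=0$ forces $\mathcal{D}_{AB}(\varphi e^{2\varkappa})=0$ wherever $\varphi e^{2\varkappa}\neq 0$; where $\mathfrak{Q}^{2}\equiv 0$ one has instead $\varphi\equiv 0$ on the general region, so $\varphi e^{2\varkappa}\equiv 0$ and the conclusion is again immediate. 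On the complementary region $\kappa_{AB}\kappa^{AB}=0$, $\kappa_{AB}\widehat{\kappa}^{AB}\neq 0$, Lemma \ref{Lemma:Redundancy2} supplies \eqref{shortcondition} and \eqref{longcondition} automatically, with no condition on $\mathfrak{Q}$, while on the locus $\kappa_{AB}=0$ --- singled out by $\kappa_{AB}\widehat{\kappa}^{AB}=0$ through the positivity property recorded earlier --- every term in \eqref{shortcondition}, \eqref{longcondition} vanishes.

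The main obstacle, and the point requiring care, is the bookkeeping at the degenerate loci where the normalised dyad of the redundancy lemmas cannot be chosen smoothly: the interfaces between the algebraically general and special regions, and the zero sets of $\kappa_{AB}$ and $\phi_{AB}$. Since \eqref{shortcondition} and \eqref{longcondition} are closed, pointwise tensorial conditions, the natural way to conclude is to establish them on the open, dense set covered by Lemmas \ref{Lemma:Redundancy1} and \ref{Lemma:Redundancy2} and extend by continuity; verifying that this set is dense, and organising the case split consistently on a connected $\mathcal{S}$, is the only genuinely delicate part of the proof.
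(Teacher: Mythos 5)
Your proposal is correct and follows essentially the same route as the paper: apply Lemma \ref{Lemma:Redundancy1} on the open set where $\kappa_{AB}\kappa^{AB}\neq 0$, Lemma \ref{Lemma:Redundancy2} where $\kappa_{AB}\kappa^{AB}=0$ but $\kappa_{AB}\widehat{\kappa}^{AB}\neq 0$, observe that the conditions hold trivially where $\kappa_{AB}$ vanishes on an open set, and extend to the remaining boundary points by continuity before invoking Theorem \ref{Theorem:KillingSpinorData}. The only genuine addition is your explicit reconciliation of the hypothesis $\mathcal{D}_{AB}\mathfrak{Q}^{2}=0$ with the constancy of $\varphi e^{2\varkappa}$ required by Lemma \ref{Lemma:Redundancy1} (via $\mathfrak{Q}^{2}=-\tfrac{1}{8}(\varphi e^{2\varkappa})^{2}$), a step the paper leaves implicit.
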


\begin{proof}
Let $\mathcal{U}_1$ be the set of all points in $\mathcal{S}$ where
$\kappa_{AB}\kappa^{AB}\neq0$ and $\mathcal{U}_2$ be the set of all
points in $\mathcal{S}$ where
$\kappa_{AB}\widehat{\kappa}^{AB}\neq0$. The scalar functions
$\kappa_{AB}\kappa^{AB}:\mathcal{S}\rightarrow\mathbb{C}$ and
$\kappa_{AB}\widehat{\kappa}^{AB}:\mathcal{S}\rightarrow\mathbb{R}$
are continuous. Therefore, $\mathcal{U}_1$ and $\mathcal{U}_2$ are
open sets. Now, let $\mathcal{V}_1$ and $\mathcal{V}_2$ denote,
respectively, the interiors of $\mathcal{S} \setminus \mathcal{U}_1$
and $\mathcal{V}_1 \setminus \mathcal{U}_2$. On the open set
$\mathcal{V}_1 \cap \mathcal{U}_2$, we have that
$\kappa_{AB}\kappa^{AB}=0$ and
$\kappa_{AB}\widehat{\kappa}^{AB}\neq0$. Hence, by Lemma \ref{Lemma:Redundancy2}, the
conditions \eqref{shortcondition} and \eqref{longcondition} are
satisfied on $\mathcal{V}_1 \cap \mathcal{U}_2$. Similarly, by Lemma
\ref{Lemma:Redundancy1}, conditions \eqref{shortcondition} and \eqref{longcondition} are
satisfied on $\mathcal{U}_1$. On the open set $\mathcal{V}_2$, we have
that $\kappa_{AB}=0$ and therefore \eqref{shortcondition} and
\eqref{longcondition} are trivially satisfied on
$\mathcal{V}_2$. Using the above sets, the 3-manifold $\mathcal{S}$
can be split as
\[
\text{int}\mathcal{S}=\mathcal{U}_{1}\cup(\mathcal{V}_{1}\cap
\mathcal{U}_{2})\cup \mathcal{V}_{2}\cup\partial
\mathcal{U}_{1}\cup\partial \mathcal{V}_{2}. 
\]
By hypothesis, all terms in conditions \eqref{shortcondition} and
\eqref{longcondition} are continuous, and the conditions themselves
are satisfied on the open sets $\mathcal{U}_{1}$, $\mathcal{V}_{2}$
and $\mathcal{V}_{1}\cap \mathcal{U}_{2}$. By continuity, the
conditions are also satisfied on the boundaries $\partial
\mathcal{U}_{1}$ and $\partial \mathcal{V}_{2}$. Therefore,
\eqref{shortcondition} and \eqref{longcondition} are satisfied on
$\text{int}\,\mathcal{S}$, and by continuity this extends to the
whole of $\mathcal{S}$.
\end{proof}

\subsection{Summary}
We can summarise the discussion of the present section calculations in
the following theorem:

\begin{theorem}
\label{Theorem:KillingSpinorData}
Let $(\mathcal{S}, \bmh, \bmK,\bmF)$ be an initial data set for the
Einstein-Maxwell field equations where $\mathcal{S}$ is a Cauchy
hypersurface. If the conditions
\begin{subequations}
\begin{align}
\xi_{ABCD}&=0, \label{ReducedKillingSpinorDataCondition1}\\
\Psi_{F(ABC}\kappa_{D)}{}^{F}&=0, \label{ReducedKillingSpinorDataCondition2}\\
\kappa_{(A}{}^{C}\phi_{B)C}&=0, \label{ReducedKillingSpinorDataCondition3}\\
\mathfrak{Q}^{2}\equiv\left(\kappa_{AB}\kappa^{AB} \right)^{2}
  \phi_{AB}\phi^{AB}&=\text{constant}, \label{ReducedKillingSpinorDataCondition4}
\end{align}
\end{subequations}
are satisfied on $\mathcal{S}$, then the development of the initial
data set will admit a Killing spinor in the domain of dependence of
$\mathcal{S}$. The Killing spinor is obtained by evolving
\eqref{WaveEquationKappa} with initial data satisfying the
above conditions and
\[
\mathcal{P}\kappa_{AB}=-\frac{2}{3}\xi_{AB}
\]
on $\mathcal{S}$.
\end{theorem}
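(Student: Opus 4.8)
The plan is to read this statement as the consolidation of the decomposition and redundancy analysis of the present section, so that its proof assembles results already in hand rather than producing new computation. The first observation I would make is that, by the definition \eqref{Definition:xi4}, condition \eqref{ReducedKillingSpinorDataCondition1} is exactly $\mathcal{D}_{(AB}\kappa_{CD)}=0$; combined with \eqref{ReducedKillingSpinorDataCondition2} and \eqref{ReducedKillingSpinorDataCondition3} this means the hypotheses here are identical to those of the unlabelled lemma immediately preceding the theorem, which already yields a Killing spinor on $D^+(\mathcal{S})$. The task is therefore to trace how that conclusion is reached and to record the explicit evolution prescription that produces the spinor.

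I would first construct the Killing spinor \emph{candidate}. Prescribe Cauchy data on $\mathcal{S}$ consisting of the given $\kappa_{AB}$ together with the normal derivative $\mathcal{P}\kappa_{AB}=-\frac{2}{3}\xi_{AB}$, and propagate it off $\mathcal{S}$ by the wave equation \eqref{WaveEquationKappa}. By the definition of a Killing spinor candidate in the Remark following Lemma \ref{Lemma:IntegrabilityConditionsKillingSpinorEquations}, the evolved field is a candidate throughout the development, so the intrinsic criterion of Theorem \ref{Theorem:IntrinsicKillingSpinorDataV1} is applicable: it suffices to check that \eqref{ReducedKillingSpinorDataCondition1}-\eqref{ReducedKillingSpinorDataCondition3} together with \eqref{shortcondition} and \eqref{longcondition} hold on $\mathcal{S}$. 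Conditions \eqref{ReducedKillingSpinorDataCondition1}-\eqref{ReducedKillingSpinorDataCondition3} are assumed outright, so the substantive point is the redundancy of \eqref{shortcondition} and \eqref{longcondition}.

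To establish those two conditions I would split $\mathcal{S}$ according to the algebraic type of $\kappa_{AB}$. On the open set where $\kappa_{AB}\kappa^{AB}\neq0$ I apply Lemma \ref{Lemma:Redundancy1}, after checking that constancy of the global scalar $\mathfrak{Q}^2=(\kappa_{AB}\kappa^{AB})^2\phi_{AB}\phi^{AB}$ is equivalent, in the adapted basis of that region, to constancy of the local quantity $\varphi e^{2\varkappa}$ used there --- the two are related by a fixed numerical factor and a square, so their constancy conditions agree on connected components (a continuous complex function with constant square is locally constant). On the open set where $\kappa_{AB}\kappa^{AB}=0$ but $\kappa_{AB}\widehat{\kappa}^{AB}\neq0$ I apply Lemma \ref{Lemma:Redundancy2}, in which \eqref{shortcondition} and \eqref{longcondition} hold automatically, and on the interior of the locus $\kappa_{AB}=0$ they are trivial. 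In every case the hypotheses $\Psi_{F(ABC}\kappa_{D)}{}^{F}=0$ and $\kappa_{(A}{}^{C}\phi_{B)C}=0$ furnish the alignment of $\Psi_{ABCD}$ and $\phi_{AB}$ with $\kappa_{AB}$ needed to diagonalise all three spinors in a common dyad.

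Finally I would glue the pieces together. Since every term entering \eqref{shortcondition} and \eqref{longcondition} is continuous on $\mathcal{S}$ and the conditions already hold on the three open regions, continuity extends them across the shared boundaries and hence to all of $\mathcal{S}$. With \eqref{ReducedKillingSpinorDataCondition1}-\eqref{ReducedKillingSpinorDataCondition3}, \eqref{shortcondition}, \eqref{longcondition} and the prescribed normal derivative all in force, Theorem \ref{Theorem:IntrinsicKillingSpinorDataV1} certifies that the evolved $\kappa_{AB}$ is a genuine Killing spinor on $D^+(\mathcal{S})$. I expect the main obstacle to lie in the first case: verifying cleanly that the single, basis-independent scalar $\mathfrak{Q}^2$ is the correct type-independent surrogate for the dyad-dependent invariant of Lemma \ref{Lemma:Redundancy1}, and that the continuity gluing across the null and vanishing loci of $\kappa_{AB}$ is genuinely justified by the continuity of all quantities involved.
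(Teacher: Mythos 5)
Your proposal is correct and follows essentially the same route as the paper: the theorem is stated there as a summary whose proof is exactly the assembly you describe, namely Theorem \ref{Theorem:IntrinsicKillingSpinorDataV1} combined with the redundancy Lemmas \ref{Lemma:Redundancy1} and \ref{Lemma:Redundancy2} and the continuity-gluing argument of the unlabelled lemma preceding the statement. Your extra remark relating the basis-independent scalar $\mathfrak{Q}^{2}=(\kappa_{AB}\kappa^{AB})^{2}\phi_{AB}\phi^{AB}$ to the dyad quantity $\varphi e^{2\varkappa}$ (they differ by a fixed numerical factor and a square, so constancy of one gives local constancy of the other) is a useful explicit justification of a step the paper leaves implicit.
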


\section{The approximate Killing spinor equation}
\label{Section:ApproximateKillingEquation}

In the previous section we have identified the conditions that need to
be satisfied by an initial data set for the Einstein-Maxwell equations
so that its development is endowed with a Killing spinor ---see
Theorem \ref{Theorem:KillingSpinorData}. Together with the
characterisation of the Kerr-Newman spacetime given by Theorem
\ref{Theorem:KerrNewmanCharacterisation}, the latter provide a way of
characterising initial data for the Kerr-Newman spacetime. The key
equation in this characterisation is the \emph{spatial Killing spinor
equation}
\[
\mathcal{D}_{(AB} \kappa_{CD)}=0.
\]
As it will be seen in the following, this equation is overdetermined
and thus, admits no solution for a generic initial data set
$(\mathcal{S},\bmh,\bmK,\bmF)$. Following the discussion of Section 5
in \cite{BaeVal10b}, in this section we show how to construct a
  elliptic equation for a spinor $\kappa_{AB}$ over $\mathcal{S}$
  which can always be solved and which provides, in some sense, a best
  fit to a spatial Killing spinor. This \emph{approximate Killing
    spinor} will be used, in turn, to measure the deviation of the
  electrovacuum initial data set under consideration from initial data
  for the Kerr-Newman spacetime.

\subsection{Basic identities}
In the present section we provide a brief discussion of the
basic ellipticity properties of the spatial Killing equation. In what
follows, let $\mathfrak{S}_{(AB)}(\mathcal{S})$ and
$\mathfrak{S}_{(ABCD)}(\mathcal{S})$ denote, respectively, the space
of totally symmetric valence 2 and 4 spinor fields over the 3-manifold
$\mathcal{S}$. Given $\mu_{AB},\, \nu_{AB} \in
\mathfrak{S}_{(AB)}(\mathcal{S})$, $\zeta_{ABCD}, \, \chi_{ABCD}\in
\mathfrak{S}_{(ABCD)}(\mathcal{S})$ one can use the Hermitian
structure induced on $\mathcal{S}$ by $\tau^{AA'}$ to define an inner
product in $\mathfrak{S}_{(AB)}(\mathcal{S})$ and
$\mathfrak{S}_{(ABCD)}(\mathcal{S})$, respectively, via
\begin{equation}
\langle \bmmu, \bmnu \rangle \equiv \int_{\mathcal{S}}
\mu_{AB} \widehat{\nu}^{AB} \mbox{d}\mu, \qquad \langle \bmzeta, \bmchi \rangle \equiv \int_{\mathcal{S}}
\zeta_{ABCD} \widehat{\chi}^{ABCD} \mbox{d}\mu
\label{InnerProductSpinorFields}
\end{equation}
where $\mbox{d}\mu$ denotes volume form of the 3-metric $\bmh$. 

\medskip
Let now $\Phi$ denote the \emph{spatial Killing spinor operator} 
\[
\Phi: \mathfrak{S}_{(AB)}(\mathcal{S}) \longrightarrow
\mathfrak{S}_{(ABCD)}(\mathcal{S}), \qquad \Phi(\bmkappa) \equiv
\mathcal{D}_{(AB} \kappa_{CD)}. 
\]
The inner product \eqref{InnerProductSpinorFields} allows to define
$\Phi^*: \mathfrak{S}_{(ABCD)}(\mathcal{S}) \longrightarrow
\mathfrak{S}_{(AB)}(\mathcal{S})$, the \emph{formal adjoint of $\Phi$},
through the condition
\[
\langle \Phi(\bmkappa), \bmzeta \rangle = \langle
\bmkappa, \Phi^*(\bmzeta)\rangle. 
\]
In order to evaluate the above condition one makes use of the identity
(\emph{integration by parts})
\begin{equation}
 \int_{\mathcal{U}} \mathcal{D}^{AB} \kappa^{CD}
\widehat{\zeta}_{ABCD} \mbox{d}\mu - \int_{\mathcal{U}} \kappa^{AB}
\widehat{\mathcal{D}^{CD}\zeta_{ABCD}}\mbox{d}\mu + \int_{\mathcal{U}}
2 \kappa^{AB} \Omega^{CDF}{}_A \widehat{\zeta}_{BCDF}\mbox{d}\mu =
\int_{\partial \mathcal{U}} n^{AB} \kappa^{CD} \widehat{\zeta}_{ABCD} \mbox{d}S
\label{IntegrationByParts}
\end{equation}
with $\mathcal{U}\subset\mathcal{S}$ and where $\mbox{d}S$ denotes the
area element of $\partial\mathcal{U}$, $n_{AB}$ is the spinorial
counterpart of its outward pointing normal and $\zeta_{ABCD}$ is a totally
symmetric spinorial field. Now, observing that 
\begin{eqnarray*}
&& \langle \Phi(\bmkappa), \bmzeta \rangle = \int_{\mathcal{S}}
\mathcal{D}_{(AB} \kappa_{CD)} \widehat{\zeta}^{ABCD} \mbox{d}\mu \\
&& \phantom{ \langle \Phi(\bmkappa), \bmzeta \rangle} =
\int_{\mathcal{S}} \mathcal{D}_{AB} \kappa_{CD} \widehat{\zeta}^{ABCD} \mbox{d}\mu, 
\end{eqnarray*}
it follows then from the identity \eqref{IntegrationByParts} that 
\[
\Phi^*(\bmzeta) = \mathcal{D}^{AB}\zeta_{ABCD} -2 \Omega^{ABF}{}_{(C} \zeta_{D)ABF}.
\]

\begin{definition}
The composition operator $L\equiv \Phi^* \circ \Phi:
\mathfrak{S}_{(AB)}(\mathcal{S}) \longrightarrow
\mathfrak{S}_{(AB)}(\mathcal{S})$ given by
\begin{equation}
L(\bmkappa) \equiv \mathcal{D}^{AB} \mathcal{D}_{(AB} \kappa_{CD)} -
\Omega^{ABF}{}_{(A} \mathcal{D}_{|DF|} \kappa_{B)C} -
\Omega^{ABF}{}_{(A}\mathcal{D}_{B)F} \kappa_{CD}
\label{ApproximateKillingSpinorEquation}
\end{equation}
will be called the \emph{approximate Killing spinor operator} and the
equation
\[
L(\bmkappa) =0
\]
the \emph{approximate Killing spinor equation}.
\end{definition}

\begin{remark}
{\em A direct computation shows that the approximate Killing spinor
  equation \eqref{ApproximateKillingSpinorEquation} is, in fact, the
  Euler-Lagrange equation of the functional}
\[
J\equiv \int_{\mathcal{S}} \mathcal{D}_{(AB}\kappa_{CD)} \widehat{\mathcal{D}^{AB}\kappa^{CD}}\mbox{d}\mu.
\]
\end{remark}

\subsection{Ellipticity of the approximate Killing spinor equation}

The key observation concerning the approximate Killing spinor operator
is given in the following:

\begin{lemma}
The operator $L$ is a formally self-adjoint elliptic operator.
\end{lemma}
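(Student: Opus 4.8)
The plan is to prove the two claims---formal self-adjointness and ellipticity---separately, as they are essentially independent.

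Formal self-adjointness is almost immediate from the construction of $L$. Since $L = \Phi^* \circ \Phi$, for any $\bmmu, \bmnu \in \mathfrak{S}_{(AB)}(\mathcal{S})$ one has
\[
\langle L(\bmmu), \bmnu \rangle = \langle \Phi^*\Phi(\bmmu), \bmnu \rangle = \langle \Phi(\bmmu), \Phi(\bmnu) \rangle = \langle \bmmu, \Phi^*\Phi(\bmnu)\rangle = \langle \bmmu, L(\bmnu)\rangle,
\]
using the defining property of the formal adjoint $\Phi^*$ twice. This requires only that one works on a region where boundary terms in the integration-by-parts identity \eqref{IntegrationByParts} vanish (e.g. compactly supported spinors, or $\mathcal{S}$ without boundary), which is the standard setting for defining a \emph{formal} adjoint.

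Ellipticity is determined entirely by the principal symbol. I would first identify the principal (highest-order) part of $L$: from the explicit expression \eqref{ApproximateKillingSpinorEquation}, the only second-order term is $\mathcal{D}^{AB}\mathcal{D}_{(AB}\kappa_{CD)}$, since the remaining terms involving $\Omega^{ABF}{}_{(A}$ are first order. The principal symbol $\sigma(\xi)$ at a nonzero real covector $\xi_{AB}$ is therefore obtained by replacing each $\mathcal{D}_{AB}$ by $\xi_{AB}$, giving the map $\kappa_{CD}\mapsto \xi^{AB}\xi_{(AB}\kappa_{CD)}$ on $\mathfrak{S}_{(AB)}(\mathcal{S})$. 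To establish ellipticity I must show this symbol is an isomorphism for every $\xi_{AB}\neq 0$. The cleanest route is to exploit the relation $L = \Phi^*\circ\Phi$ at the symbol level: $\sigma_L(\xi) = \sigma_{\Phi^*}(\xi)\circ\sigma_\Phi(\xi)$, where $\sigma_\Phi(\xi)(\kappa_{AB}) = \xi_{(AB}\kappa_{CD)}$ is the symmetrised-product map into $\mathfrak{S}_{(ABCD)}(\mathcal{S})$ and $\sigma_{\Phi^*}(\xi)$ is its adjoint. It then suffices to show that $\sigma_\Phi(\xi)$ is \emph{injective} for $\xi_{AB}\neq 0$, for then $\sigma_L(\xi) = \sigma_{\Phi^*}(\xi)\sigma_\Phi(\xi)$ is a positive (hence invertible) self-adjoint endomorphism on the finite-dimensional fibre.

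The core computation---and the step I expect to be the main obstacle---is thus showing that $\xi_{(AB}\kappa_{CD)}=0$ forces $\kappa_{AB}=0$ whenever $\xi_{AB}$ is a nonzero \emph{real} (in the Hermitian sense, $\widehat{\xi}_{AB}=-\xi_{AB}$) valence-2 spinor. A convenient way to see this is to contract the symmetrised product with $\xi^{CD}$ and $\widehat{\xi}^{AB}$ and use the algebraic identities for spinors on a three-dimensional manifold; since a real $\xi_{AB}$ corresponds to a nonzero real spatial vector, one can choose an adapted spin basis in which $\xi_{AB}$ takes a normal form and verify directly that the only symmetric $\kappa_{AB}$ annihilated by the symmetrisation is the zero spinor. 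The subtlety is that the symbol must be checked over real covectors of the three-dimensional base $\mathcal{S}$, so the relevant reality constraint from the space-spinor formalism must be invoked; once injectivity of $\sigma_\Phi(\xi)$ is secured, ellipticity of $L=\Phi^*\Phi$ follows formally.
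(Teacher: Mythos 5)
Your proposal is correct and follows essentially the same route as the paper: identify the principal part $\mathcal{D}^{AB}\mathcal{D}_{(AB}\kappa_{CD)}$, pass to the symbol $\kappa_{CD}\mapsto\xi^{AB}\xi_{(AB}\kappa_{CD)}$ over real covectors $\widehat{\xi}_{AB}=-\xi_{AB}$, and prove injectivity; self-adjointness is, as you say, immediate from $L=\Phi^*\circ\Phi$ and is not even argued separately in the paper. The one organizational difference is that you factor the symbol as $\sigma_L(\bmxi)=\sigma_{\Phi^*}(\bmxi)\circ\sigma_\Phi(\bmxi)$ and reduce to injectivity of the symmetrised-product map $\kappa_{AB}\mapsto\xi_{(AB}\kappa_{CD)}$, whereas the paper works with $\sigma_L(\bmxi)$ directly: it expands the symmetrisation, uses $\xi^{AB}\xi_{CB}=-\tfrac{1}{2}\delta_C{}^A|\bmxi|^2$ to reduce the kernel condition to $3\kappa_{CD}|\bmxi|^2+\xi_{CD}\langle\bmxi,\bmkappa\rangle=0$, and contracts with $\widehat{\kappa}^{CD}$ to obtain $3|\bmkappa|^2|\bmxi|^2+|\langle\bmxi,\bmkappa\rangle|^2=0$, a sum of nonnegative terms. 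Note that this last computation is exactly the content of the step you defer ("the main obstacle"): contracting $\xi^{AB}\xi_{(AB}\kappa_{CD)}\widehat{\kappa}^{CD}$ is the same as computing $\|\sigma_\Phi(\bmxi)\bmkappa\|^2$, so your positivity argument and the paper's explicit algebra are the same fact in different clothing; to make your version complete you would still need to carry out that contraction (or the adapted-basis normal form you mention) rather than assert it.
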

\begin{proof}
It is sufficient to look at the principal part of the operator $L$
given by
\[
P(L)(\bmkappa)=\mathcal{D}^{AB}\mathcal{D}_{(AB}\kappa_{CD)}.
\]
The symbol for this operator is given by
\[
\sigma_{L}(\bmxi)\equiv\xi^{AB}\xi_{(AB}\kappa_{CD)}
\]
where the argument $\xi_{AB}$ satisfies $\xi_{AB}=\xi_{(AB)}$ and
$\widehat{\xi}_{AB}=-\xi_{AB}$---i.e. $\xi$ is a real symmetric
spinor. Also, define the inner product $\langle\,,\rangle$ on the space of symmetric valence-2 spinors by
\[
\langle\bmxi,\bmeta\rangle\equiv\widehat{\xi}^{AB}\eta_{AB}.
\]

The operator $L$ is elliptic if the map
\[
\sigma_{L}(\bmxi): \kappa_{AB}\mapsto\xi^{CD}\xi_{(CD} \kappa_{AB)}
\]
is an isomorphism when
$|\bmxi|^2\equiv\langle\bmxi,\bmxi\rangle\neq0$. As the above mapping
is linear and between vector spaces of the same dimension, one only
needs to verify injectivity ---in other words, that if
$\xi^{AB}\xi_{(AB}\kappa_{CD)}=0$, then $\kappa_{AB}=0$. To show this,
first expand the symmetrisation in the symbol to obtain
\[
-\kappa_{CD}|\bmxi|^2-\langle\bmxi,\bmkappa\rangle\xi_{CD}+2\xi^{AB} \xi_{CB}\kappa_{AD} +2\xi^{AB}\xi_{DB}\kappa_{AC}=0,
\]
where we have used the reality condition
$\widehat{\xi}_{AB}=-\xi_{AB}$. Note also that the Jacobi identity
implies that
\[
\xi^{AB}\xi_{CB}=-\frac{1}{2}\delta_{C}{}^{A}|\bmxi|^2
\]
which reduces the above equation to
\[
3\kappa_{CD}|\bmxi|^2+\xi_{CD}\langle\bmxi,\bmkappa\rangle=0.
\]
Contracting this with $\widehat{\kappa}^{CD}$, and using the conjugate
symmetry of the inner product, we obtain
\begin{equation*}
3|\bmkappa|^2|\bmxi|^2+|\langle\bmxi,\bmkappa\rangle|^2=0.
\end{equation*}
Both of these terms are positive, and so the equality can only hold if
each term vanishes individually. Taking the first of these, one sees
that when $|\bmxi|^2\neq0$, we must have $|\bmkappa|^2=0$. This is
equivalent to $\kappa_{AB}=0$, completing the proof of
injectivity and establishing the ellipticity of $L$.
\end{proof}

\section{The approximate Killing spinor equation in
  asymptotically Euclidean manifolds}
\label{Section:AsymptoticallyEuclideanManifolds}

The purpose of this section is to discuss the solvability of the
approximate Killing spinor equation, equation
\eqref{ApproximateKillingSpinorEquation}, in asymptotically Euclidean
manifolds. As a result of this analysis one concludes that for this
type of initial data sets for the Einstein-Maxwell equations it is
always possible to construct an approximate Killing spinor.

\subsection{Weighted Sobolev norms}
The discussion of asymptotic boundary conditions for the approximate
Killing equation makes use of \emph{weighted Sobolev norms} and
\emph{spaces}. In this section we introduce the necessary terminology
and conventions to follow the discussion. The required properties of these objects for the
present analysis are discussed in detail in Section 6.2  of
\cite{BaeVal10b} to which the reader is directed for further reference.

Given $u$ a scalar function over $\mathcal{S}$ and
$\delta\in \mathbb{R}$, let
$\parallel u \parallel_\delta$ denote the weighted $L^2$ Sobolev norm
of $u$. All throughout we make use of of Bartnik's conventions for
the weights ---see \cite{Bar86}--- so that, in particular $\parallel
u \parallel_{-3/2}$ is the \emph{standard} $L^2$ norm of
$u$. Similarly, let $H_\delta^s$ with $s$ a non-negative index denote
the weighted Sobolev space of functions for which the norm
\[
\parallel u \parallel_{s,\delta}\equiv \sum_{0\leq |\alpha |\leq
  s} \parallel D^\alpha u \parallel_{\delta-|\alpha|} 
\]
is finite where $\alpha=(\alpha_1,\alpha_2,\alpha_3)$ is a multiindex
and $|\alpha|\equiv \alpha_1+\alpha_2+\alpha_3$. We say that $u\in
H_\delta^\infty$ if $u\in H^s_\delta$ for all $s$. We will say that a
spinor or a tensor belongs to a function space if its norm does ---so,
for instance $\zeta_{AB}\in H^s_\delta$ is a shorthand for
$(\zeta_{AB}\hat{\zeta}^{AB} + \zeta_A{}^A \hat{\zeta}_B{}^B)^{1/2}\in
H^s_\delta$. A property of the weighted Sobolev spaces that will be
used repeatedly is the following: if $u\in H^\infty_\delta$ then $u$
is smooth (i.e. $C^\infty$ over $\mathcal{S}$) and has a fall off at
infinity such that $D^\alpha u = o (r^{\delta
  -|\alpha|})$\footnote{Recall that the \emph{small o} indicates that
  if $f(x)=o(x^n)$, then $f(x)/x^n\rightarrow 0$ as $x\rightarrow
  0$. }. In a slight abuse of notation, if $u\in H^\infty_\delta$ then
we will often say that $u=o_\infty (r^\delta)$ at a given asymptotic end. 

\subsection{Asymptotically Euclidean manifolds}

We begin by spelling out our assumptions on the class of
Einstein-Maxwell initial data sets to be considered in the remainder of
this article. The
Einstein-Maxwell constraint equations are given by
\begin{eqnarray*}
&& r - K^2 + K_{ij} K^{ij} = 2 \rho, \\
&& D^j K_{ij} -D_i K = j_i, \\
&& D^i E_i =0, \\
&& D^i B_i =0,
\end{eqnarray*}
where $D_i$ denotes the Levi-Civita connection of the 3-metric $\bmh$,
$r$ is the associated Ricci scalar, $K_{ij}$ is the extrinsic
curvature, $K\equiv K_i{}^i$, $\rho$ is the energy-density of the
electromagnetic field, $j_i$ is the associated Poynting vector and
$E_i$ and $B_i$ denote the electric and magnetic parts of the Faraday
tensor with respect to the unit normal of $\mathcal{S}$. 

\begin{assumption}
\label{Assumption:AsymptoticConditions}
In the remainder of this article it will be assumed that one has
initial data $(\bmh, \bmK,\bmE,\bmB)$ for the Einstein-Maxwell equations which is
asymptotically Reissner-Nordstr\"om in the sense that in each asymptotic end
of $\mathcal{S}$ there exist \emph{asymptotically Cartesian coordinates} $(x^\alpha)$ and two constants $m, q$ for which 
\begin{subequations}
\begin{eqnarray}
&& h_{\alpha\beta} = -\left( 1 +
  \frac{2m}{r}\right)\delta_{\alpha\beta} + o_\infty(r^{-3/2}), \label{AsymptoticEuclidean1}\\
&& K_{\alpha\beta} = o_\infty (r^{-5/2}), \label{AsymptoticEuclidea2}\\
&& E_\alpha = \frac{qx_{\alpha}}{r^2} + o_\infty (r^{-5/2}), \label{AsymptoticEuclidean3}\\
&& B_\alpha = o_\infty(r^{-5/2}). \label{AsymptoticEuclidean4}
\end{eqnarray}
\end{subequations}
\end{assumption}

\begin{remark}
{\em The asymptotic conditions spelled in Assumption
  \ref{Assumption:AsymptoticConditions} ensure that the total electric
  charge of the initial data is non-vanishing. In particular, it contains standard
  initial data for the Kerr-Newman spacetime in, say, Boyer-Lindquist
  coordinates as an example. More generally, the assumptions are
  consistent with the notion of {\em stationary asymptotically flat end}
  provided in Definition \ref{Definition:StationaryEnd}. }
\end{remark}

\begin{remark}
{\em The above class of initial data is not the most general one could
consider. In particular, conditions
\eqref{AsymptoticEuclidean1}-\eqref{AsymptoticEuclidean4} exclude
boosted initial data. In order to do so one would require that
\[
K_{\alpha\beta} = o_\infty(r^{-3/2}).
\] 
The Einstein-Maxwell constraint equations would then require one to modify
the leading behaviour of the the 3-metric $h_{\alpha\beta}$. The
required modifications for this extension of the present analysis are
discussed in \cite{BaeVal10b}. }
\end{remark}

\subsection{Asymptotic behaviour of the approximate Killing spinor}

In this section we discuss the asymptotic behaviour of solutions to
the spatial Killing spinor equation on asymptotically Euclidean
manifolds of the type described in Assumption
\ref{Assumption:AsymptoticConditions}. To this end, we first consider
the behaviour of the Killing spinor in the Kerr-Newman spacetime. In a
second stage we impose the same asymptotics solutions to the
approximate Killing spinor equation on slices of a more general
spacetime. In what follows, we concentrate our discussion to an
asymptotic end.

\subsubsection{Asymptotic behaviour in the exact Kerr-Newman spacetime}
 For the exact Kerr-Newman spacetime with
mass $m$, angular momentum $a$ and charge $q$ it is possible to
introduce a NP frame $\{l^a,\, n^a,\, m^a,\, \bar{m}^a \}$ with associated spin dyad
$\{ o^A,\, \iota^A\}$  such that the spinors $\kappa_{AB}$, $\phi_{AB}$
and $\Psi_{ABCD}$ admit the expansion
\[
\kappa_{AB} =\varkappa o_{(A}\iota_{B)}, \qquad \phi_{AB} = \varphi
o_{(A}\iota_{B)}, \qquad \Psi_{ABCD} = \psi o_{(A}o_B \iota_C \iota_{D)},
\]
with 
\begin{eqnarray*}
&& \varkappa = \frac{2}{3} (r - \mbox{i} a \cos\theta), \\
&&  \varphi = \frac{q}{(r-\mbox{i}a \cos \theta)^2}, \\
&& \psi = \frac{6}{(r -\mbox{i}a \cos\theta)^3}\left(  \frac{q^2}{r+
    \mbox{i} a \cos\theta } -m \right),
\end{eqnarray*}
where $r$ denotes the standard \emph{Boyer-Lindquist} radial
coordinate ---see \cite{AndBaeBlu15} for more details. A further
computation shows that the spinorial counterpart, $\xi^{AA'}$, of the
Killing vector $\xi^a$ takes the form
\begin{equation}
\xi_{AA'} =-\frac{3}{2}\varkappa ( \mu o_A \bar{o}_{A'} - \pi o_A
\bar{\iota}_{A'} + \tau \iota_A \bar{o}_{A'} - \rho \iota_A
\bar{\iota}_{A'})
\label{XiExpansion}
\end{equation}
where the NP spin connection coefficients $\mu$, $\pi$, $\tau$ and
$\rho$ satisfy the conditions
\[
\bar{\mu}\bar{\varkappa} = \mu\varkappa, \qquad
\bar{\tau}\bar{\varkappa}= \varkappa\pi, \qquad \bar{\rho}
\bar{\varkappa} = \varkappa \rho
\]
which ensure that $\xi_{AA'}$ is a Hermitian spinor
---i.e. $\xi_{AA'}=\bar{\xi}_{AA'}$. Despite the
conciseness of the above expressions, the basis of \emph{principal spinors}
given by  $\{o^A,\iota^A \}$ is not well adapted to the discussion of
asymptotics on a stationary end of the Kerr-Newman spacetime. 

\medskip
From the point of view of asymptotics, a better representation of the
Kerr-Newman spacetime is obtained using a NP frame $\{ l^{\prime
  a},\,n^{\prime a},\, m^{\prime a}, \, \bar{m}^{\prime a}  \}$ with associated spin dyad $\{ o^{\prime
  A},\, \iota^{\prime A} \}$ such that  
\[
\tau^a = l^{\prime a} + n^{\prime a} = \sqrt{2}(\bmpartial_t)^a,
\]
where the vector $\tau^a$ is the tensorial counterpart of the spinor
$\tau^{AA'}$. It follows from the above that
\begin{equation}
\tau^{AA'} =o^{\prime A}\bar{o}^{\prime A'} + \iota^{\prime A}
\bar{\iota}^{\prime A'}.
\label{TauExpansion}
\end{equation}
Notice, in particular, that from the above expression it follows that
$\iota'_A=\widehat{o'}_A$. As $\tau_{AA'} =\sqrt{2}\xi_{AA'}$, one can use the expressions
\eqref{XiExpansion} and \eqref{TauExpansion} to compute the leading
terms of the Lorentz transformation relating the NP frames $\{l^a,\,
n^a,\, m^a,\, \bar{m}^a \}$ and $\{ l^{\prime
  a},\,n^{\prime a},\, m^{\prime a}, \, \bar{m}^{\prime a}  \}$. The
details of this tedious computation will not be presented here ---just
the main result. 

In what follows it will be convenient to denote the spinors of the
basis $\{ o^{\prime
  A},\, \iota^{\prime A} \}$ in the form $\{  \epsilon_\bmA{}^A \}$
where
\[
\epsilon_\bmzero{}^A = o^{\prime A}, \qquad \epsilon_\bmone{}^A
=\iota^{\prime A}. 
\]
Moreover, let $\kappa_{\bmA\bmB}
\equiv \epsilon_\bmA{}^A\epsilon_\bmB{}^B\kappa_{AB}$ denote the components
of $\kappa_{AB}$ with respect to the basis $\{ \epsilon_\bmA{}^A
\}$. It can then be shown that for Kerr-Newman initial data satisfying
the asymptotic conditions
\eqref{AsymptoticEuclidean1}-\eqref{AsymptoticEuclidean4} one can
choose asymptotically Cartesian coordinates $(x^\alpha)=(x^1,x^2,x^3)$
and orthonormal frames on the asymptotic ends such that
\begin{equation}
\kappa_{\bmA\bmB} = \mp \frac{\sqrt{2}}{3}x_{\bmA\bmB} \mp
\frac{2\sqrt{2}m}{3r}x_{\bmA\bmB} + o_\infty(r^{-1/2}),
\label{AsymptoticFormKillingSpinor}
\end{equation}
with 
\[
x_{\bmA\bmB} \equiv \frac{1}{\sqrt{2}}
\left(
\begin{array}{cc}
-x^1 + \mbox{i} x^2 & x^3 \\
x^3 & x^1 +\mbox{i} x^2
\end{array}
\right).
\]
From the the above expressions one finds that on the asymptotic ends
\begin{eqnarray*}
&& \xi = \pm \sqrt{2} + o_\infty(r^{-1/2}), \\
&& \xi_{\bmA\bmB} = o_\infty(r^{-1/2}),
\end{eqnarray*}
where $\xi_{\bmA\bmB} \equiv
\epsilon_\bmA{}^A\epsilon_\bmB{}^B\xi_{AB}$. Moreover, for any electrovacuum initial data
set satisfying the conditions
\eqref{AsymptoticEuclidean1}-\eqref{AsymptoticEuclidean4} a spinor of
the form \eqref{AsymptoticFormKillingSpinor} satisfies
\[
\mathcal{D}_{(\bmA\bmB} \kappa_{\bmC\bmD)} = o_\infty(r^{-3/2}).
\]

\subsubsection{Asymptotic behaviour for non-Kerr data}

Not unsurprisingly, given electrovacuum initial data satisfying the conditions
\eqref{AsymptoticEuclidean1}-\eqref{AsymptoticEuclidean4}, it is always
possible to find a spinor $\kappa_{AB}$ satisfying the expansion
\eqref{AsymptoticFormKillingSpinor} in the asymptotic region. More precisely, one has:

\begin{lemma}
\label{Lemma:BoundaryConditions}
For any asymptotic end of an electrovacuum initial data set satisfying
\eqref{AsymptoticEuclidean1}-\eqref{AsymptoticEuclidean4} there exists
a spinor $\kappa_{AB}$ such that 
\[
\kappa_{AB} = \mp \frac{\sqrt{2}}{3}x_{\bmA\bmB} \mp
\frac{2\sqrt{2}m}{3r}x_{\bmA\bmB}+o_\infty(r^{-1/2})
\]
with
\begin{subequations}
\begin{eqnarray}
&& \xi = \pm \sqrt{2} + o_{\infty}(r^{-1/2}), \label{AnsatzBoundaryConditions1}\\
&& \xi_{AB} = o_\infty(r^{-1/2}), \label{AnsatzBoundaryConditions2}\\
&& \xi_{ABCD} = o_\infty(r^{-3/2}). \label{AnsatzBoundaryConditions3}
\end{eqnarray}
\end{subequations}
\end{lemma}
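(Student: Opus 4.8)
The plan is to prove existence by explicit construction, prescribing the leading asymptotic behaviour of $\kappa_{AB}$ by hand and then verifying that the induced fall-off of the irreducible components $\xi$, $\xi_{AB}$ and $\xi_{ABCD}$ follows automatically from the asymptotic conditions \eqref{AsymptoticEuclidean1}--\eqref{AsymptoticEuclidean4}. First I would fix, in the given asymptotic end, the asymptotically Cartesian coordinates $(x^\alpha)$ and the adapted orthonormal frame (with associated spin dyad) supplied by Assumption~\ref{Assumption:AsymptoticConditions}, and simply \emph{define}
\[
\kappa_{\bmA\bmB} \equiv \mp \frac{\sqrt{2}}{3}x_{\bmA\bmB} \mp \frac{2\sqrt{2}m}{3r}x_{\bmA\bmB}
\]
in the asymptotic region, with $x_{\bmA\bmB}$ the position spinor introduced above, extending $\kappa_{AB}$ smoothly and arbitrarily into the interior of $\mathcal{S}$ (the statement is purely asymptotic, so the interior extension is immaterial). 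With this choice the expansion of $\kappa_{AB}$ demanded in the statement holds by construction, and the entire content of the lemma reduces to the verification of \eqref{AnsatzBoundaryConditions1}--\eqref{AnsatzBoundaryConditions3}.

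For the verification I would compute the Sen derivative $\mathcal{D}_{AB}\kappa_{CD}$ and extract the irreducible pieces $\xi$, $\xi_{AB}$, $\xi_{ABCD}$ defined in \eqref{Definition:xi0}--\eqref{Definition:xi4}. The key simplification is to split the Sen connection against the intrinsic Levi-Civita connection via $\mathcal{D}_{AB}\pi_C = D_{AB}\pi_C + \tfrac{1}{2}K_{ABC}{}^{Q}\pi_Q$. Since $K_{\alpha\beta}=o_\infty(r^{-5/2})$ while $\kappa_{AB}=O(r)$, every extrinsic-curvature correction to the Sen derivative is $o_\infty(r^{-3/2})$ and hence harmless for all three conditions; moreover the Maxwell field does not enter at all, because $\xi$, $\xi_{AB}$ and $\xi_{ABCD}$ are defined purely through $\kappa_{AB}$ and the connection. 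Thus the problem collapses to estimating the Levi-Civita derivative $D_{AB}x_{CD}$ in the asymptotically Euclidean metric \eqref{AsymptoticEuclidean1}.

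The decisive point is that in the flat background $x_{\bmA\bmB}$ is an exact spatial Killing spinor: its totally symmetric flat derivative vanishes and its divergence is constant, so the flat contributions reproduce exactly $\xi=\pm\sqrt{2}$ and $\xi_{AB}=\xi_{ABCD}=0$. The deviations of $\bmh$ from flatness enter through the Christoffel symbols, which are $O(m/r^2)$; acting on the leading $O(r)$ part of $\kappa_{AB}$ these would naively give an $O(r^{-1})$ error, but the subleading $\mp\tfrac{2\sqrt{2}m}{3r}x_{\bmA\bmB}$ term in the ansatz is chosen precisely so that this $m/r$ contribution cancels in $\xi_{ABCD}$, leaving a remainder controlled by the $o_\infty(r^{-3/2})$ tail of the metric, whence $\xi_{ABCD}=o_\infty(r^{-3/2})$; the analogous, weaker estimates yield $\xi-(\pm\sqrt{2})=o_\infty(r^{-1/2})$ and $\xi_{AB}=o_\infty(r^{-1/2})$. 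Since the metric and extrinsic-curvature asymptotics assumed here coincide, at the relevant orders, with those of the exact Kerr-Newman end treated in the previous subsection, this computation is word-for-word the vacuum estimate of \cite{BaeVal10b}, and no genuinely new phenomenon appears. I expect the main obstacle to be purely organisational: tracking the precise order of each term through the weighted $o_\infty$ calculus and confirming that the derivatives of the error terms also decay at the stated rates, so that the estimates hold in the strong $o_\infty$ sense rather than merely pointwise.
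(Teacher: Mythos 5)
Your proposal is correct and follows essentially the same route as the paper, whose proof consists of the single remark that the argument is that of Theorem 17 in \cite{BaeVal10b}: one prescribes the explicit two-term ansatz in the asymptotic end and verifies \eqref{AnsatzBoundaryConditions1}--\eqref{AnsatzBoundaryConditions3} directly, the point being that the metric and extrinsic-curvature asymptotics \eqref{AsymptoticEuclidean1}--\eqref{AsymptoticEuclidea2} coincide at the relevant orders with the vacuum case while the Maxwell field never enters the Sen derivative of $\kappa_{AB}$. Your sketch in fact supplies more of the mechanism (the flat Killing-spinor property of $x_{\bmA\bmB}$ and the role of the $m/r$ correction in cancelling the $O(r^{-1})$ Christoffel contribution to $\xi_{ABCD}$) than the paper itself records.
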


\begin{proof}
The proof follows the same structure of Theorem 17 in \cite{BaeVal10b}
where the vacuum case is considered. 
\end{proof}

\begin{remark}
\label{Remark:AnsatzSolution}
{\em The spinors obtained from the previous lemma can be cut-off so that
they are zero outside the asymptotic end. One can then add them to
yield a real spinor $\mathring{\kappa}_{AB}$ on the whole of
$\mathcal{S}$ such that 
\[
\mathcal{D}_{(AB} \mathring{\kappa}_{CD)} \in H^\infty_{-3/2}
\]
and asymptotic behaviour given by \eqref{AsymptoticFormKillingSpinor}
at each end.} 
\end{remark}

In the analysis of the solvability of the approximate Killing spinor
equation it is crucial that there exist no nontrivial
spatial Killing spinor that goes to zero at infinity. More precisely,
one has the following:

\begin{lemma}
\label{SKSEinjectivity}
Let $\nu_{AB}\in H^{\infty}_{-1/2}$ be a solution to
$\mathcal{D}_{(AB}\nu_{CD)}=0$ on an electrovacuum initial data set
satisfying the asymptotic conditions \eqref{AsymptoticEuclidean1}-\eqref{AsymptoticEuclidean4}. Then $\nu_{AB}=0$ on $\mathcal{S}$.
\end{lemma}
\begin{proof}
From Lemma \ref{Lemma:DerivativesKappa} one can write
$\mathcal{D}_{AB}\mathcal{D}_{CD}\mathcal{D}_{EF}\kappa_{GH} $ as a
linear combination of lower order derivatives, with smooth
coefficients. Direct inspection shows that the coefficients in this
linear combination have the decay conditions to make use of Theorem 20
from \cite{BaeVal10b} with $m=2$. It then follows that $\nu_{AB}$ must
vanish on $\mathcal{S}$.
\end{proof}

\subsection{Solving the approximate Killing spinor equation}

In the reminder of this section we will consider solutions to the
approximate Killing spinor equation of the form:
\begin{equation}
\kappa_{AB} = \mathring{\kappa}_{AB} + \theta_{AB}, \qquad
\theta_{AB}\in H^\infty_{-1/2}
\label{AnsatzApproximateKillingSpinor}
\end{equation}
with $\mathring{\kappa}_{AB}$ the spinor discussed in Remark
\ref{Remark:AnsatzSolution}.  For this Ansatz one has the following:

\begin{theorem}
\label{ExistenceSolutionsApproximateKillingSpinorEquation}
Given an electrovacuum asymptotically Euclidean initial data set
$(\mathcal{S},\bmh,\bmK,\bmE,\bmB)$ satisfying the asymptotic
conditions \eqref{AsymptoticEuclidean1}-\eqref{AsymptoticEuclidean4}
there exists a smooth unique solution to the approximate Killing
spinor equation \eqref{ApproximateKillingSpinorEquation} of the form
\eqref{AnsatzApproximateKillingSpinor}. 
\end{theorem}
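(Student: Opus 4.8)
The plan is to recast the statement as an existence-and-uniqueness question for a linear elliptic operator on weighted Sobolev spaces, and to settle it using the self-adjointness and ellipticity of $L$ established above together with the injectivity result of Lemma \ref{SKSEinjectivity}. Writing the Ansatz \eqref{AnsatzApproximateKillingSpinor} as $\kappa_{AB}=\mathring{\kappa}_{AB}+\theta_{AB}$ and using the linearity of $L$, the equation $L(\bmkappa)=0$ is equivalent to
\[
L(\bm{\theta}) = -L(\mathring{\bmkappa}),
\]
so it suffices to solve this inhomogeneous equation for $\theta_{AB}\in H^\infty_{-1/2}$. First I would locate the source term: by Remark \ref{Remark:AnsatzSolution} one has $\Phi(\mathring{\bmkappa})=\mathcal{D}_{(AB}\mathring{\kappa}_{CD)}\in H^\infty_{-3/2}$, and since $\Phi^*$ involves a single Sen derivative it lowers the weight by one unit, giving $L(\mathring{\bmkappa})=\Phi^*(\Phi(\mathring{\bmkappa}))\in H^\infty_{-5/2}$.

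Next I would invoke the Fredholm theory for elliptic operators on asymptotically Euclidean manifolds in weighted Sobolev spaces, as developed in Section 6.2 of \cite{BaeVal10b} following \cite{Bar86}. The principal part of $L$ coincides asymptotically with the flat Laplacian ---because $\mathcal{D}_{AB}$ tends to the flat derivative under the conditions \eqref{AsymptoticEuclidean1}-\eqref{AsymptoticEuclidean4} and the terms of $L$ involving $\Omega_{ABCD}$ decay--- so, regarded as a map $L:H^2_{-1/2}\to H^0_{-5/2}$, it is Fredholm provided $\delta=-1/2$ is non-exceptional. Moreover, $\delta=-1/2$ is precisely the self-adjoint weight in dimension three: the $L^2$-dual of the range weight $-5/2$ is $-(-5/2)-3=-1/2$, so that the cokernel of $L$ is isomorphic to the kernel of its formal adjoint acting at weight $-1/2$. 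Since $L$ is formally self-adjoint, both the kernel and the cokernel of $L$ are thereby identified with the space of solutions $\nu_{AB}\in H^\infty_{-1/2}$ of $L(\bmnu)=0$, the regularity being upgraded by ellipticity.

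The decisive step is to show that this common kernel is trivial. Given $\nu_{AB}\in H^\infty_{-1/2}$ with $L(\bmnu)=0$, I would apply the integration-by-parts identity \eqref{IntegrationByParts} with $\zeta_{ABCD}=\Phi(\bmnu)_{ABCD}$. The boundary integrand behaves like $\nu\cdot\Phi(\bmnu)\cdot r^2=o(r^{-1/2})\,o(r^{-3/2})\,r^2=o(1)$, so the boundary contribution vanishes in the limit and one obtains
\[
0=\langle \bmnu, L(\bmnu)\rangle=\langle \Phi(\bmnu),\Phi(\bmnu)\rangle=\|\Phi(\bmnu)\|^2.
\]
This forces $\mathcal{D}_{(AB}\nu_{CD)}=0$, whence $\nu_{AB}=0$ by Lemma \ref{SKSEinjectivity}. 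Consequently $L:H^2_{-1/2}\to H^0_{-5/2}$ is an isomorphism, the equation $L(\bm{\theta})=-L(\mathring{\bmkappa})$ has a unique solution $\theta_{AB}\in H^2_{-1/2}$, and a standard bootstrap using the ellipticity of $L$ and the smoothness of its coefficients promotes this to $\theta_{AB}\in H^\infty_{-1/2}$ and shows $\kappa_{AB}$ to be smooth. Uniqueness within the Ansatz is immediate, since the difference of two admissible solutions lies in $H^\infty_{-1/2}\cap\ker L=\{0\}$.

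I expect the main obstacle to lie in the careful bookkeeping of the weighted Fredholm theory ---in particular, verifying that $-1/2$ is a non-exceptional weight, that it coincides with the self-adjoint weight, and that the lower-order coefficients of $L$ decay rapidly enough for the abstract existence theorems of \cite{Bar86,BaeVal10b} to apply--- rather than in the triviality of the kernel, which follows cleanly from Lemma \ref{SKSEinjectivity}.
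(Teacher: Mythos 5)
Your proposal is correct and follows essentially the same route as the paper's proof: reduce to $L(\bm\theta)=-L(\mathring{\bmkappa})$ with source in $H^\infty_{-5/2}$, apply the weighted Fredholm alternative, kill the kernel/cokernel by the integration-by-parts identity \eqref{IntegrationByParts} plus Lemma \ref{SKSEinjectivity}, and upgrade regularity by ellipticity. The only difference is that you spell out the weight bookkeeping (the duality $-(-5/2)-3=-1/2$ and the non-exceptionality of $\delta=-1/2$) that the paper delegates to the cited results of B\"ackdahl--Valiente Kroon and Bartnik.
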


\begin{proof}
The proof is analogous to that of Theorem 25 in \cite{BaeVal10b} and
is presented for completeness as this is the main result of this
article. 

Substitution of the Ansatz \eqref{AnsatzApproximateKillingSpinor} into
equation \eqref{ApproximateKillingSpinorEquation} yields the equation
\begin{equation}
L(\theta_{AB}) = - L(\mathring{\kappa}_{AB})
\label{EquationTheta}
\end{equation}
for the spinor $\theta_{AB}$. Due to elliptic regularity, any solution
to the above equation of class $H^2_{-1/2}$ is, in fact, a solution of class
$H^\infty_{-1/2}$. Thus, if a solution $\theta_{AB}$ exists then it
must be smooth. By construction ---see Remark
\ref{Remark:AnsatzSolution}--- it follows that $\mathcal{D}_{(AB}
\kappa_{CD)} \in H^\infty_{-3/2}$ so that 
\[
F_{AB} \equiv - L(\mathring{\kappa}_{AB}) \in H^\infty_{-5/2}.
\]   
In order to discuss the existence of solutions we make use of the
\emph{Fredholm alternative} for weighted Sobolev spaces. In the
particular case of equation \eqref{EquationTheta} there exists a
unique solution of class $H^{-1/2}$ if 
\[
\int_{\mathcal{S}} F_{AB} \widehat{\nu}^{AB} \mbox{d}\mu =0 
\]
for all $\nu_{AB}\in H^2_{-1/2}$ satisfying 
\[
L^*(\nu_{CD}) = L(\nu_{CD})=0.
\]
It will now be shown that a spinor $\nu_{AB}$ satisfying the above
must be trivial. Using the identity \eqref{IntegrationByParts} with 
$\zeta_{ABCD} = \mathcal{D}_{(AB} \nu_{CD)}$ and assuming that
$L(\nu_{CD})=0$ one obtains
\[
\int_{\mathcal{S}} \mathcal{D}^{AB} \nu^{CD}
\widehat{\mathcal{D}_{(AB} \nu_{CD)}}\mbox{d}\mu =
\int_{\partial\mathcal{S}_\infty} n^{AB} \nu^{CD} \widehat{\mathcal{D}_{(AB}\nu_{CD)}}\mbox{d}S
\]
where $\partial\mathcal{S}_\infty$ denotes the sphere at
infinity. Now, using that by assumption $\nu_{AB} \in H^2_{-1/2}$, it
follows that $\mathcal{D}_{(AB}\nu_{CD)} \in H^\infty_{-3/2}$ and that 
\[
n^{AB} \nu^{CD} \widehat{\mathcal{D}_{(AB}\nu_{CD)}}=o(r^{-2}). 
\]
The integration of the latter over a finite sphere is of type
$o(1)$. Accordingly, the integral over the sphere at infinity
$\partial\mathcal{S}_\infty$ vanishes and, moreover, 
\[
\int_{\mathcal{S}} \mathcal{D}^{AB} \nu^{CD}
\widehat{\mathcal{D}_{(AB} \nu_{CD)}}\mbox{d}\mu = 0.
\] 
Thus, one concludes that
\[
\mathcal{D}_{(AB} \nu_{CD)} =0 \qquad \mbox{over} \qquad \mathcal{S}
\]
so that $\nu_{AB}$ is a Killing spinor candidate. Now, Lemma \ref{SKSEinjectivity} shows that there are no non-trivial Killing
spinor candidates that go to zero at infinity. 

It follows from the discussion in the previous paragraph that the
kernel of the approximate Killing spinor operator is trivial and that
the Fredholm alternative imposes no obstruction to the existence of
solutions to \eqref{EquationTheta}. Thus, one obtains a unique
solution to the approximate Killing spinor equation with the
prescribed asymptotic behaviour at infinity.
\end{proof}

\section{The geometric invariant}
\label{Section:Invariant}

In this section we make use of the approximate Killing spinor
constructed in the previous section to construct
an invariant measuring the deviation of a given electrovacuum initial
data set satisfying the asymptotic conditions
\eqref{AsymptoticEuclidean1}-\eqref{AsymptoticEuclidean4} from initial
data for the Kerr-Newman spacetime.

\medskip
In the following let $\kappa_{AB}$ denote the approximate Killing
spinor obtained from Theorem
\ref{ExistenceSolutionsApproximateKillingSpinorEquation}, and let
\begin{subequations}
\begin{align}
J\equiv &\int_{\mathcal{S}} \mathcal{D}_{(AB} \kappa_{CD)}
  \widehat{\mathcal{D}^{AB}\kappa^{CD}}\mbox{d}\mu, \label{Invariant1}\\
I_{1}\equiv&\int_{\mathcal{S}}\Psi_{(ABC}{}^{F}\kappa_{D)F}\widehat{\Psi^{ABCG}\kappa^{D}{}_{G}} \mbox{d}\mu,  \label{Invariant2}\\
I_{2}\equiv&\int_{\mathcal{S}} \Theta_{AB}\widehat{\Theta^{AB}}\mbox{d}\mu,  \label{Invariant3}\\
I_{3}\equiv&\int_{\mathcal{S}}\mathcal{D}_{AB} \mathfrak{Q}^2
             \widehat{\mathcal{D}^{AB}\mathfrak{Q}^2}\mbox{d}\mu,  \label{Invariant4}
\end{align}
\end{subequations}
where following the notation of Section \ref{Section:KillingSpinorDataEquation} one has
\[
\Theta_{AB} \equiv 2 \kappa_{(A}{}^Q \phi_{B)Q}, \qquad \mathfrak{Q}^{2}\equiv \left(\kappa_{AB}\kappa^{AB} \right)^{2}
\phi_{AB}\phi^{AB}.
\]

The above integrals are well-defined. More precisely, one has that:

\begin{lemma}
Given the approximate Killing spinor $\kappa_{AB}$ obtained from
Theorem \ref{ExistenceSolutionsApproximateKillingSpinorEquation}, one
has that
\[
J,\, I_1,\, I_2,\, I_3 < \infty.
\]
\end{lemma}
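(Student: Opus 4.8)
The plan is to reduce each of the four statements to the claim that a single spinorial field lies in the standard $L^2$ space over $\mathcal{S}$, since each integrand is a pointwise norm-squared of such a field. Recalling from Section \ref{Section:AsymptoticallyEuclideanManifolds} that $\delta=-3/2$ is precisely the weight for which $\parallel\cdot\parallel_{-3/2}$ is the ordinary $L^2$ norm, it suffices in each case to exhibit the relevant field as an element of a weighted Sobolev space $H^\infty_\delta$ with $\delta\le -3/2$, or to estimate it directly against an $H^\infty_{-5/2}$ quantity. The three inputs I would use throughout are: the construction $\kappa_{AB}=\mathring{\kappa}_{AB}+\theta_{AB}$ with $\theta_{AB}\in H^\infty_{-1/2}$ from Theorem \ref{ExistenceSolutionsApproximateKillingSpinorEquation}; the asymptotics \eqref{AsymptoticEuclidean1}-\eqref{AsymptoticEuclidean4} of Assumption \ref{Assumption:AsymptoticConditions}; and the Gauss-Codazzi expressions for $E_{ABCD}$ and $B_{ABCD}$, which force $\Psi_{ABCD}=o_\infty(r^{-3})$ on $\mathcal{S}$, the $2m/r$ part of the metric producing the leading piece of the Ricci spinor while the extrinsic-curvature and Maxwell terms contribute at $o_\infty(r^{-5/2})$ and $O(r^{-4})$.

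The integrals $J$ and $I_1$ are then direct. For $J$, Remark \ref{Remark:AnsatzSolution} gives $\mathcal{D}_{(AB}\mathring{\kappa}_{CD)}\in H^\infty_{-3/2}$, while $\theta_{AB}\in H^\infty_{-1/2}$ implies $\mathcal{D}_{(AB}\theta_{CD)}\in H^\infty_{-3/2}$, so $\mathcal{D}_{(AB}\kappa_{CD)}\in H^\infty_{-3/2}$, which is exactly $L^2$; hence $J<\infty$. For $I_1$, combining $\Psi_{ABCD}=o_\infty(r^{-3})$ with the linear growth $\kappa_{AB}=O(r)$ coming from \eqref{AsymptoticFormKillingSpinor} gives $\Psi_{(ABC}{}^{F}\kappa_{D)F}=o_\infty(r^{-2})$, whose square decays faster than $r^{-4}$ and is integrable against the volume element.

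The genuine obstacle is $I_2$. A naive count gives only $\Theta_{AB}=2\kappa_{(A}{}^{Q}\phi_{B)Q}=O(r)\cdot O(r^{-2})=O(r^{-1})$, whose square is not integrable in three dimensions. The resolution I would use is the \emph{asymptotic matter alignment}: in the exact Kerr-Newman spacetime $\Theta_{AB}$ vanishes identically, and by construction the leading part of $\kappa_{AB}$ and the leading Coulombic part of $\phi_{AB}$ are both proportional to the radial spinor $x_{\bmA\bmB}$, so their symmetric contraction $x_{(A}{}^{Q}x_{B)Q}$ vanishes. Thus the leading $O(r^{-1})$ contribution to $\Theta_{AB}$ cancels, and every surviving term pairs a factor of at most linear growth with a correction lying in $H^\infty_{-5/2}$ (from $\phi_{AB}$) or $H^\infty_{-1/2}$ (from $\theta_{AB}$ and the subleading part of $\mathring{\kappa}_{AB}$). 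Estimating these products directly, bounding $\kappa_{AB}=O(r)$ against a quantity $\sigma\,\phi^{(1)}_{AB}\in L^2$ when $\phi^{(1)}_{AB}\in H^0_{-5/2}$, shows $\Theta_{AB}\in L^2$ and hence $I_2<\infty$. Effecting this cancellation cleanly, and tracking the weights of products of growing and decaying factors, is the step requiring the most care.

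Finally, for $I_3$ I would exploit the same exact-solution structure at the level of $\mathfrak{Q}^2=(\kappa_{AB}\kappa^{AB})^2\phi_{AB}\phi^{AB}$. In Kerr-Newman one has $\kappa_{AB}\kappa^{AB}\propto(r-\mathrm{i}a\cos\theta)^2$ and $\phi_{AB}\phi^{AB}\propto q^2(r-\mathrm{i}a\cos\theta)^{-4}$, so the growing and decaying factors cancel and $\mathfrak{Q}^2$ equals a constant proportional to $q^2$. For data satisfying \eqref{AsymptoticEuclidean1}-\eqref{AsymptoticEuclidean4} the deviations from this value enter only through $\theta_{AB}\in H^\infty_{-1/2}$ and the $o_\infty(r^{-5/2})$ Maxwell corrections, the latter dominating, so that $\mathfrak{Q}^2-\text{const}\in H^\infty_{-1/2}$. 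Consequently $\mathcal{D}_{AB}\mathfrak{Q}^2=\mathcal{D}_{AB}\big(\mathfrak{Q}^2-\text{const}\big)\in H^\infty_{-3/2}=L^2$, giving $I_3<\infty$. Collecting the four bounds completes the argument.
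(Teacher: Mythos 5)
Your proposal is correct and follows essentially the same route as the paper: $J$ from $\mathcal{D}_{(AB}\kappa_{CD)}\in H^\infty_{-3/2}=L^2$, $I_1$ from pairing the $o_\infty(r^{-3})$ decay of $\Psi_{ABCD}$ with the linear growth of $\kappa_{AB}$, and $I_2$, $I_3$ via exactly the cancellation the paper exploits, namely that the leading parts of $\kappa_{AB}$ and $\phi_{AB}$ are both proportional to $x_{\bmA\bmB}$ so that $x_{(\bmA}{}^{\bmQ}x_{\bmB)\bmQ}=0$ removes the non-integrable $O(r^{-1})$ term in $\Theta_{AB}$ and renders $\mathfrak{Q}^2$ asymptotically constant. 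The only cosmetic difference is that for $I_1$ you argue by pointwise decay where the paper invokes the multiplication properties of weighted Sobolev spaces.
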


\begin{proof}
By construction one has that the spinor $\kappa_{AB}$ obtained from
Theorem \ref{ExistenceSolutionsApproximateKillingSpinorEquation}
satisfies $\mathcal{D}_{(AB} \kappa_{CD)}\in H^0_{-3/2}$. It follows then
from the definition of the weighted Sobolev norm that
\[
\parallel \nabla_{(AB} \kappa_{CD)} \parallel_{H^0_{-3/2}} = \parallel
\nabla_{(AB} \kappa_{CD)} \parallel_{L^2} = J <\infty.
\]
To verify the the boundedness of $I_1$ one notices that by assumption
$\Psi_{ABCD} \in H^\infty_{-3+\varepsilon}$, $\kappa_{AB} \in
H^\infty_{1+\varepsilon}$ it follows by the multiplication properties
of weighted Sobolev spaces (see e.g. Lemma 14 in \cite{BaeVal10b})
that
\[
\Psi_{(ABC}{}^F\kappa_{D)F} \in H^\infty_{-3/2},
\]
so that, in fact, $I_1<\infty$.

We now look at the boundedness of $I_{2}$. By construction and due to the asymptotic conditions \eqref{AsymptoticEuclidean1}-\eqref{AsymptoticEuclidean4}, one can choose asymptotically Cartesian coordinates and orthonormal frames on the asymptotic ends such that the approximate Killing spinor and Maxwell spinor satisfy
\begin{align*}
\kappa_{\bmA\bmB} &= \mp\frac{\sqrt{2}}{3}x_{\bmA\bmB} + o_{\infty}\left(r^{1/2}\right) \\
\phi_{\bmA\bmB} &= \frac{q}{\sqrt{2}r^3}x_{\bmA\bmB} + o_{\infty}\left(r^{-5/2}\right)
\end{align*}
Therefore,
\begin{align*}
\Theta_{\bmA\bmB}&=\kappa_{(\bmA}{}^{\bmQ}\phi_{\bmB)\bmQ} \\
&= \mp\frac{q}{3r^3}x_{(\bmA}{}^{\bmQ}x_{\bmB)\bmQ} + o_{\infty}\left(r^{-3/2}\right) \\
&=o_{\infty}\left(r^{-3/2}\right)
\end{align*}
and so $\Theta_{AB}\in H_{-3/2}^{\infty}$, and $I_{2}<\infty$.

Finally, to show the boundedness of $I_{3}$, note that in the asymptotically Cartesian coordinates and orthonormal frames used above,we have
\begin{align*}
\left(\kappa_{AB}\kappa^{AB}\right)^2 &=\frac{4}{81}r^4 + o_{\infty}\left(r^{-7/2}\right) \\
\phi_{AB}\phi^{AB} &= \frac{q^2}{2r^4} + o_{\infty}\left(r^{-9/2}\right)
\end{align*}
and so the quantity $\mathfrak{Q}$ satisfies
\begin{equation*}
\mathfrak{Q}^{2}=\frac{2}{81}q^2 + o_{\infty}\left(r^{-1/2}\right)
\end{equation*}
Taking a derivative, one obtains
\begin{equation*}
\mathcal{D}_{AB}\mathfrak{Q}^2 = o_{\infty}\left(r^{-3/2}\right)
\end{equation*}
and therefore $\mathcal{D}_{AB}\mathfrak{Q}^2\in H_{-3/2}^{\infty}$ and $I_{3}<\infty$.
\end{proof}

\medskip
The integrals $J$, $I_1$, $I_2$ and $I_3$ are then used to define the
following geometric invariant:
\begin{equation}
I=J+I_{1}+I_{2}+I_{3}.
\label{GeometryInvariant}
\end{equation}

One has the following result combine the whole analysis of this
article:

\begin{theorem}
\label{MainTheorem:CharacterisationKerrNewmanData}
Let $(\mathcal{S},\bmh,\bmK,\bmE,\bmB)$ denote a smooth asymptotically
Euclidean initial data set for the Einstein-Maxwell equations
satisfying the on each of its two asymptotic ends the decay conditions
\eqref{AsymptoticEuclidean1}-\eqref{AsymptoticEuclidean4} with
non-vanishing mass and electromagnetic charge. Let $I$ be
the invariant defined by equation \eqref{GeometryInvariant} where
$\kappa_{AB}$ is the the unique solution to equation
\eqref{ApproximateKillingSpinorEquation} with asymptotic behaviour at
each end given by \eqref{AsymptoticFormKillingSpinor}. The invariant
$I$ vanishes if and only if $(\mathcal{S},\bmh,\bmK,\bmE,\bmB)$ is
locally an initial data set for the Kerr-Newman spacetime.
\end{theorem}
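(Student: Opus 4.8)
The plan is to build the entire argument on the elementary observation that each of the four integrands defining $I$ is a pointwise non-negative quantity, so that the vanishing of the single scalar $I$ is equivalent to the simultaneous pointwise vanishing of four spinorial expressions on $\mathcal{S}$---and these four expressions are exactly the hypotheses of Theorem \ref{Theorem:KillingSpinorData}. First I would recall the positivity properties $\nu_{AB}\widehat{\nu}^{AB}\geq 0$ and $\zeta_{ABCD}\widehat{\zeta}^{ABCD}\geq 0$, valid irrespective of reality, established earlier in the excerpt. Applied to the integrands of $J$, $I_1$, $I_2$, $I_3$ (using that contraction against a totally symmetric spinor only sees the symmetric part), each of the four integrals is non-negative. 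Since $I$ is their sum, $I=0$ forces $J=I_1=I_2=I_3=0$, and by positivity of each integrand this is in turn equivalent to the pointwise vanishing on $\mathcal{S}$ of $\mathcal{D}_{(AB}\kappa_{CD)}=\xi_{ABCD}$, of $\Psi_{(ABC}{}^{F}\kappa_{D)F}$, of $\Theta_{AB}=2\kappa_{(A}{}^{Q}\phi_{B)Q}$, and of $\mathcal{D}_{AB}\mathfrak{Q}^{2}$; the last is equivalent to $\mathfrak{Q}^{2}$ being constant since $\mathfrak{Q}^{2}$ is a scalar and $\mathcal{D}_{AB}$ acts on scalars as the spatial gradient.

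For the forward implication I would observe that these four conditions are precisely \eqref{ReducedKillingSpinorDataCondition1}--\eqref{ReducedKillingSpinorDataCondition4}, evaluated on the approximate Killing spinor furnished by Theorem \ref{ExistenceSolutionsApproximateKillingSpinorEquation}. Theorem \ref{Theorem:KillingSpinorData} then guarantees that the development admits a Killing spinor, while $\Theta_{AB}=0$ supplies the matter alignment condition \eqref{MatterAlignmentCondition}. The asymptotic conditions \eqref{AsymptoticEuclidean1}--\eqref{AsymptoticEuclidean4}, together with the non-vanishing mass and charge in the hypothesis, ensure that the development carries a stationary asymptotically flat end generated by this Killing spinor with non-zero Komar mass and total electromagnetic charge. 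Theorem \ref{Theorem:KerrNewmanCharacterisation} then delivers local isometry to the Kerr-Newman spacetime, so $(\mathcal{S},\bmh,\bmK,\bmE,\bmB)$ is locally Kerr-Newman data.

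For the converse I would start from locally Kerr-Newman data, which carries a genuine Killing spinor $\kappa_{AB}$. Its restriction to $\mathcal{S}$ satisfies $\mathcal{D}_{(AB}\kappa_{CD)}=0$, hence $L(\kappa_{AB})=\Phi^{*}\Phi(\kappa_{AB})=0$, so it solves the approximate Killing spinor equation. The asymptotic analysis of the exact Kerr-Newman slice shows this restriction realises the prescribed leading behaviour \eqref{AsymptoticFormKillingSpinor}, so by the uniqueness statement of Theorem \ref{ExistenceSolutionsApproximateKillingSpinorEquation} it coincides with the approximate Killing spinor used to define $I$. For this genuine Killing spinor one then reads off $J=0$ (the spatial Killing equation holds), $I_{1}=0$ (the integrability condition \eqref{IntegrabilityCondition}), $I_{2}=0$ (matter alignment), and $I_{3}=0$, the latter because $\mathfrak{Q}^{2}$ is constant---directly computable from the Kerr-Newman expressions $\varkappa=\tfrac{2}{3}(r-\mathrm{i}a\cos\theta)$ and $\varphi=q/(r-\mathrm{i}a\cos\theta)^{2}$, whose product $\varkappa^{4}\varphi^{2}$ is constant. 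Hence $I=0$.

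The main obstacle is this converse identification. One must verify with care that the Killing spinor of exact Kerr-Newman data, when restricted to $\mathcal{S}$ and expressed in the $\tau$-adapted frame rather than the principal-spinor frame, genuinely reproduces both the normalisation and the precise leading asymptotics \eqref{AsymptoticFormKillingSpinor} singled out in Theorem \ref{ExistenceSolutionsApproximateKillingSpinorEquation}; this is where the tedious Lorentz transformation between the two frames on the stationary end enters. Only once this matching is confirmed does uniqueness force the actual and approximate Killing spinors to agree, and only then does the vanishing of all four integrands---and hence of $I$---follow.
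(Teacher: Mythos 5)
Your overall strategy coincides with the paper's: positivity of the four integrands reduces $I=0$ to the pointwise validity of the Killing spinor data equations \eqref{ReducedKillingSpinorDataCondition1}--\eqref{ReducedKillingSpinorDataCondition4}, Theorem \ref{Theorem:KillingSpinorData} then yields a Killing spinor on a slab of the development, and Theorem \ref{Theorem:KerrNewmanCharacterisation} is invoked to conclude. Your treatment of the converse (the restriction of the exact Kerr--Newman Killing spinor solves $L(\bmkappa)=0$ with the asymptotics \eqref{AsymptoticFormKillingSpinor}, hence coincides with the approximate Killing spinor by the uniqueness in Theorem \ref{ExistenceSolutionsApproximateKillingSpinorEquation}, hence all four integrands vanish) is sound, and is in fact spelled out more explicitly than in the paper, which leaves that direction implicit.

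There is, however, a genuine gap in your forward implication, located at the sentence asserting that the asymptotic conditions ``ensure that the development carries a stationary asymptotically flat end generated by this Killing spinor.'' The hypothesis of Theorem \ref{Theorem:KerrNewmanCharacterisation} (via Definition \ref{Definition:KillingGeneratedEnd}) requires $\xi_{AA'}\equiv\nabla^{B}{}_{A'}\kappa_{AB}$ to be the spinorial counterpart of the \emph{real} Killing vector generating the time translation of a stationary end. The matter alignment condition $\Theta_{AB}=0$ only guarantees that $\xi_{AA'}$ corresponds to a possibly \emph{complex} Killing vector; nothing in your argument rules out a non-trivial imaginary part, and without doing so the stationary end in the sense of Definition \ref{Definition:StationaryEnd} need not exist at all. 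The paper closes precisely this gap: the boundary conditions \eqref{AnsatzBoundaryConditions1}--\eqref{AnsatzBoundaryConditions3} give $\xi-\hat{\xi}=o_\infty(r^{-1/2})$ and $\xi_{AB}+\hat{\xi}_{AB}=o_\infty(r^{-1/2})$, so the imaginary part of $\xi_{AA'}$ is a Killing vector of the electrovacuum development whose Killing initial data decays at infinity; by the rigidity results of \cite{BeiChr96,ChrOMu81} such a Killing vector must vanish identically, whence $\xi_{AA'}$ is Hermitian and, by construction, tends to a time translation at each end, with non-zero Komar mass and charge by hypothesis. You need to supply this step (or an equivalent argument) before Theorem \ref{Theorem:KerrNewmanCharacterisation} can legitimately be applied.
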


\begin{proof}
The proof follows the same strategy of Theorem 28 in \cite{BaeVal10b}. It
follows from our assumptions that if $I=0$ then the electrovacuum Killing spinor
data equations
\eqref{ReducedKillingSpinorDataCondition1}-\eqref{ReducedKillingSpinorDataCondition4}
are
satisfied on the whole of the hypersurface $\mathcal{S}$. Thus, from
Theorem \ref{Theorem:KillingSpinorData} the
development of the electrovacuum initial data
$(\mathcal{S},\bmh,\bmK,\bmE,\bmB)$ will have, at least on a slab a
Killing spinor. 

Now, the idea is to make use of Theorem
\ref{Theorem:KerrNewmanCharacterisation} to conclude that the
development will be the Kerr-Newman spacetime. For this, one has to conclude that
the spinor $\xi_{AA'}\equiv \nabla^Q{}_A \kappa_{BQ}$ is Hermitian so
that it corresponds to the spinorial counterpart of a real Killing
vector. By assumption, it follows from the expansions
\eqref{AnsatzBoundaryConditions1}-\eqref{AnsatzBoundaryConditions3} that
\[
\xi-\hat{\xi} = o_\infty (r^{-1/2}), \qquad \xi_{AB} + \hat{\xi}_{AB}
= o_\infty(r^{-1/2}).
\]
Together, the last two expressions correspond to the Killing initial
data for the imaginary part of $\xi_{AA'}$ ---thus, the imaginary part
of $\xi_{AA'}$ goes to zero at infinity. It is well know that for
electrovacuum spacetimes there exist no non-trivial Killing vectors of
this type \cite{BeiChr96,ChrOMu81}. Thus, $\xi_{AA'}$ is the spinorial
counterpart of a real Killing
vector.  By construction, $\xi_{AA'}$ tends, asymptotically, to a time
translation at infinity. Accordingly, the development of the
electrovacuum initial data   $(\mathcal{S},\bmh,\bmK,\bmE,\bmB)$
contains two asymptotically stationary flat ends $\mathcal{M}_\infty$
and $\mathcal{M}_\infty'$ generated by the Killing spinor
$\kappa_{AB}$. As the Komar mass and the electromagnetic charge of
each end is, by assumption, non-zero, one concludes from Theorem
\ref{Theorem:KerrNewmanCharacterisation} that the development
$(\mathcal{M},\bmg,\bmF)$ is locally isometric to the Kerr-Newman spacetime.
\end{proof}

\section{Conclusions}

As a natural extension to the vacuum case described by Backdahl and Valiente Kroon \cite{BaeVal10b}, the formalism presented above for the electrovacuum case has similar applications and possible modifications. For example, the use of asymptotically hyperboloidal rather than asymptotically flat slices can now be analysed for the full electovacuum case, applying to the more general Kerr-Newman solution. Another interesting alternative to asymptotically flat slices would be to obtain necessary and sufficient conditions for the existence of a Killing spinor in the future development of a pair of intersecting null hypersurfaces. For instance, one could take a pair of event horizons intersecting at a bifurcation surface, and obtain a system of conditions intrinsic to the horizon that ensures the black hole interior is isometric to the Kerr-Newman solution.

A motivation for the above analysis was also to provide a way of tracking the deviation of initial data from exact Kerr-Newman data in numerical simulations. However, in order to be a useful tool, one would still have to show that the geometric invariant is suitably behaved under time evolution (such as monotonicity). As highlighted in \cite{BaeVal10b}, a major problem is that it is hard to find a evolution equation for $\kappa_{AB}$ such that the elliptic equations \eqref{ApproximateKillingSpinorEquation} is satisfied on each leaf in the foliation. If these issues can be resolved, then this formalism may be of some use in the study of non-linear perturbations of the Kerr-Newman solution and the black hole stability problem.

Finally, the ethos of this article is to show that the characterisation of black hole spacetimes using Killing spinors is still a fruitful avenue of investigation. In the future, we hope to show that this method can be used to investigate other open questions, such as the Penrose inequality and black hole uniqueness.


\end{document}